\setlist[itemize]{itemsep=4pt, nolistsep,topsep=3pt}
\setlist[enumerate]{itemsep=4pt, nolistsep,topsep=3pt}
\newtheorem{proposition}{Proposition}
\newtheorem{observation}{Observation}
\newtheorem{lemma}{Lemma}
\newtheorem{theorem}{Theorem}
\newtheorem{corollary}{Corollary}
\theoremstyle{remark}
\newtheorem*{remark}{Remark}
\newcommand{\kh}[1]{#1}
\newcommand{\leo}[1]{#1}
\newcommand{\leoapril}[1]{#1}
\newcommand{\leomay}[1]{#1}
\newcommand{\leojuly}[1]{#1}
\newcommand{\nh}[1]{#1}
\newcommand{\nhh}[1]{#1}
\newcommand{\vm}[1]{#1}
\newcommand{\remove}[1]{}
\newcommand{\add}[1]{#1}
\newcommand{\scr}{scr\xspace}
\newcommand{\cP}{\mathcal{P}}
\title{Characterizing semi-directed phylogenetic networks and their multi-rootable variants}
\author[1]{Niels~Holtgrefe}
\author[2]{Katharina~T.~Huber}
\author[1]{Leo~van~Iersel}
\author[1]{Mark~Jones}
\author[2]{Vincent~Moulton\footnote{Corresponding author: \texttt{v.moulton@uea.ac.uk}}}
\affil[1]{Delft Institute of Applied Mathematics, Delft University of Technology, Delft, The Netherlands 
}
\affil[2]{School of Computing Sciences, University of East Anglia, Norwich, United Kingdom 
}
\date{\small\today}
\begin{document}

\maketitle



\begin{abstract}
In evolutionary biology, phylogenetic networks are 
graphs that provide a flexible framework for representing complex evolutionary histories 
that involve reticulate evolutionary events. Recently phylogenetic studies have started to focus on a 
special class of such networks called {\em semi-directed networks}.
These graphs are defined as mixed graphs that can be obtained
by de-orienting some of the arcs in some {\em rooted phylogenetic network}, that is,
a directed acyclic graph whose leaves correspond to a collection
of species and that has a single source or root vertex. However, this
definition of semi-directed networks is implicit in nature since it is not clear when a mixed-graph 
enjoys this property or not. In this paper, we introduce novel, explicit mathematical characterizations of 
semi-directed networks, and also {\em multi-semi-directed networks}, that is 
mixed graphs that can be obtained from directed phylogenetic networks that may have more than one root.
In addition, through extending foundational tools from the theory of rooted networks 
into the semi-directed setting — such as cherry picking sequences, omnians, and path partitions — we characterize when a (multi-)semi-directed network can be obtained by de-orienting some 
rooted network that is contained in one of the well-known classes of 
tree-child, orchard, tree-based or forest-based networks. 
These results address structural aspects of (multi-)semi-directed networks and pave 
the way to improved theoretical and computational analyses of such networks, for example, within 
the development of algebraic evolutionary models that are based on such networks. 
\\
\\
\textbf{Keywords:} 
Mixed graph, semi-directed phylogenetic network, 
tree-based network, tree-child network, orchard network, path partitions
\end{abstract}

\section{Introduction}

Phylogenetic networks are a generalization of evolutionary trees
that are used to represent evolutionary histories of organisms such as plants and
viruses that can evolve in a non-tree-like fashion (see e.g. \cite{huson2010phylogenetic}).  
In particular, they permit the representation of {\em reticulate} events, 
in which, for example, two species cross with one another or transfer genes.
There are several classes
of phylogenetic networks,  but in this paper we will mainly focus
on rooted phylogenetic networks (see e.g. \cite{kong2022classes} for a recent overview), and some of
their recent generalizations. Essentially, a rooted
phylogenetic network is a directed acyclic graph, usually with a single source or {\em root}, whose
leaf set corresponds to some collection of species or taxa. For example, 
in Figure~\ref{fig:intro}, $D_1$ is a rooted phylogenetic network for the collection $\{x_1,\dots,x_8\}$
of species. Lately, {\em multi-rooted networks} have also become of interest\remove{(see e.g.~\cite{maxfield2025dissimilarity,scholz2019osf,soraggi2019general})}, which only differ from 
rooted phylogenetic networks in that they may have 
multiple roots (see e.g. $D_2$ in Figure~\ref{fig:intro}).
\add{Such networks can be used to  model ancestral relationships between 
populations \cite{soraggi2019general} and for representing the 
evolutionary history of distantly related groups of species that 
can still exchange genes \cite{huber2022forest,scholz2019osf}.}
In both types of networks, vertices with indegree greater 
than~$1$ are of special interest because they represent 
reticulate events. Hence, such vertices are commonly called \emph{reticulations} and 
their incoming arcs \emph{reticulation arcs}.

\begin{figure}[htb]
    \centerline{\includegraphics[scale=.8]{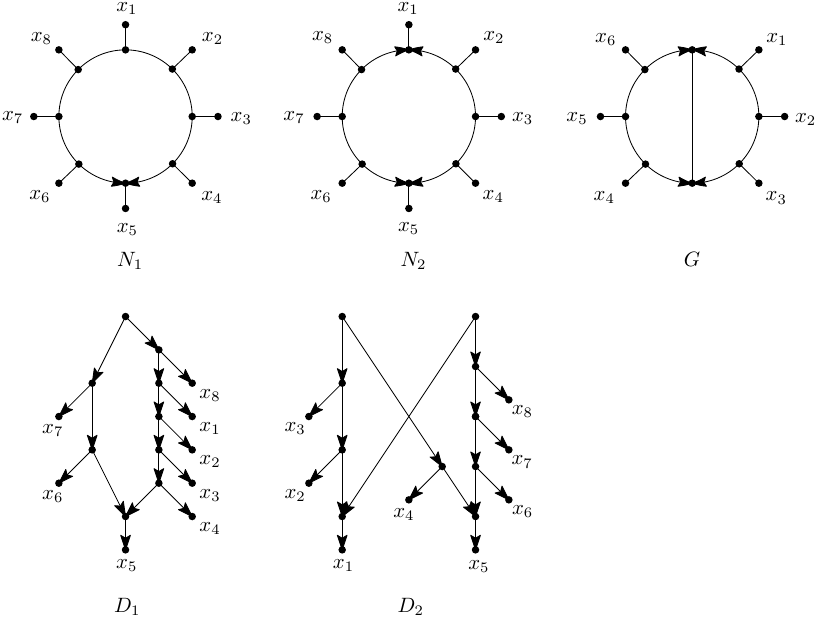}}
    \caption{\label{fig:intro} 
    Five mixed graphs~$N_1,N_2,G,D_1,D_2$. Mixed graph~$N_1$ is a semi-directed network since it is the semi-deorientation of, for example, the rooted network~$D_1$ illustrated below it. Mixed graph~$N_2$ is a multi-semi-directed network since it is the semi-deorientation of, for example, the multi-rooted network~$D_2$ illustrated below it. However, it can be shown that~$N_2$ is not a semi-directed network (since two roots are needed). 
    The mixed graph~$G$ is not semi-directed nor multi-semi-directed.
      }
\end{figure}

Recently, a new class of phylogenetic network\leojuly{s} called \emph{semi-directed networks} \cite{solis2016inferring}
has started to receive a lot of attention in the literature, both from 
a theoretical (e.g. \cite{banos2019identifying,gross2018distinguishing,gross2021distinguishing,linz2023exploring,xu2023identifiability,englander2025identifiability}) and 
applied (e.g. \cite{allman2024nanuq+,holtgrefe2025squirrel}) point of view. Basically speaking,
these are mixed graphs (i.e. graphs that can have a combination 
of undirected edges and \leojuly{directed} arcs), 
which can be obtained by partly
deorienting
a rooted 
phylogenetic network, that is, by replacing all arcs 
which are not reticulation arcs
with an edge, and then suppressing any resulting degree-2 vertex that arises from a root. 
We call such a partial deorientation, in which only reticulation arcs keep their direction and the root locations are lost, a \emph{semi-deorientation}.
For example, in Figure~\ref{fig:intro}, $N_1$ is a semi-directed network since it is  
the semi-deorientation of\leomay{, e.g.,} the rooted network~$D_1$. In this paper, we will 
also consider {\em multi-semi-directed networks},  
that is, mixed graphs which
are the semi-deorientation of some 
multi-rooted network. \add{Note that these networks were recently considered in~\cite{maxfield2025dissimilarity}
in the context of defining a dissimilarity measure between semi-directed networks.}
As an example, in Figure~\ref{fig:intro}, $N_2$ is multi-semi-directed since it 
is the semi-deorientation of~$D_2$. In general, in case
a (multi-)semi-directed network $N$ is a semi-deorientation of 
a (multi-)rooted network $D$, we shall call $D$ a {\em rooting} of $N$.

\remove{Semi-directed networks have gained importance since there can be difficulties in locating the root in a phylogenetic network
when inferring networks from biological data~\cite{solis2016inferring,gross2018distinguishing,banos2019identifying,gross2021distinguishing}.} \add{Locating the root in a phylogenetic network inferred from biological data is often problematic. In particular, without asymmetrical models of character change, root placement cannot be determined from the data alone and must rely on external assumptions~\cite{kinene2016rooting}. Due to more favorable identifiability results that circumvent the need for root placement~\cite{solis2016inferring,gross2018distinguishing,banos2019identifying,gross2021distinguishing}, semi-directed networks have gained importance instead.} However, their definition
is somewhat problematic, in that it is given implicitly rather than explicitly.
For instance, the mixed graph~$G$ in Figure~\ref{fig:intro} is neither semi-directed nor 
multi-semi-directed, but how can this be decided? Of course, one possibility would be
to develop some algorithm to make this decision, 
but for certain applications it could also be useful to have 
characterizations for when a mixed graph is semi-directed or multi-semi-directed.

In this paper, we shall provide some characterizations for semi-directed \nh{and} multi-semi-directed networks. \add{These results reveal combinatorial features of (multi-)semi-directed networks that help support further results in this paper and may also be helpful for future studies.} One of the main tools that we use to obtain our characterizations
is the concept of a {\em semi-directed cycle} in a mixed graph, that is, a cycle of arcs and
edges in the graph whose edges can be oriented so as to obtain a directed cycle.
Indeed, the exclusion of such cycles is a condition in one of our main characterizations
(see e.g. Theorem~\ref{thm:semi-directed-cycle}). As a corollary of our characterizations,
in case a mixed graph is a \mbox{(multi-)semi}-directed network,
we give a more general 
characterization to that given in \cite{maxfield2025dissimilarity}
for when a subset of vertices 
or subdivisions of edges or arcs can correspond to 
a choice of root(s) that leads to a rooting of the network.

We shall also explore the consequences of our results
for some special classes of rooted phylogenetic networks
(see e.g. \cite{kong2022classes} for a recent review of 
the different types of rooted networks).  In particular, given a 
fixed class of rooted or multi-rooted phylogenetic networks, it is of interest to 
characterize when 
a (multi-)semi-directed network \leojuly{has a rooting} contained in the given class.
This could be either strongly (all rootings are in the class) or 
weakly (there exists a rooting that is in the class). For example, a 
rooted \add{phylogenetic} network is \emph{tree-based} \cite{francis2015which} 
if it has a rooted spanning tree with the same leaf set as the network. 
Thus, a semi-directed network is \emph{weakly tree-based} if it 
has a rooting that is tree-based, 
and it is \emph{strongly tree-based} 
if all its rootings are tree-based (see e.g.  
Figure~\ref{fig:intro_tb}).

\begin{figure}[htb]
\centering
 \includegraphics{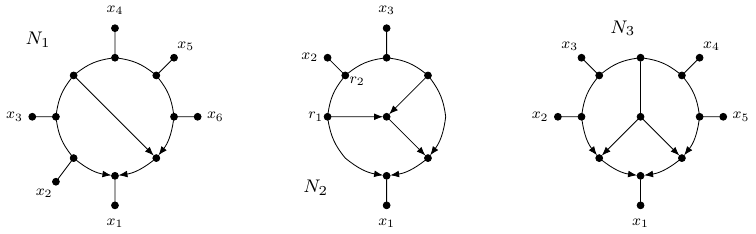}  
\caption{\label{fig:intro_tb} 
    Three semi-directed networks~$N_1$, $N_2$ and $N_3$. Semi-directed network~$N_1$ 
    is strongly tree-based since each of its rootings is tree-based. 
    Semi-directed network~$N_2$ is weakly tree-based, 
    but not strongly, since the rooting obtained by directing all edges away from the vertex $r_1$ is tree-based, but the rooting obtained in
    a similar way using vertex $r_2$ is not.
    Semi-directed network~$N_3$ is not weakly tree-based since it has no rooting that is tree-based.}
\end{figure}


Various concepts have been used to characterize when a rooted phylogenetic 
network is contained within a certain class.\remove{For example, cherry picking sequences 
can be used for tree-child networks \cite{cardona2009comparison} and orchards \cite{erdos2019class,janssen2021cherry},
omnians \cite{jetten2016nonbinary} for tree-based networks \cite{francis2015which}, 
and path partitions for forest-based networks \cite{huber2022forest}.} \add{For example, the classes of rooted tree-child networks~\cite{cardona2009comparison} and rooted orchard networks~\cite{erdos2019class,janssen2021cherry} can both be characterized using cherry picking sequences~\cite{erdos2019class,janssen2021cherry}, eventually leading to practical software for reconciling phylogenetic trees into networks~\cite{bernardini2023constructing,bernardini2024inferring}. On the other hand, rooted tree-based networks~\cite{francis2015which} have been characterized using both omnians~\cite{jetten2016nonbinary} and path partitions~\cite{francis2018new}, with the latter concept also providing a characterization for rooted forest-based networks~\cite{huber2022forest}.}
We shall generalize some of these concepts to 
(multi-)semi-directed networks and apply them to obtain characterizations for when 
such a network has a rooting
that is contained  within 
the classes of tree-child, orchard, tree-based or
forest-based networks (or when all of its rootings are contained within these classes).

\subsection{Previous work} 

There is a well-established body of literature devoted to the study of rooted phylogenetic 
networks, with numerous structural classes receiving extensive attention (see again \cite{kong2022classes}). 
In parallel, undirected phylogenetic networks have also been studied when no information of directionality is available (see e.g. \cite{gambette2012quartets}). 
This has led to work that focuses on the relationship between rooted and undirected networks. Recent contributions in this area include studies such as \cite{HUBER2024103480,maeda2023orienting, van2018unrooted, fischer2020tree,urata2024orientability,dackcker2024characterising, docker2025existence,dempsey2024wild,garvardt2023finding}, which explore 
how undirected networks can be oriented and when this  can be done to give rooted networks within specific classes.

In terms of (multi)-semi-directed networks, similar structural and graph-theoretical questions remain largely unexplored. 
Indeed, although seemingly related, the partial presence of directions within these networks presents fundamentally different problems. 
To date, the only substantial work in this direction is by Maxfield, Xu and Ané \cite{maxfield2025dissimilarity}. 
The main aim of their work was to introduce an efficiently computable dissimilarity metric between two tree-child 
multi-semi-directed networks. To do this they developed some results for directing mixed graphs
as multi-semi-directed networks. 
There are, however, several important differences between their framework and ours. 
Most notably, our definition of (multi-)semi-directed networks allows rootings on edges or arcs
and we enforce arcs to correspond to reticulation arcs, whereas they do not, 
and they permit parallel edges or arcs, whereas we do not.
For their type of networks they present two results that are related to ours: a 
characterization for when a subset of vertices in a mixed graph gives a rooting of the graph \cite[Proposition~8 and Remark~1]{maxfield2025dissimilarity},
and when a mixed graph can be rooted to give a tree-child network in case such a rooting exists \cite[Proposition~11]{maxfield2025dissimilarity}. We give some
more details on the relationship of these results with ours below.


\subsection{Outline of the paper}

We now summarize the contents of the rest of this paper. After 
presenting some preliminaries in Section~\ref{sec:prelim}, in Section~\ref{sec:characterize} we
present some characterizations for when a mixed graph is multi-semi- and semi-directed (Theorems~\ref{thm:sink_component}~\&~\ref{thm:semi-directed-cycle} and 
Corollary~\ref{cor:semi-directed}, respectively). Then 
in Section~\ref{sec:rootings}, we characterize 
the feasible sets of root locations in rootings of the network
(Theorem~\ref{thm:allrootings}).
At the end of the section we also explain how our
characterizations lead to an efficient algorithm for deciding 
if an arbitrary mixed graph is a multi-semi-directed or semi-directed network.
In Section~\ref{sec:omnians} we use omnians to characterize when a (multi)-semi directed  
network is strongly and weakly tree-child (Theorem~\ref{thm:stronglyTC} and 
Proposition~\ref{prop:weaklyTC}, respectively) and also 
strongly tree-based (Theorem~\ref{thm:strongTB}).
In Section~\ref{sec:cherries} we use cherry picking sequences to characterize when a (multi)-semi directed network is 
strongly or weakly orchard (Theorem~\ref{thm:orchard} and 
Theorem~\ref{thm:strongorchard}, respectively), and 
in Section~\ref{sec:paths} we use path partitions to characterize when 
the network is weakly forest-based (Theorem~\ref{thm:weaklyforestbased}) or 
weakly tree-based (Corollary~\ref{cor:weak-tree-based}). 
In Section~\ref{sec:discussion}, we conclude with some open problems.


\section{Preliminaries}\label{sec:prelim}

\subsection{Mixed graphs}
A \emph{mixed graph} is an \kh{ordered} tuple $G=(V,E,A)$ where~$V$ is a nonempty set of vertices,~$E$ is a set of undirected \emph{edges}~$\{u,v\}\subseteq V$, $u\neq v$, and~$A$ is a set of directed \emph{arcs}~$(u,v)$ with~$u,v\in V$, $u\neq v$, and such that for all arcs~$(u,v)\in A$ \kh{we have} that $\{u,v\}\notin E$ and~$(v,u)\notin A$. \leo{Note that, \kh{by definition,} parallel arcs, parallel edges or parallel edge/arc pairs are not allowed in mixed graphs.} For an arc~$(u,v)\in A$, we call~$u$ the \emph{tail} and~$v$ the \emph{head}. \leojuly{If $(u,v)\in A$, we call $u$ a \emph{parent} of $v$ and $v$ a \emph{child} of $u$.
If there is an edge~$\{u,v\}\in E$ or an arc~$(u,v)\in A$, we call $u$ and~$v$ \emph{adjacent} or {\em neighbours}.}

\kh{Suppose for the following that $G=(V,E,A)$ is a mixed graph.} \vm{For $v\in V$, the \emph{indegree}~$d_G^-(v)$ is  the number of arcs entering~$v$, the \emph{outdegree}~$d_G^+(v)$ is the number of arcs leaving~$v$, and $d_G^e(v)$ is the number of edges \kh{in $E$} incident to~$v$. In addition, 
the \emph{degree}~$d_G(v)$ is} the total number of edges and arcs \leojuly{incident} to~$v$. We will omit the subscript~$G$ when the graph is clear from \kh{the} context. \kh{We call}~$v$ a \emph{reticulation} if~$d^-(v)>1$, a \emph{leaf} if~$d(v)=d^e(v)=1$ or~$d(v)=d^-(v)=1$ and a \emph{root} if $d^+(v)=d(v)$. \kh{The set of leaves of $G$ is called the {\em leaf set} of $G$, and it is denoted by $L(G)$.} 
\kh{We say that $G$} is \emph{binary} if\remove{$d(v)\in\{1,3\}$ for all~$v\in V$.}
\add{$d(v)\in\{1,2\}$ for each root~$v\in V$ and $d(v)\in\{1,3\}$ for each non-root~$v\in V$.}

A \emph{path} in $G$ is a sequence of pairwise distinct vertices~$(v_1,\ldots,v_p)$, $p \ge 1$, such that
for all $i\in\{1,\ldots ,p-1\}$ either $(v_i,v_{i+1})$ or $(v_{i+1},v_{i})$ is an arc in $A$ or $\{v_i,v_{i+1}\}$ is an edge in $E$. Such a sequence is a \emph{semi-directed path} (from~$v_1$ to~$v_p$)
if for all $i\in\{1,\ldots ,p-1\}$ either $(v_i,v_{i+1})$ is an arc in $A$ or $\{v_i,v_{i+1}\}$ is an edge in $E$. 
A \emph{$\wedge$-path} (between~$v_1$ and~$v_p$) in $G$ is a path~$(v_1,\ldots,v_i,\ldots,v_p)$ of $N$, $p \ge 1$, such that~$(v_i,\ldots ,v_1)$ and~$(v_i,\ldots ,v_p)$ are semi-directed paths, for some~$i\in\{1,\ldots ,p\}$. 
An {\em edge-path} in $G$ is a path~$(v_1,\ldots,v_p)$, such that $\{v_i,v_{i+1}\}$ is an edge in $E$, for all $i\in\{1,\ldots ,p-1\}$. We call the number of vertices on a path $P$ minus~$1$ the {\em length} of $P$ and refer to $P$ as a {\em trivial} path if
\kh{the length of $P$ is zero.}
If~$P$ is not trivial then we sometimes also say that $P$ is {\em non-trivial}. 
\kh{We say that $G$ is  {\em connected} if for any two vertices $x$ and $y$ of $G$ there is a path joining $x$ and $y$}

\kh{A \emph{cycle} in $G$ is a sequence of vertices 
$(v_1,v_2,\ldots,v_p=v_1)$, $p \ge 4$, 
such that~$v_i\neq v_j$ for $1\leq i<j<p$ and, for all 
$i\in\{1,\ldots ,p-1\}$, either $(v_i,v_{i+1})$ or 
$(v_{i+1},v_{i})$ is an arc in $A$ or 
$\{v_i,v_{i+1}\}$ is an edge in $E$. \leojuly{Note that, since~$p\geq 4$ and~$v_p=v_1$, a cycle contains at least three distinct vertices.}
A cycle is called  \emph{semi-directed}
if, for all $i\in\{1,\ldots ,p-1\}$, either $(v_i,v_{i+1})$ is an arc in $A$ or 
$\{v_i,v_{i+1}\}$ is an edge in $E$. A reticulation~$r$ of $N$  is a \emph{sink} of a cycle in $N$ if~$r=v_i$ and $(v_{i-1},v_i),(v_{i+1},v_i)\in A$, for some~$i\in\{1,\ldots, p-1\}$, with~$v_0=v_{p-1}$. If $G$ is connected and does not contain a cycle then we call $G$ a \emph{tree}.} \add{If additionally $E = \emptyset$, i.e., $G$ is fully directed, and $G$ has a single root, we call $G$ a \emph{rooted tree}.} \nh{We refer to Figure~\ref{fig:defs} for examples that illustrate some of the definitions in this and the next subsection.}


\subsection{Multi-rooted and multi-semi-directed networks}
\nh{Suppose $X$ is a finite set with at least two elements.}
A \emph{multi-rooted network \kh{(on $X$)}} is a mixed graph~$(V,E,A)$ \leojuly{(with leaf set $X$)}, with~$E=\emptyset$, no directed cycles, $d(v)\neq 2$ for all non-root vertices~$v\in V$ and~$d^-(v)\in\{0,\add{1}, d(v)-1\}$ for all~$v\in V$.
A 
\emph{\kh{$k$-rooted network}} is a multi-rooted network with precisely~\kh{$k\geq 1$} roots. A 
\kh{$1$-rooted network}
is also called a \emph{rooted network}. \kh{A rooted network without any reticulations is called a {\em rooted phylogenetic tree}.}

\leojuly{Consider any mixed graph.} \emph{Subdividing} an edge~$\{u,v\}$ replaces the edge~$\{u,v\}$ by two edges~$\{u,w\}$ and~$\{w,v\}$ with~$w$ a new vertex. \emph{Subdividing} an arc~$(u,v)$ replaces the arc~$(u,v)$ by an edge~$\{u,w\}$ and an arc~$(w,v)$ with~$w$ a new vertex.
\emph{Suppressing} a degree-2 vertex~$w$ is defined as follows:
\begin{itemize}
    \item if~$w$ has two incident edges~$\{u,w\},\{w,v\}$, replace them by a single edge~$\{u,v\}$;
    \item if~$w$ has two incident arcs~$(u,w),(w,v)$, replace them by a\remove{sinlge} \add{single} arc~$(u,v)$;
    \item if~$w$ has an incident edge~$\{u,w\}$ and an incident arc~$(w,v)$, replace them by a single arc~$(u,v)$; 
    \item if~$w$ has an incident arc~$(u,w)$ and an incident edge~$\{w,v\}$, replace them by a single edge~$\{u,v\}$, 
\end{itemize} and in each \leojuly{of these cases} also delete~$w$. \leomay{Note that degree-2 vertices with two incoming or two outgoing arcs are not suppressed.}

The \emph{semi-deorientation} of a multi-rooted network~\nh{$D$} is the 
\leoapril{result of} replacing each arc~$(u,v)$ \nh{of~$D$} by an edge~$\{u,v\}$ if $d_D^-(v)=1$ and \nhh{afterwards} suppressing any vertex~$\rho$ with~$d_D(\rho)=2$. \leojuly{Note that a root with two outgoing arcs in~$D$ will still be present in the semi-deorientation.}
\leoapril{Also note that a semi-deorientation is not necessarily a mixed graph since suppressing roots may lead to parallel arcs.
\leojuly{However, we will only consider mixed graphs in this paper and hence not consider cases with parallel arcs.}
}
A \emph{rooting}\footnote{\nhh{Rootings were called ``rooted partners'' in \cite{maxfield2025dissimilarity,linz2023exploring}.}} of a mixed graph~$G$ is
a multi-rooted
network~$D$ such that~$G$ is the semi-deorientation of~$D$.
Observe that~$D$ can be obtained from~$G$ by subdividing (zero or more) arcs and/or edges and replacing edges by arcs. Note that\remove{this} \add{the 
subdivision vertices necessarily become roots in~$D$ and that vertices of~$G$ may also become roots in~$D$. This also includes the possibility that a  root of~$G$ that is contained in~$X$ becomes a leaf in~$D$ (see e.g. the vertex $e$ in Figure~\ref{fig:defs}) or that a leaf of~$G$ not contained in~$X$ becomes an outdegree-1 root in~$D$}\footnote{\add{From a biological perspective, turning a leaf in a (multi-)semi-directed network into a root of a (multi)-rooted network may seem problematic. However, since this only applies to leaves of a (multi-)semi-directed network not in~$X$, which do not correspond to taxa, this does not pose a problem. 
A leaf of a (multi-)semi-directed network contained in~$X$ cannot be turned into a root of a (multi)-rooted network. Instead, its incident edge can be subdivided by a new vertex that becomes a root.
}}. \remove{includes the possibility that one or more leaves become roots.} However, it is not possible to create any new reticulations.
A 
\emph{\kh{$k$-semi-directed network}} \leo{(on~$X$)} is a mixed graph that is the semi-deorientation of some $k$-rooted network (with leaf set~$X$). Note that the leaf set of a $k$-semi-directed network on~$X$ may not be equal to~$X$, because an outdegree-1 root may also be in~$X$ \add{(see, for example, the vertex~$e$ in Figure~\ref{fig:defs} again)}. 
A \emph{multi-semi-directed network} is a 
\kh{$k$}-semi-directed network for \kh{some~$k\geq 1$}.
A \emph{semi-directed network} is a $1$-semi-directed network. When drawing multi-rooted and multi-semi-directed networks, we will often omit the leaf labels when they are not relevant. \nh{We will reserve the letter~$G$ for general mixed graphs, $D$ for (multi)-rooted networks and $N$ for (multi)-semi-directed networks.}


\section{Characterizations of multi-semi-directed networks}\label{sec:characterize}

\nh{In this section we characterize when a mixed graph is a multi-semi-directed network or when it is \kh{in fact} a semi-directed network.} \add{These characterizations naturally lead to algorithms for checking (multi-)semi-directedness. We sketch one efficient approach at the end of Section~\ref{sec:rootings}, which outlines a method based on rootings of (multi-)semi-directed networks.}

Suppose \nh{$G=(V,E,A)$} 
is a mixed graph.
\leojuly{A \emph{cherry} of a mixed graph is an ordered pair of leaves~$(x,y)$ such that there is a length-$2$ path between~$x$ and~$y$, either consisting of two edges or of two arcs directed towards~$x$ and~$y$ respectively.} 
\kh{For convenience, we also refer to a cherry with leaves $x$ and $y$ as a cherry on $\{x,y\}$.} 
A \emph{leaf reticulation} \nh{of $G$} is a reticulation \nh{of $G$} that is adjacent to a leaf \nh{of $G$} and a \emph{reticulation leaf} \nh{of $G$} is a leaf \nh{of $G$} that is adjacent to a reticulation \nh{of $G$}.

An \emph{(undirected) sink component} of $G=(V,E,A)$ is \kh{a connected component $C$ of 
the (undirected) graph $(V,E)$
such that there are no arcs~$(u,v)\in A$ with~$u\in C$ and~$v\notin C$. 
For example, in Figure~\ref{fig:defs}, $\{d,r_1\}$ is a sink component since it is a connected component of the graph 
that is obtained by ignoring all arcs, and in~$G$ there are no arcs leaving this component.
Similarly, an \emph{(undirected) source component}\footnote{Source components were called ``root components'' in~\cite{maxfield2025dissimilarity}.} of~$G$ is a connected component $C$ of 
$(V,E)$ such that there are no arcs~$(u,v)\in A$ with~$u\notin C$ and~$v\in C$.}
%
\leojuly{A subgraph~$G'$ of~$G$ is a \emph{pendant subtree} if~$G'$ is a tree and has at most one vertex that has a neighbour in~$G$ not in~$G'$.}

\begin{figure}[htb]
    \centerline{\includegraphics[scale=1]{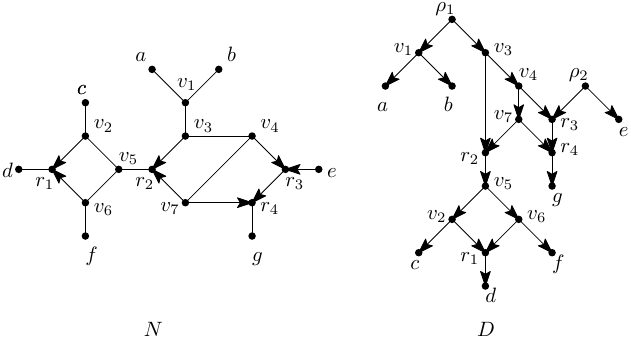}}
    \caption{\label{fig:defs} 
 \emph{Left:} A $2$-semi-directed network $N$ \kh{on $\{a,b,\ldots, e\}$} with set of reticulations~$\{r_1,r_2,r_3,r_4\}$.
 The sequence $(r_1,v_2,v_5,r_2,v_3,v_4,r_3,r_4,g)$ is a $\wedge$-path of $N$ and the sequence $C=(r_2,v_3,v_4,r_3,r_4,v_7,\kh{r_2}
 )$ is a cycle of $N$. The vertices $r_2$ and $r_4$ are sinks of $C$ whereas $r_3$ is not. The path $(a,v_1,v_3,v_4,v_7)$ is an example of an edge-path. The leaves~$a$ and~$b$ form a cherry, leaves~$d$ and~$g$ are reticulation leaves, while~$r_1$ and~$r_4$ are leaf reticulations. The vertex sets of the source components are $\{a,b,v_1,v_3,v_4,v_7\}$ and~$\{e\}$ while the sink components have vertex sets~$\{d,r_1\}$ and~$\{g,r_4\}$. \emph{Right:} A rooting~$D$ of~$N$ in the form of a $2$-rooted network with roots~$\rho_1$ and~$\rho_2$.}
\end{figure}
\kh{We start with the following technical result} \nh{which we use to prove our first characterization of multi-semi-directed networks.}

\begin{lemma}\label{lem:help}
Suppose $G=(V,E,A)$ is a connected mixed graph \leojuly{with~$|V|\geq 3$}. Then $G$ contains either a cherry or a leaf reticulation (or both) if the following properties hold:
\begin{enumerate}
    \item[(Ci)] $d(v)\neq 2$ and $d^-(v)\in\{0,d(v)-1\}$ for all~$v\in V$;
    \item[(Cii)] each cycle of~$G$ contains at least one sink; and
    \item[(Ciii)] each sink component of~$G$ is a pendant subtree.
\end{enumerate}
\end{lemma}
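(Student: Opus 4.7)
The plan is to reduce the lemma to analyzing a single \emph{sink component} of $G$---a connected component of $(V,E)$ with no outgoing arcs---and to extract a cherry or leaf reticulation directly from the tree structure of such a component. The proof splits naturally into three parts: existence of a sink component, a structural analysis showing any sink component is a pure undirected tree with a single reticulation attachment point, and a finish by a standard ``deepest leaf'' cherry-finding argument in trees whose internal degrees are all $\geq 3$.

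For existence, I would form the directed graph $H$ on the edge-components $C_1,\ldots,C_k$ of $(V,E)$ by putting $C_i\to C_j$ whenever some arc of $A$ goes from $C_i$ to $C_j$ with $i\neq j$, and observe that sinks of $H$ are exactly sink components of $G$. To rule out directed cycles in $H$, suppose $C_{i_1}\to\cdots\to C_{i_m}\to C_{i_1}$ were such a cycle, $m\geq 2$. Stitching the witnessing arcs together with edge-paths within the $C_{i_j}$'s joining successive arrival and departure vertices yields a closed semi-directed walk in $G$ whose vertices are distinct apart from the common start/end (since different $C_{i_j}$'s have disjoint vertex sets). The only way this walk could fail to be a semi-directed cycle in the paper's sense ($p\geq 4$) is the degenerate case of two vertices connected by opposite arcs---forbidden by the mixed-graph definition. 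The resulting semi-directed cycle manifestly has no sink, contradicting~(Cii).

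Fix now a sink component $C$. By (Ciii), $C$ is a tree; and any arc $u\to v$ with both ends in $C$, combined with the edge-path from $u$ to $v$ inside $C$ (existing by edge-connectivity and of length $\geq 2$ because $\{u,v\}\notin E$), would create a cycle of $C$---hence $C$ is purely undirected. Moreover $|C|=1$ is impossible, as it would force a vertex $v$ with $d^-(v)=d(v)\geq 1$, contradicting (Ci). If $C=V$ then $G$ is a pure undirected tree with $|V|\geq 3$ and all non-leaf degrees $\geq 3$ by (Ci), and I invoke the classical argument (root at an internal vertex, take a leaf $\ell$ at maximum depth; the parent of $\ell$ has $\geq 2$ children, all of them necessarily leaves at that depth) to exhibit a cherry.

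If instead $C\neq V$, let $a$ be the unique attachment vertex of $C$. Since $C$ has no outgoing arcs and $a$ has no outside edges, all of $a$'s outside incidences are incoming arcs, so $d^-(a)\geq 1$; combined with (Ci) and $d(a)\neq 2$, this forces $d^-(a)=d(a)-1\geq 2$ with the remaining incidence being a single edge inside $C$. Hence $a$ is a reticulation and a leaf of the tree $C$. If $|C|=2$ then the unique other vertex $b$ has degree $1$ in $G$, so $b$ is a leaf and $a$ is a leaf reticulation. Otherwise (Ci) rules out $|C|=3$ (the middle vertex of a $3$-vertex tree has degree $2$, and each $v\in C\setminus\{a\}$ inherits its degree from $G$), forcing $|C|\geq 4$, and the same deepest-leaf argument in $C$ rooted at $a$ produces two leaves of $C$ distinct from $a$ sharing a parent $p\neq a$; these are leaves of $G$ and form a cherry. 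The main technical obstacle is the acyclicity of $H$ in the first step, where care is needed to extract a semi-directed cycle with $\geq 3$ distinct vertices from the closed walk; the mixed-graph exclusion of parallel arcs is precisely what makes this work.
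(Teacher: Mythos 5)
Your proof is correct, but it takes a genuinely different route from the paper. The paper's argument is extremal and local: it takes a semi-directed path $P$ maximizing first the number of arcs and then the number of edges, shows $P$ must end in a leaf or a reticulation, and then uses (Ci) and (Ciii) to conclude that the endpoint is in a cherry or adjacent to a leaf/reticulation, the sink-component hypothesis entering only to rule out the case where $P$ ends in a reticulation inside a non-pendant edge-component. You instead argue globally: you build the condensation digraph on the edge-components, use (Cii) to show it is acyclic (your care with the two-vertex degenerate case, excluded because opposite parallel arcs are forbidden, is exactly right), extract a sink component $C$, and then read off a cherry or leaf reticulation from the tree structure of $C$ via (Ci), (Ciii) and a deepest-leaf argument. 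Your approach buys a cleaner structural insight — under (Cii) a sink component always exists, and the desired configuration can be located inside it — which dovetails nicely with how sink components are used later in the paper (e.g.\ in Theorem~\ref{thm:allrootings}); the paper's extremal-path proof is shorter and avoids the component-digraph machinery. One small repair: when you exclude arcs $(u,v)$ with both ends in a sink component $C$, the cycle you construct is a cycle of $G$, not of $C$ — by the paper's definition $C$ is a component of $(V,E)$ and so contains no arcs, and (Ciii) only asserts that this edge-subgraph is a tree. The correct justification is (Cii): your cycle contains exactly one arc, hence has no sink, a contradiction. Since you already verified the cycle has at least three vertices (the edge-path has length at least $2$ because $\{u,v\}\notin E$), this is a one-line fix and the rest of your argument goes through unchanged.
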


\begin{proof}
\kh{Let} $P$ be a semi-directed path in~$G$ containing a maximum number of arcs and, over all such paths, containing a maximum number of edges. If~$P$ contains no arc then it follows that~$A=\emptyset$ and hence~$G$ is a tree by (Cii) \kh{and since $G$ is connected}. \leojuly{Since~$G$ has at least three vertices, and no degree-$2$ vertices, it follows that~$G$ has a cherry.}
Therefore, we may assume that~$P$ contains at least one arc. \kh{Let $s$ denote the first vertex on $P$.}

\kh{We next show} that~$P$ ends in a leaf or in a reticulation. To see this, assume \kh{for contradiction that}~$P$ ends in a vertex~$v$ with~$d(v)\neq 1$ and~$d^-(v)=0$. Then there is an edge~$\{v,w\}$ or an arc~$(v,w)$ with~$w$ not on~$P$ (since otherwise we would have a cycle without sink, which is not allowed by (Cii)). Hence, \kh{we can} extend~$P$ to a semi-directed path containing more arcs or the same number of arcs and more edges, by appending~$w$ \kh{in contradiction to the maximality of $P$. Thus,} $P$ ends in a leaf or in a reticulation, \kh{as required}.

\kh{To complete the proof, we next show} that~$P$ ends in a reticulation leaf or in a leaf that is in a cherry. To prove this, assume that~$P$ ends in a reticulation or in a leaf that is not a reticulation leaf and not in a cherry. First suppose that~$P$ ends in a leaf~$x$ that is not a reticulation leaf. 
\kh{Then, in view of (Ci), there exists a vertex $z$ of $N$ that is not on 
$P$ and, denoting by $w$ the vertex on $P$ that is the predecessor of $v$ on $P$, we have that $\{w,z\}\in E$ or $(w,z)\in A$. Hence, $z$}
is either also a leaf or it is a reticulation. 
In the first case,~$x$ is in a cherry. Therefore, we may and will assume that~$P$ ends in a reticulation~$v$. Furthermore, \kh{there exist} arcs entering~$v$ \kh{that} 
are not on~$P$ and, \kh{by the maximality of $P$,} the last edge/arc of~$P$ is an edge.

Consider the maximal connected subgraph~$H$ of~$G$ containing~$v$ but no arcs. Observe that~$H$ is not a pendant subtree since it contains at least two vertices with incoming arcs ($v$ and the head of the last arc of~$P$). Hence, by (Ciii), $H$ is not a sink component. It follows that~$H$ has a vertex~$y$ with an incident outgoing arc~$a=(y,w)$, some $w\in V$. \kh{Hence, the subpath of $P$ from $s$ to $y$ extended by $w$ has one more arc than $P$ contradicting }
the assumption that~$P$ contains a maximum number of arcs. 
\end{proof}

\begin{theorem}\label{thm:sink_component}
A mixed graph~$G=(V,E,A)$
is a \kh{multi}-semi-directed network if and only if Properties (Ci) - (Ciii) hold.
\end{theorem}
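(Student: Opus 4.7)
The plan is to handle the two implications separately. For the forward direction, I would assume that $G$ is the semi-deorientation of a multi-rooted network $D$ and verify (Ci)-(Ciii) directly. The vertices suppressed during the semi-deorientation are exactly the degree-$2$ roots of $D$, so every remaining vertex inherits $d_D(v)=d_G(v)\ne 2$; and since arcs of $G$ correspond precisely to the reticulation arcs of $D$, we have $d_G^-(v)=d_D^-(v)$ when $v$ is a reticulation in $D$ and $d_G^-(v)=0$ otherwise, so the indegree condition in the multi-rooted network definition transfers to (Ci). For (Cii), I would lift any cycle in $G$ to a closed walk in $D$ by orienting its edges as they appear in $D$; since $D$ has no directed cycle, some cycle vertex must receive both adjacent cycle-arcs, making it a reticulation whose incoming arcs survive in $G$, and hence a sink there. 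For (Ciii), I would first argue that a sink component $C$ of $G$ also admits no outgoing arcs in $D$, because any such arc would either remain an arc leaving $C$ in $G$ or become an edge crossing the boundary of $C$, both impossible. I would then rule out internal cycles in $C$ by an orientation-chase: by (Cii) such a cycle would have a sink $r$ that is a reticulation in $D$, and following the forced edge orientations around the rest of the cycle eventually produces an intermediate non-root vertex with $d_D=2$, contradicting the multi-rooted network definition. A very similar orientation-chase, starting from two hypothetical external attachment points of $C$, shows that at most one vertex of $C$ can have an external incoming arc, completing the pendant-subtree conclusion.

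For the converse, I would induct on $|V|+|E|+|A|$. The base cases and disconnected $G$ reduce to their connected components via disjoint union. For connected $G$ with $|V|\ge 3$, Lemma~1 provides either a cherry $(x,y)$ with common neighbour $p$, or a leaf reticulation $r$ adjacent to a leaf $\ell$. In both cases the relevant leaf is incident via an edge, not an arc: a leaf has $d=1$, so (Ci) forces $d^-\in\{0\}$, ruling out arc-adjacency. I would then delete the chosen leaf, and if this leaves $p$ or $r$ with degree~$2$ I would perform the appropriate suppression; when $d_G^-(r)=2$ in the leaf-reticulation case, I would additionally delete $r$ with its two incoming arcs and suppress (possibly recursively) any parent-tail whose degree falls to~$2$. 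Having verified that the resulting $G'$ still satisfies (Ci)-(Ciii), the induction hypothesis yields a rooting $D'$ of $G'$; I would extend $D'$ to a rooting $D$ of $G$ by re-attaching the deleted leaf via an incoming arc from its former neighbour, re-expanding every suppressed vertex as a subdivision of the merged arc or edge in $D'$, and, in the binary-reticulation case, restoring $r$ together with its two parent-arcs.

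The hardest part will be verifying that these reductions really preserve (Ci)-(Ciii), and in particular that the suppression steps never create parallel edges or arcs. I intend to handle this by showing that any such parallelism would produce a length-$3$ cycle in $G$ whose edge/arc configuration admits no sink, contradicting (Cii) applied to the original $G$. For example, in the cherry case with $d_G(p)=3$, suppressing $p$ fuses the edge $\{p,y\}$ with $p$'s remaining connection to some vertex $q$; a pre-existing $y$-to-$q$ connection of matching type would close off a triangle, and enumerating the edge/arc orientations allowed by (Ci) shows that the only configurations producing such a parallel are exactly the sinkless ones forbidden by (Cii). An analogous short-cycle argument dispatches the leaf-reticulation reductions, so the induction closes and $G$ admits a rooting.
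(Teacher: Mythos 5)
Your overall strategy for the converse (induction, using Lemma~\ref{lem:help} to find a cherry or a leaf reticulation, then reduce and re-attach) is the paper's, and your cherry case matches the paper's almost exactly; your worry about parallel edges there is in fact moot, since the surviving leaf $y$ has the suppressed vertex as its only neighbour. The genuine gap is in the leaf-reticulation case. First, you only give a reduction when $d^-(r)=2$: if the leaf reticulation $r$ has indegree $t\geq 3$, deleting the leaf leaves $r$ with $d^-(r)=d(r)=t$, which violates (Ci) (no multi-semi-directed network contains such a vertex), so the induction hypothesis cannot be applied and your re-attachment has nothing to start from; the theorem is about general, not binary, mixed graphs, so this case must be handled. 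Second, even for $t=2$ your rescue of the suppression step fails. Deleting $r$ with its incoming arcs and then suppressing a degree-$3$ parent can create parallel arcs in a configuration that (Cii) does \emph{not} forbid: take $G$ containing the edge $\{q,p_1\}$ and arcs $(q,r'')$, $(p_1,r'')$, $(p_1,r)$, $(p_2,r)$, with $r$ a leaf reticulation (this arises, e.g., as the semi-deorientation of an ordinary rooted network containing a triangle). The $3$-cycle on $q,p_1,r''$ has the sink $r''$, so (Cii) holds, yet after removing $(p_1,r)$ the suppression of $p_1$ merges $\{q,p_1\}$ and $(p_1,r'')$ into a second copy of $(q,r'')$. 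So your claim that any parallelism forces a sinkless short cycle is false, and your reduction is not well defined; moreover, once you start deleting a copy of a parallel arc, the re-attachment step no longer reconstructs a rooting of the original $G$.

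The paper avoids both problems by a different reduction in this case: it does not delete the parents' arcs and suppress, but instead removes $r$ and its leaf and replaces each arc $(p_i,r)$, $i=1,\ldots,t$, by an edge $\{p_i,z_i\}$ to a fresh leaf $z_i$. This works for every $t\geq 2$, needs no suppression (so parallels cannot arise), and a rooting of $G$ is recovered from a rooting of the reduced graph simply by merging $z_1,\ldots,z_t$ back into $r$ and re-attaching the pendant leaf. Since this operation can increase the number of vertices, the paper inducts on $|A|+|E|$ (which drops by exactly one) rather than on $|V|+|E|+|A|$; if you repair your leaf-reticulation case along these lines you will also need to change your induction measure accordingly. (Your forward direction is fine and is in fact more detailed than the paper, which dismisses it as clear.)
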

\begin{proof}
If~$G$ is a \kh{multi}-semi-directed network, then  \kh{Properties~(Ci) - (Ciii)} clearly hold.

\kh{To see the converse, assume that Properties~(Ci) - (Ciii) hold. We perform induction on $|A|+|E|$. The base case is $|A|+|E|=0$. In this case, $G$ is a set of isolated} vertices. \kh{Hence, $G$ is} its own semi-deorientation. Thus,~$G$ is a multi-semi-directed network.

\kh{For the inductive step, assume that $G$ is such that $|A|+|E|\geq 1$. Then $G$ must contain a connected component that is not an isolated vertex. In view of the base case
it
suffices to show that every connected component of $G$ that is not an isolated vertex is a multi-semi-directed network. Let $G'$ be a connected component of $G$ that is not an isolated vertex. \leojuly{If~$G'$ has exactly two vertices, then they are connected by an edge by~(Ci) and clearly~$G'$ is (multi-)semi-directed.} Otherwise, by Lemma~\ref{lem:help}, $G'$ must contains either a cherry or a leaf reticulation (or both).}

\kh{Assume first that~$G'$ contains a cherry on $\{x,y\}$. \leojuly{Then, by~(Ci), the length-$2$ path between~$x$ and~$y$ consists of two edges.} In this case, we} delete leaf~$x$ from this cherry and suppress 
\kh{the vertex adjacent to $x$ if this has rendered it a vertex of degree two.}
The resulting graph~$G''$ is a multi-semi-directed network by induction. This means that~$G''$ is the semi-deorientation of a multi-rooted network~\nh{$D''$}. Let~\nh{$D'$} be obtained from~\nh{$D''$} by subdividing the arc entering~$y$ by a new vertex~$p$ and adding leaf~$x$ with an arc~$(p,x)$. Then~\nh{$D'$} is a multi-rooted network and the semi-deorientation of~\nh{$D'$} is~$G'$. Hence,~$G'$ is a multi-semi-directed network.

\kh{To conclude the proof, assume that}
$G'$ contains a reticulation leaf~$z$. Let~$r$ be the reticulation adjacent to~$z$ and let~$p_1,\ldots ,p_{t}$ be the parents of~$r$ (with~$t=d^-(r)$). Let~$G''$ be obtained from~$G'$ by replacing vertices~$r,z$ by vertices~$z_1,\ldots , z_t$ and replacing arc $(p_i,r)$ by an edge $\{p_i,z_i\}$ for $i\in\{1,\ldots ,t\}$ and deleting edge~$\{r,z\}$ thus reducing~$|A|+|E|$ by~$1$. The resulting graph~$G''$ is a multi-semi-directed network by induction. This means that~$G''$ is the semi-deorientation of a multi-rooted network~\nh{$D''$}. Let~\nh{$D'$} be obtained from~\nh{$D''$} by merging~$z_1,\ldots ,z_t$ into a single vertex~$r$ and adding a leaf~$z$ with an arc~$(r,z)$. Then,~\nh{$D'$} is a multi-rooted network and the semi-deorientation of~\nh{$D'$} is~$G'$. Hence,~$G'$ is a multi-semi-directed network.
\end{proof}

The following lemma can be used to show an alternative characterization of multi-semi-directed networks.

\begin{lemma}\label{lem:cycle} Let~$C$ be a cycle in a mixed graph \nh{$G$}. If
\begin{enumerate}
    \item[(a)] $C$ is not semi-directed; and
    \item[(b)] $C$ contains no non-trivial 
    edge-path
    between two reticulations \nh{of $G$},
\end{enumerate}
then $C$ contains at least one sink.
\end{lemma}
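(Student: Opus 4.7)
My plan is to prove the contrapositive: assume $C$ has no sink, and deduce that either (a) fails (so $C$ is semi-directed) or (b) fails (so $C$ contains a non-trivial edge-path between two reticulations). Fix the traversal $(v_1, v_2, \ldots, v_p = v_1)$ of $C$. For each $i$, call the step between $v_i$ and $v_{i+1}$ a \emph{forward arc} if $(v_i, v_{i+1}) \in A$, a \emph{backward arc} if $(v_{i+1}, v_i) \in A$, and an \emph{edge} otherwise. A vertex $v_i$ is a sink of $C$ exactly when the step at position $i-1$ is a forward arc and the step at position $i$ is a backward arc, as both cycle-arcs then enter $v_i$. So the no-sink hypothesis says that the cyclic sequence of step types on $C$ contains no ``forward arc immediately followed by backward arc''.

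If all arcs on $C$ share the same label (all forward, or all backward), then $C$ is semi-directed in the current traversal or its reverse, respectively, so (a) fails. Otherwise, both forward and backward arcs appear on $C$, and going cyclically we must find a forward arc at some position $i$ and a backward arc at some position $j$ such that all strictly-between steps are edges. If $j = i+1$, the shared vertex $v_{i+1} = v_j$ receives two incoming cycle-arcs and is therefore a sink, contradicting the hypothesis. Hence $j \geq i+2$, and $(v_{i+1}, v_{i+2}, \ldots, v_j)$ is a non-trivial edge-path on $C$. Its endpoints are $v_{i+1}$, the head of the forward arc at position $i$, and $v_j$, the head of the backward arc at position $j$; each therefore has at least one incoming cycle-arc.

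The final step --- which I expect to be the main obstacle --- is to conclude that $v_{i+1}$ and $v_j$ are both reticulations of $G$. From the cycle alone only $d^- \geq 1$ is guaranteed at each endpoint, whereas being a reticulation requires $d^- \geq 2$. I would obtain this by appealing to the structural context in which the lemma is invoked: for example, under property~(Ci) from Theorem~\ref{thm:sink_component}, any vertex with $d^- \geq 1$ satisfies $d^- = d-1$ and $d \neq 2$, hence $d^- \geq 2$, so every head of an arc is automatically a reticulation. With this in place, $v_{i+1}$ and $v_j$ are reticulations of $G$ joined by a non-trivial edge-path on $C$, contradicting (b) and completing the proof.
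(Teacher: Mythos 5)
Your argument is, in substance, the same as the paper's: the paper picks a forward arc at position $i$ and a backward arc at position $j$ with $j-i$ minimized (which forces all steps strictly between them to be edges), handles $j-i=1$ as a sink, and otherwise exhibits the edge-path $(v_{i+1},\ldots,v_j)$; your ``nearest forward--backward pair'' is the identical device. The one point worth dwelling on is the obstacle you flag at the end, because it is real: the paper's proof simply asserts that $v_{i+1}$ and $v_j$ ``are both reticulations of $G$'', which does not follow from the lemma's stated hypotheses, since being the head of a cycle-arc only gives $d^-\geq 1$. In fact, as literally stated for an arbitrary mixed graph the lemma fails: take $V=\{v_1,v_2,v_3,v_4\}$ with arcs $(v_1,v_2)$, $(v_4,v_3)$ and edges $\{v_2,v_3\}$, $\{v_4,v_1\}$; the $4$-cycle is not semi-directed in either traversal, $G$ has no reticulations at all (so condition (b) holds vacuously), yet no vertex has two incoming cycle-arcs, so there is no sink. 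So your instinct to import a degree condition is exactly what is needed: under Property~(Ci)/(I) (which holds in the only place the lemma is invoked, namely the proof of Theorem~\ref{thm:semi-directed-cycle}), any vertex with $d^-\geq 1$ has $d^-=d-1$ and $d\geq 3$, hence is a reticulation, and your argument (and the paper's) then closes. In short: your proof is the paper's proof, with the unstated assumption made explicit; as a proof of the lemma in its stated generality it is incomplete, but so is the paper's, and the statement itself should carry the degree hypothesis (or be phrased for multi-semi-directed networks) for the reticulation step to be valid.
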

\begin{proof}
Let~$C=(v_1,v_2,\ldots ,v_p=v_1)$, \kh{$p\geq 4$}. If~$C$ is not semi-directed then there 
\kh{exist $i,j\in\{1,2,\ldots, p-1\}$ distinct such that
$(v_i,v_{i+1})$ and $(v_{j+1},v_j)$ are arcs on $C$}. Without loss of generality, $i<j$ \kh{$(\mathrm{mod}\,\,\,\,p-1)$}. \kh{Assume that $j$ and $i$ are} such that $j-i$ is minimized. If $j-i=1$ then $C$ contains \kh{the} sink $v_j=v_{i+1}$. \kh{Otherwise}, $j-i>1$ \kh{and, so, $v_{i+1}$ and $v_j$ are both reticulations of \nh{$G$} and $C$ contains a non-trivial 
edge-path between them}.
\end{proof}

An alternative characterization \kh{of mixed graphs that are multi-semi-directed networks} is as follows. See Figure~\ref{fig:semi-directed-cycle}
\kh{for examples that illustrate that Properties~(II) and (III) cannot be weakened}.

\begin{figure}[htb]
    \centering
    \includegraphics{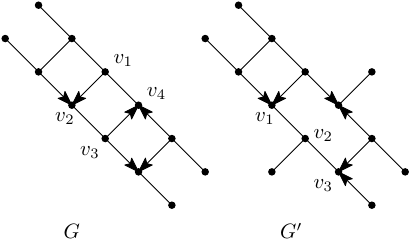}
    \caption{\label{fig:semi-directed-cycle} Two mixed graphs \kh{$G$ and $G'$} that, \kh{by Theorem~\ref{thm:semi-directed-cycle},} are not \kh{multi}-semi-directed networks. \kh{The reason is that}~$G$ contains the semi-directed cycle $(v_1,v_2,v_3,v_4,v_1)$ while~$G'$ contains \kh{the non-trivial edge-path}~$(v_1,v_2,v_3)$ \kh{and $v_1$ and $v_3$ are} reticulations.}
\end{figure}

\begin{theorem}\label{thm:semi-directed-cycle}
A mixed graph~$G=(V,E,A)$ is a \kh{multi}-semi-directed network if and only if
\begin{enumerate}
    \item[(I)] $d(v)\neq 2$ and $d^-(v)\in\{0,d(v)-1\}$ for all~$v\in V$;
    \item[(II)] $G$ contains no semi-directed cycle; and
    \item[(III)] $G$ contains no non-trivial \leo{edge-}path between two reticulations.
\end{enumerate}
\end{theorem}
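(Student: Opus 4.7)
The plan is to reduce the statement to Theorem~\ref{thm:sink_component}, which characterizes multi-semi-directed networks by properties (Ci)--(Ciii). Since (I) coincides with (Ci), it suffices to show that, given (I), the pair (II)+(III) is equivalent to (Cii)+(Ciii).

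For the forward implication, assume $G$ is multi-semi-directed. For (II): in any semi-directed cycle all arcs point forward along the cycle, so the cycle has no sink, contradicting (Cii). For (III), I fix an arbitrary rooting $D$ of $G$ and trace the forced orientations along a hypothetical non-trivial edge-path $P=(v_1,\ldots,v_p)$ whose endpoints $v_1,v_p$ are reticulations. Since the semi-deorientation preserves reticulations, $v_1$ has $d_D^-(v_1)\geq 2$, while every interior vertex $v_i$ satisfies $d^e(v_i)\geq 2$, which, combined with (I), forces $d^-(v_i)=0$ and hence $d_D^-(v_i)\leq 1$. The edge $\{v_1,v_2\}$ of $G$ cannot, in $D$, correspond to any arc entering $v_1$ (such an arc would remain directed in $G$, since $v_1$ is a reticulation); nor can it come from subdividing a suppressed degree-2 root $\rho$ with arcs $(\rho,v_1),(\rho,v_2)$, since $(\rho,v_1)$ would likewise enter a reticulation and be retained. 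Hence this edge must correspond to the arc $(v_1,v_2)$ in $D$. Inductively, given that $(v_{i-1},v_i)$ is an arc of $D$, the constraint $d_D^-(v_i)\leq 1$ rules out every orientation of $\{v_i,v_{i+1}\}$ that again enters $v_i$, forcing $(v_i,v_{i+1})\in D$. This yields $(v_{p-1},v_p)$ in $D$, while the symmetric endpoint argument at $v_p$ forces $(v_p,v_{p-1})$ in $D$, a direct contradiction.

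For the converse, assume (I)--(III) and derive (Cii), (Ciii) before invoking Theorem~\ref{thm:sink_component}. Condition (Cii) is immediate from Lemma~\ref{lem:cycle}: any cycle $C$ is non-semi-directed by (II) and contains no non-trivial edge-path between two reticulations by (III), hence has a sink. For (Ciii), let $S$ be a sink component. If $S$ contained an undirected cycle it would consist entirely of edges and thus be a semi-directed cycle, violating (II); so $S$ is a tree. To show $S$ is a pendant subtree I claim at most one vertex of $S$ has a neighbour outside $S$. Any such vertex $v$ is incident to an external arc that must enter $v$ (no arc leaves a sink component) and also to some internal edge, so $d^e(v)\geq 1$ and $d^-(v)\geq 1$; by (I) this forces $d(v)\geq 3$ and $d^-(v)=d(v)-1\geq 2$, making $v$ a reticulation. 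Two such vertices in the tree $S$ would be joined by a non-trivial edge-path, contradicting (III).

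The chief subtlety is the forward direction of (III): the derivation of (II) is a one-line observation from (Cii), but (III) requires carefully tracking how the unique degree of freedom at each interior vertex of the hypothetical edge-path is consumed by the forced orientation propagating from either endpoint, and ruling out the subdivision-by-a-suppressed-root possibility at each step. The remainder of the argument is bookkeeping with Lemma~\ref{lem:cycle} and property (I).
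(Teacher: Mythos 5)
Your proof is correct and follows essentially the same route as the paper: the substantive (converse) direction reduces to Theorem~\ref{thm:sink_component} by deriving (Cii) from Lemma~\ref{lem:cycle} and (Ciii) from the observation that a non-pendant sink component would contain two reticulations joined by a non-trivial edge-path. The only difference is that you spell out in detail the forward direction (especially that a rooting forces a contradiction along an edge-path between two reticulations), which the paper dismisses as straightforward; your argument there is sound.
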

\begin{proof}
\kh{As in the case of Theorem~\ref{thm:sink_component}, it is straight-forward} to see that if~$G$ is a \kh{multi-}semi-directed network, then \kh{Properties}~(I), (II) and (III) hold.

For the \kh{converse} direction, by Theorem~\ref{thm:sink_component}, \kh{it suffices} to show that \kh{Properties}~(I), (II) and (III) imply \kh{Properties}~(Ci), (Cii) and (Ciii). \kh{For this, it suffices to show that Properties~(II) and (III) imply Properties~(Cii) and (Ciii).}

First observe that \kh{Property}~(III) implies \kh{Property}~(Ciii) because any sink component that is not a pendant subtree contains two reticulations and a non-trivial edge-path path \kh{joining} them.  \kh{In view of Lemma~\ref{lem:cycle},} \kh{Properties}~(II) and (III) together imply \kh{Property}~(Cii).
\end{proof}

We will now characterize semi-directed networks. See Figure~\ref{fig:multi-semi-directed} \kh{for examples that illustrate that Properties~(2) and (3) cannot be weakened}. \nh{Recall that a semi-directed network is the semi-deorientation of some 1-rooted network, i.e. a rooted network with a single root.}

\begin{figure}[htb]
    \centering
    \includegraphics{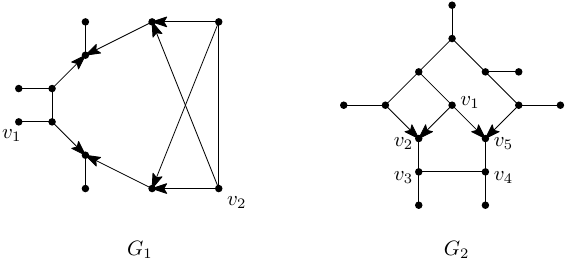}
    \caption{\label{fig:multi-semi-directed} \emph{Left:} \kh{A mixed graph $G_1$ which, by Corollary~\ref{cor:semi-directed},} is not a semi-directed network since there is no $\wedge$-path between~$v_1$ and~$v_2$. \emph{Right:} A mixed graph~$G_2$ that, \kh{by Corollary~\ref{cor:semi-directed},} is not a semi-directed network  since it contains a cycle $(v_1,v_2,v_3,v_4,v_5,v_1)$ without a sink.}
\end{figure}

\begin{corollary}\label{cor:semi-directed}
A mixed graph~$G=(V,E,A)$ is a semi-directed network if and only if
\begin{enumerate}
    \item[(1)] $d(v)\neq 2$ and $d^-(v)\in\{0,d(v)-1\}$ for all~$v\in V$;
    \item[(2)] each cycle of~$G$ contains at least one sink; and
    \item[(3)] $G$ contains a $\wedge$-path between each pair of vertices~$u,v\in V$.
\end{enumerate}
\end{corollary}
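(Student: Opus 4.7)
The plan is to prove both directions, building on Theorem~\ref{thm:sink_component}.

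For the ``only if'' direction, suppose $G$ is the semi-deorientation of a $1$-rooted network $D$ with root $\rho$. Since $G$ is also multi-semi-directed, Theorem~\ref{thm:sink_component} yields that Properties~(1) and (2) hold (they coincide with (Ci) and (Cii)). For (3), I would take any two vertices $u,v\in V(G)\subseteq V(D)$ and consider the directed paths from $\rho$ to $u$ and from $\rho$ to $v$ in $D$, letting $w$ be their last common vertex. Since (Ci) forces every non-root vertex of $D$ to have degree $\neq 2$, only $\rho$ can be suppressed during the semi-deorientation. If $w\neq\rho$ or $\rho$ is not suppressed, then $w\in V(G)$ and the two internally disjoint descending paths from $w$ in $D$ become semi-directed paths in $G$ that meet at $w$, yielding a $\wedge$-path with peak $w$. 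If $w=\rho$ is suppressed, then its two children $a$ and $b$ on the paths to $u$ and $v$ become connected in $G$ (by an edge if both are non-reticulations, by an arc otherwise), and I would take the peak of the $\wedge$-path to be $a$, reaching $u$ via the descending path from $a$ in $D$ and reaching $v$ via the edge/arc to $b$ followed by the descending path from $b$.

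For the ``if'' direction, assume (1), (2), (3). The first step is to establish that $G$ is multi-semi-directed via Theorem~\ref{thm:sink_component}, which requires verifying (Ciii): every sink component of $G$ is a pendant subtree. Suppose for contradiction that some sink component $S$ of $G$ is not a pendant subtree. If $S$ contains an all-edge cycle, this cycle has no sink, contradicting (2). Otherwise, $S$ has at least two vertices $u,v$ with external neighbours; by (1) and the sink-component property, both $u$ and $v$ must be reticulations, each with an external parent via an arc entering $S$, and $u,v$ are joined by a non-trivial edge-path $P$ in $S$. If some common external parent of $u$ and $v$ exists, the cycle formed by the two arcs and $P$ has no sink, contradicting (2); otherwise I would invoke (3) on the two distinct external parents and analyze the possible peaks of the $\wedge$-path between them (noting these parents have no incoming arcs at which a semi-directed path can end) to reach a contradiction.

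Having established that $G$ is multi-semi-directed, I would then show (3) forces the rooting to have exactly one root. Under (2), $G$ has no semi-directed cycle, and the edge-components of $G$ (connected components of $(V,E)$) are all trees. It can be shown that two vertices are mutually semi-directed reachable if and only if they lie in the same edge-component; hence the quotient of $V(G)$ by this equivalence carries a well-defined partial order under semi-directed reachability between components. Condition (3) implies every pair of edge-components has a common upper bound in this quotient, so by finiteness there is a unique maximum class $[\rho]$. Taking any $\rho\in[\rho]$ and orienting each edge of $G$ consistently with the semi-directed reachability from $\rho$ then yields a $1$-rooted network $D$ whose semi-deorientation is $G$, with $\rho$ as its unique root.

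The main obstacle will be handling the subcase in the verification of (Ciii) where the two external parents are distinct — here a careful analysis of semi-directed paths leading into each parent is required to derive a contradiction with (3) — as well as ensuring the final edge-orientation is well-defined and yields a valid $1$-rooted network with $\rho$ as its unique root.
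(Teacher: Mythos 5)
Your overall skeleton for the converse (verify (Ciii) and invoke Theorem~\ref{thm:sink_component}, then handle the number of roots) is the same as the paper's, and the forward direction is fine apart from a harmless slip in the suppressed-root subcase (if the retained arc points into the child $a$, the peak must be the other child $b$, not $a$). The genuine gap is exactly the subcase you flag as the main obstacle: a non-pendant sink component $S$ whose two reticulations $r_1,r_2$ have \emph{distinct} external parents $p_1,p_2$. You propose to ``invoke (3) and analyze the possible peaks of the $\wedge$-path \ldots to reach a contradiction,'' and later say the aim is ``a contradiction with (3)''; but no contradiction with (3) is available, since a $\wedge$-path between $p_1$ and $p_2$ may perfectly well exist, and your parenthetical claim that these parents have no incoming arcs is false in general (a parent of a reticulation can itself be a reticulation). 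What closes this case — and it is the paper's key step — is precisely the construction you already used when the parents coincide: let $P$ be the $\wedge$-path between $p_1$ and $p_2$ guaranteed by (3); then $P$ together with the arcs $(p_1,r_1)$, $(p_2,r_2)$ and the edge-path between $r_1$ and $r_2$ inside $S$ forms a cycle with no sink (arcs on $P$ point away from its peak, and $r_1,r_2$ each meet the cycle in one arc and one edge; note $P$ cannot enter $S$, because a semi-directed path that enters a sink component can never leave it, so the cycle is indeed a cycle). This contradicts (2), not (3). As written, the only case of the (Ciii) verification in which Property (3) is actually needed is left unproven.

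For the single-root part you take a genuinely different and much heavier route than the paper. The paper simply observes that if some rooting had two roots $\rho_1,\rho_2$, there would be no $\wedge$-path between them in the rooting, hence none in $G$, contradicting (3). You instead build a quotient order on edge-components under semi-directed reachability, use (3) plus finiteness to get a unique maximum class, and construct an orientation away from a vertex $\rho$ in it. This can be made to work, but the points you leave open are exactly the delicate ones: antisymmetry of the order requires extracting a semi-directed cycle (excluded via (2)) from a closed semi-directed walk containing an arc; well-definedness of the orientation in the non-top edge-components requires that each such component has a unique entry vertex, i.e., no non-trivial edge-path between two reticulations, which you would have to derive (it follows from multi-semi-directedness via Theorem~\ref{thm:semi-directed-cycle}); and you must still verify that the resulting digraph is a multi-rooted network whose semi-deorientation is $G$ with sole root $\rho$. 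All of this is replaced in the paper by the short contradiction argument, which you could adopt verbatim once the (Ciii) step above is repaired.
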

\begin{proof}
As before, it is easy to see that if~$G$ is a semi-directed network then \kh{Properties}~(1), (2) and (3) hold.

 Now assume that \kh{Properties}~(1), (2) and (3) hold. \kh{We first show that $G$ is a multi-semi-directed network. In view of Theorem~\ref{thm:sink_component},
 it suffices to show that Properties~(1) - (3) imply Property~(Ciii). To this end,}  suppose that~$G$ is contained a sink component~$S$ that is not a pendant subtree. Note that~$S$ is a tree by \kh{Property}~(2). Since~$S$ is not pendant, it contains at least two reticulations~$r_1,r_2$. Let~$p_1,p_2$ be parents of~$r_1,r_2$, respectively, such that~$p_1\neq p_2$. Note that~$p_1,p_2$ are not in~$S$. By \kh{Property}~(3), there exists a $\wedge$-path~$P$ between~$p_1$ and~$p_2$. Hence, there is a cycle without sink, formed by path~$P$ together with the arcs~$(p_1,r_1),(p_2,r_2)$ and the path between~$r_1$ and~$r_2$ through~$S$, which is a contradiction to (2). Thus,  (Ciii) holds, \kh{as required}. 

It remains to show that any rooting of~$G$ has a single root. Suppose~$D$ is a rooting of~$G$ with at least two roots $\rho_1,\rho_2$. Then there is no $\wedge$-path between~$\rho_1$ and~$\rho_2$ in~$D$ and hence also not in~$G$, contradicting \kh{Property}~(3). Hence, we can conclude that~$G$ is a semi-directed network.
\end{proof}

\section{Rootings of multi-semi-directed networks}\label{sec:rootings}

The following theorem and its corollary show that the number of roots of a rooting of a \kh{multi-}semi-directed network~$N=(V,E,A)$ can be calculated directly from the \emph{reticulation number} $|A|-|R|$ \kh{of $N$}, with~$R$ the set of reticulations.

\begin{theorem}
\label{thm:reticulationsroots}
If~\nh{$N=(V,E,A)$} is a \kh{k-}semi-directed network, \kh{some $k\geq 1$,}  then the reticulation number of~$N$ equals
\[
|E|+|A|-|V|+k.
\]
\end{theorem}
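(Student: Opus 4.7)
The plan is to fix any rooting $D=(V_D,\emptyset,A_D)$ of $N$ (which exists because $N$ is $k$-semi-directed) and prove the identity by a double-count in $D$ combined with a tally of the numerical effect of the semi-deorientation. In $D$, every vertex is a root ($d_D^-=0$, of which there are $k$), a reticulation ($d_D^-\geq 2$, of which there are $|R|$), or a non-root non-reticulation vertex, which by the multi-rooted network condition has $d_D^-(v)=1$. Summing in-degrees therefore gives
\[
|A_D| \;=\; (|V_D|-k-|R|) \;+\; \sum_{r\in R}d_D^-(r),
\]
so writing $|A_D^*|=\sum_{r\in R}d_D^-(r)$ for the number of arcs whose head is a reticulation, we have $|A_D^*|-|R|=|A_D|-|V_D|+k$, which is the standard formula for the reticulation number of $D$.

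Next, I would track how $|V|,|E|,|A|$ change under the semi-deorientation. In the first step, each of the $|V_D|-k-|R|$ non-reticulation arcs becomes an edge and each of the $|A_D^*|$ reticulation arcs is kept. Only roots $\rho$ with $d_D(\rho)=2$ can become degree-$2$ afterwards, and by the note on suppression such a root $\rho$ is suppressed precisely when at least one of its two children is not a reticulation (if both are reticulations, $\rho$ retains two outgoing arcs and is not suppressed). A direct check of the two surviving cases (two edges collapsing into one edge; one edge and one arc collapsing into one arc) shows that each such suppression decreases $|V|$ by $1$ and $|E|$ by $1$, leaves $|A|$ unchanged, and preserves the degree of each neighbour (so no further degree-$2$ vertices are produced). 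Letting $k_s$ denote the number of suppressed roots, I obtain
\[
|V_N|=|V_D|-k_s,\qquad |E_N|=|V_D|-k-|R|-k_s,\qquad |A_N|=|A_D^*|.
\]

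To finish, I would observe that the reticulation set is preserved: every arc of $N$ (including any arc produced by a suppression) has its head at a reticulation of $D$, no non-reticulation of $D$ acquires an incoming arc, and each $r\in R$ retains $d_N^-(r)=d_D^-(r)\geq 2$. Hence $|R_N|=|R|$, and substituting the three identities into $|E_N|+|A_N|-|V_N|+k$ simplifies to $|A_D^*|-|R|$, which equals both the reticulation number $|A_N|-|R_N|$ of $N$ and $|A_D|-|V_D|+k$ of $D$; the displayed formula follows.

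The main obstacle will be the mixed sub-case of suppression (one reticulation child, one non-reticulation child), in which the suppression removes one edge and one arc but simultaneously inserts a new arc; it is easy to miscount the net change to $|A|$ as $\pm 1$ rather than $0$. A secondary subtlety is to correctly exclude the "two outgoing arcs" sub-case that the definition explicitly does not suppress, and to verify that the preserved-degree observation really does rule out any cascade of further suppressions at the neighbours of $\rho$.
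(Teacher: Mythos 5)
Your proof is correct, but it takes a genuinely different route from the paper. The paper never leaves $N$: it deletes all arcs, invokes Theorem~\ref{thm:semi-directed-cycle} to see that $(V,E)$ is a forest, argues that in a rooting whose roots are vertices of $N$ each tree of this forest contains exactly one vertex that is either a root or a reticulation, and then reads off $|V|-|E|=|R|+k$ from the fact that a forest has (number of vertices) minus (number of edges) components. You instead work on the rooted side: an indegree double count in a $k$-rooted rooting $D$, followed by explicit bookkeeping of how the semi-deorientation transforms $|V|$, $|E|$, $|A|$ (non-reticulation arcs becoming edges, and each suppressed degree-2 root costing exactly one vertex and one edge while leaving $|A|$ and the reticulation set untouched). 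The two derivations meet at the same identity $|V|-|E|=|R|+k$, but yours does not use the forest/no-edge-cycle property at all, so it is independent of Theorem~\ref{thm:semi-directed-cycle}, and it handles arbitrary rootings (including roots subdividing edges or arcs) head-on, whereas the paper sidesteps the suppression analysis by restricting attention to rootings whose roots are vertices of $N$; the price is the case analysis of suppression, exactly where you correctly identified the danger of miscounting the mixed edge-plus-arc case. One small wording caveat: you should fix a rooting with exactly $k$ roots (the one guaranteed by the definition of a $k$-semi-directed network) rather than ``any rooting'', since the fact that every rooting has $k$ roots is only established afterwards, in Corollary~\ref{cor:roots}.
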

\begin{proof}

Consider the graph~$F$ obtained from~\nh{$N$} by deleting all arcs. Then~$F$ is a forest since, by Theorem~\ref{thm:semi-directed-cycle}, \nh{$N$} does not contain any cycles traversing only edges. In any rooting~\nh{$D$} of~\nh{$N$}, \nh{where we root only on vertices of~$N$}, each \kh{(connected)} component~$T$ of~$F$ is oriented as a rooted tree~$T^r$. The root of~$T^r$ is either a root of~\nh{$D$} or a reticulation \nh{of $D$}. Moreover, each vertex of~$T^r$ that is not the root of~$T^r$ is not a reticulation \nh{of $D$} and also not a root of~\nh{$D$}. Hence, each component of~$F$ contains exactly one vertex that is a root or a reticulation of~\nh{$D$}. It follows that the number of components of~$F$ is precisely $|R|+k$ \kh{where $R$ is the set of reticulations of $N$.}

In any forest, the number of components equals the number of vertices minus the number of edges. Therefore,
\[
|V|-|E| = |R|+k,
\]
implying that
\[
|A| + |V|-|E| = |A|+|R|+k,
\]
which can be rewritten as
\[
|A| - |R| = |E| + |A| -|V|+ k.
\]
Since~$|A|-|R|$ is the reticulation number \nh{of $N$}, the \kh{theorem} follows. 
\end{proof}

\kh{As an immediate consequence, we have the following result.}

\begin{corollary}\label{cor:roots}
    If~$N=(V,E,A)$ is a multi-semi-directed network, then all rootings of~\nh{$N$} have 
    $$|V|-|R|-|E|$$ 
    roots, with~$R$ the set of reticulations of~\nh{$N$}.
\end{corollary}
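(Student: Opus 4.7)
The plan is to obtain the corollary as a direct algebraic consequence of Theorem~\ref{thm:reticulationsroots}. Since $N$ is a multi-semi-directed network, any rooting $D$ of $N$ has some number $k\geq 1$ of roots, which makes $N$ a $k$-semi-directed network. First, I would apply Theorem~\ref{thm:reticulationsroots} to this value of $k$ to obtain the identity
\[
|A|-|R| \;=\; |E|+|A|-|V|+k,
\]
and then cancel the $|A|$ terms and solve for $k$, yielding $k = |V|-|R|-|E|$.

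The crucial observation is that the right-hand side $|V|-|R|-|E|$ depends solely on the underlying mixed graph $N$: the vertex set, the edge set, and the reticulation set $R$ (which is determined by the indegree function of $N$) are all intrinsic to $N$ and carry no dependence on the particular rooting $D$ chosen. Consequently, any two rootings of $N$ must share the same root count, and that count equals $|V|-|R|-|E|$. There is no genuine obstacle here, since Theorem~\ref{thm:reticulationsroots} already carries out all of the combinatorial work (via the forest argument applied to the edge-only subgraph of $N$); the corollary amounts to isolating $k$ on one side of the identity and then noting that the resulting expression is an invariant of $N$.
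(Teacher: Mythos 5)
Your proposal is correct and matches the paper's intent: the paper presents Corollary~\ref{cor:roots} as an immediate consequence of Theorem~\ref{thm:reticulationsroots}, obtained exactly as you do by noting that a rooting with $k$ roots exhibits $N$ as a $k$-semi-directed network, solving the identity $|A|-|R|=|E|+|A|-|V|+k$ for $k$, and observing that $|V|-|R|-|E|$ is intrinsic to $N$.
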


\nh{We now prove an auxiliary result that will be useful to characterize all possible rootings} \vm{of a multi-semi-directed network.}

\begin{lemma}\label{lem:rootings}
Let~$N$ be a multi-semi-directed network, $R$ the set of reticulations of~$N$, and~$U\subseteq V(N)$ such that there is no edge-path between any two vertices in~$U\cup R$. Then there exists a rooting of $N$ in which each vertex of $U$ is a root.
\end{lemma}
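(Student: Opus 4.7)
The plan is to construct a rooting $D$ of $N$ explicitly by orienting the undirected part of $N$ and keeping its arcs. Note first that the statement implicitly assumes $U\cap R=\emptyset$, since a reticulation has indegree at least $2$ in any rooting and therefore cannot be a root. Let $F=(V(N),E(N))$ be the spanning subgraph of $N$ consisting of the edges of $N$ only. By Theorem~\ref{thm:semi-directed-cycle}(II), $N$ has no semi-directed cycle, so in particular $F$ contains no cycle and hence is a forest. Since any two distinct vertices lying in the same component of $F$ are joined by an edge-path, the hypothesis on $U\cup R$ immediately gives that \emph{each connected component of $F$ contains at most one vertex of $U\cup R$}.

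Next I would define $D$ as follows. For each connected component $T$ of $F$, choose a \emph{designated root} $v_T$ by the priority: (i) if some $u\in U$ lies in $T$, set $v_T=u$; (ii) otherwise, if some $r\in R$ lies in $T$, set $v_T=r$; (iii) otherwise, let $v_T$ be any vertex of $T$. Then orient every edge of $T$ away from $v_T$, turning $T$ into a rooted tree with root $v_T$. The directed graph $D$ has the same vertex set as $N$, no edges, and its arcs are the original arcs of $N$ together with the newly oriented tree arcs.

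To complete the proof, I would verify that $D$ is a multi-rooted network, that its semi-deorientation equals $N$, and that each $u\in U$ is a root of $D$. Acyclicity: any directed cycle in $D$ is a sequence of arcs each of which is either an original arc of $N$ or the orientation of an edge of $N$, so it would be a semi-directed cycle of $N$, contradicting Theorem~\ref{thm:semi-directed-cycle}; the multi-rooted degree conditions follow from those of $N$ because underlying degrees are unchanged. For the semi-deorientation, the head $y$ of any newly oriented arc $(x,y)$ is not a reticulation (each component of $F$ contains at most one vertex of $R$ and, if present, it was selected as $v_T$), hence $d_D^-(y)=1$ and the arc is deoriented back to $\{x,y\}$; every original arc of $N$ enters a reticulation with $d_D^-(\cdot)\ge 2$ and remains an arc; and no vertex is suppressed because no vertex of $N$ has degree~$2$. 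Finally, each $u\in U$ is the designated root of its component, so no newly oriented arc enters $u$, and since $u\notin R$ no original arc enters $u$ either, giving $d_D^-(u)=0$ as required. The main obstacle is the semi-deorientation check: one must carefully confirm that the arbitrary case-(iii) roots of $D$ are not accidentally suppressed (they are not, since $N$ has no degree-$2$ vertex) and that the incoming arcs of each reticulation still give it indegree at least $2$ in $D$, which is what secures that those arcs stay directed in the semi-deorientation.
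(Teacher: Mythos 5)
Your construction is correct, and it takes a genuinely different route from the paper. The paper proves the lemma by induction on $|U|$: it starts from an arbitrary rooting $D$ of $N$ (which exists since $N$ is multi-semi-directed, with the vertices of $U\setminus\{u\}$ already roots by the induction hypothesis), observes that the edge-component $C_u$ containing $u$ is a source component and a tree, and then simply reverses the arcs of $D$ on the path from the current root $u'$ of $C_u$ to $u$; this keeps the verification light because one only perturbs an existing rooting locally. You instead build the rooting in one shot: orient each component of the edge-forest away from a designated root chosen with the priority $U$-vertex, then reticulation, then arbitrary, and keep the original arcs. The price is that you must verify from scratch that the result is a multi-rooted network whose semi-deorientation is $N$, and you do carry out the essential checks (acyclicity via the absence of semi-directed cycles; newly oriented arcs never enter a reticulation because a reticulation in a component is necessarily its designated root, so those arcs deorient back to edges; original arcs keep reticulation heads of indegree at least $2$; no degree-$2$ vertices exist, so nothing is suppressed; each $u\in U$ has indegree $0$). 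The payoff of your version is that it is non-inductive, does not presuppose a rooting, handles all of $U$ simultaneously, and matches the algorithmic picture sketched after Theorem~\ref{thm:allrootings}. Two small presentational points: the sentence claiming the multi-rooted degree conditions follow ``because underlying degrees are unchanged'' is too quick, since the indegree condition concerns indegrees, which your orientation does change --- the real justification is the one you give later (non-reticulations receive at most one incoming oriented edge, reticulations are component roots and keep indegree $d-1$), so you should move that argument forward; and in your case~(iii) it is cleaner to pick a non-leaf vertex of $T$ when one exists, so that leaves of $N$ do not needlessly become roots, although the paper's notion of rooting explicitly tolerates this.
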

\begin{proof} The proof is by induction on~$|U|$. The base case for~$U=\emptyset$ is trivial. If~$|U|\geq 1$, consider a vertex~$u\in U$. By induction, there exists a rooting~$D$ of~$N$ in which each vertex of~$U\setminus\{u\}$ is a root. Let~$C_u$ be the connected component of the graph obtained from~$N$ by deleting all arcs, such that~$C_u$ contains~$u$. First observe that~$u$ is the only vertex from~$U$ in~$C_u$ since otherwise there would be an edge-path between two vertices in~$U$. Also note that~$C_u$ is a source component since there is no edge-path between~$u$ and any reticulation. Moreover, by Theorem~\ref{thm:semi-directed-cycle}, $N$ contains no cycle consisting of only edges. Hence,~$C_u$ is a tree. Since \kh{no vertex of~$C_u$ is a reticulation in $C_u$}, it follows that~$C_u$ is a rooted tree in~$D$. The root~$u'$ of this tree is also a root of~$D$, since~$D$ contains no arcs 
\kh{whose head is a vertex in~$C_u$ but whose tail is not}. Hence, in~$D$, $C_u$ contains exactly one root~$u'$. 
\kh{If $u'=u$ then $u'\in U$ and $D$ is a rooting of $N$ in which $u'$ is a root. Hence, the lemma holds in this case.}

If~$u'\neq u$ \kh{then, since $C_u$ is a rooted tree in $D$,} we can modify the orientation of~$D$ to obtain an alternative rooting of~$N$ by changing \kh{only} the orientation of each arc on the path between~$u$ and~$u'$ in~$C_u$. This gives a rooting~$D'$ of~$N$ in which~$u\in U$ is a root. Moreover, all vertices in~$U\setminus\{u\}$ are still roots of~$D'$, which concludes the proof.
\end{proof}

For a rooting~\nh{$D$} of multi-semi-directed network~$N=(V,E,A)$, define the\remove{\emph{root locations}} \add{\emph{root configuration}}
as the triple $(V',E',A')$ with $V'\subseteq V$, $E'\subseteq E$ and~$A'\subseteq A$ such that the roots of~\nh{$D$} are precisely the vertices in~$V'$ together with vertices subdividing each edge in~$E'$ and arc in~$A'$. In the following theorem, we characterize the valid\remove{root locations} \add{root configurations} of a multi-semi-directed network. 
As noted in the introduction, a similar characterization was given in~\cite[Proposition~8]{maxfield2025dissimilarity}, although 
\leojuly{it was under a slightly different framework (e.g. assuming $A' = E' = \emptyset$).}



\begin{theorem}\label{thm:allrootings}
Let~$N=(V,E,A)$ be a multi-semi-directed network and~$V'\subseteq V$, $E'\subseteq E$ and~$A'\subseteq A$. Then there exists a rooting~\nh{$D$} of~$N$ with\remove{root locations} \add{root configuration} $(V',E',A')$ if and only if \begin{itemize}
    \item each vertex in~$V'$ and each edge in~$E'$ is in a source component of~$N$ and each arc in~$A'$ is an outgoing arc of a source component of~$N$; and
    \item each source component of~$N$ contains exactly one element of~$V'$, contains exactly one edge in~$E'$ or has exactly one outgoing arc in~$A'$.
\end{itemize}
\end{theorem}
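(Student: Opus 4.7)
The proof rests on the bijection between the roots of any rooting of $N$ and the source components of $N$. By Theorem~\ref{thm:semi-directed-cycle}(II) and (III), each connected component of the undirected graph $(V,E)$ is a tree containing at most one reticulation; in any rooting $D$, such a component becomes a rooted tree whose root is either the component's unique reticulation (if the component is non-source) or a root of $D$ itself (if the component is a source component, since source components contain no reticulation). This structural fact underlies both directions of the theorem.

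For the forward direction, let $D$ be a rooting of $N$ with root locations $(V', E', A')$. Each root $\rho$ of $D$ is the root of a rooted tree in a unique source component $C_\rho$. Inspecting the semi-deorientation rules, I distinguish three cases. If $\rho$ is not suppressed then $\rho \in V' \cap C_\rho$. If $\rho$ has degree two in $D$ with both outgoing arcs to non-reticulations, both arcs are de-oriented and suppression of $\rho$ yields an edge of $E'$ whose endpoints both lie in $C_\rho$. Finally, if $\rho$ has degree two with one outgoing arc to a non-reticulation and one to a reticulation, suppression yields an arc of $A'$ whose tail lies in $C_\rho$ and whose reticulation head lies outside every source component. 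In every case the root location is associated to the unique source component $C_\rho$, giving condition~(a). Condition~(b) then follows from counting: the bijection gives $|V'|+|E'|+|A'|$ equal to the number of source components, and each root location lies in (or is outgoing from) exactly one source component.

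For the backward direction, assume (a) and (b) hold. Condition~(b) assigns to each source component $C$ a unique designated location: a vertex $v_C \in V' \cap C$, an edge $e_C \in E' \cap E(C)$, or an outgoing arc $a_C \in A'$ with tail $u_C$ in $C$. I construct $D$ directly as follows. Subdivide each $e \in E'$ with a new vertex $w_e$ and each $a \in A'$ with a new vertex $w_a$; let $\rho_C$ denote the designated proto-root of $C$ (namely $v_C$, $w_{e_C}$, or $w_{a_C}$ according to type), and let $C^+$ denote $C$ possibly extended by $w_{a_C}$ via the new edge $\{u_C, w_{a_C}\}$ in the arc-type case. Since $C$ is a tree by Theorem~\ref{thm:semi-directed-cycle}(II), so is $C^+$; orient all of its undirected edges away from $\rho_C$, making $C^+$ into a rooted tree with root $\rho_C$. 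For each non-source component of $N$, which contains a unique reticulation $r$ by Theorem~\ref{thm:semi-directed-cycle}(III), orient its edges away from $r$. All reticulation arcs in $A \setminus A'$ remain unchanged.

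It remains to verify that the resulting mixed graph $D$ is a multi-rooted network and that its semi-deorientation recovers $N$ with root locations exactly $(V', E', A')$. Acyclicity follows because the subdivision vertices $w_e, w_a$ have no incoming arcs and thus cannot lie on a directed cycle, and any directed cycle in $D$ avoiding them would project to a semi-directed cycle in $N$, contradicting Theorem~\ref{thm:semi-directed-cycle}(II). The degree and indegree constraints for a multi-rooted network are met because the orientation assigns each non-root non-reticulation vertex exactly one incoming arc while preserving reticulation in-degrees. Semi-deorienting $D$ de-orients the arcs to non-reticulations and then suppresses the degree-two subdivision roots $w_e, w_a$, which by the suppression rules recovers the edges in $E'$ (via the two-edges rule) and the arcs in $A'$ (via the edge-plus-arc rule), respectively. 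The main obstacle I expect is the arc-type case: one must carefully handle the extension of $C$ by $w_{a_C}$, and check that after suppression the composite edge-and-arc around $w_{a_C}$ reproduces $a_C$ with the correct orientation and that the head of $a_C$ retains its status as a reticulation with the prescribed in-degree.
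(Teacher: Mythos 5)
Your backward direction is fine and is genuinely different from the paper's: the paper reduces to the case $E'=A'=\emptyset$ and then invokes Lemma~\ref{lem:rootings}, i.e.\ it modifies an existing rooting by re-orienting a path inside a source component, whereas you build the rooting directly by subdividing the designated edges/arcs, orienting each source component away from its proto-root and each non-source component away from its reticulation. Your checks of acyclicity, the degree conditions and the semi-deorientation are correct in outline. Two small repairs are needed there: you cite Theorem~\ref{thm:semi-directed-cycle}(III) for the ``unique reticulation'' of a non-source component, but (III) only gives uniqueness; existence needs condition (I) (the head $v$ of an arc entering the component has $d^-(v)=d(v)-1$ and $d(v)\neq 2$, hence $d^-(v)\geq 2$), and the parenthetical ``source components contain no reticulation'' needs (II) (an incoming arc with tail inside the component would close a semi-directed cycle).

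The forward direction, however, has a genuine gap. Everything rests on your opening ``structural fact'' --- that in any rooting each component of $(V,E)$ becomes a rooted tree whose root is the component's reticulation when the component is non-source and a root of $D$ when it is a source component --- and on the asserted ``bijection'' between roots of $D$ and source components. This is exactly the content of the only-if direction, and you never prove it: you do not show (i) why a root of $D$ (or a root subdividing an edge of, or an arc leaving, a component) cannot be associated with a non-source component, (ii) why two roots of $D$ cannot be associated with the same source component, nor (iii) why every source component receives at least one root location; yet all three are used when you ``count'' to deduce condition (b). Your three-case analysis only classifies the \emph{type} of root location each root produces under semi-deorientation; it says nothing about the relevant component being a source component. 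These facts do require an argument. For instance, the paper proves (i) by noting that the edge-path from the root location to the component's reticulation $r$ must be oriented towards $r$ in $D$, forcing $d_D^-(r)=d_D(r)$ and contradicting the definition of a multi-rooted network, and proves (ii) by the clash of orientations along the edge-path between the two root locations; an equivalent route is an indegree count on the oriented (subdivided) tree, using that no new reticulations can be created in a rooting, so the tree has exactly one source, which in a non-source component must be the reticulation. Without some such argument the forward direction is an assertion rather than a proof.
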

\begin{proof}
    We first prove the ``only if'' direction. Since~$N$ is multi-semi-directed, there exists a rooting~\nh{$D$} of~$N$. Let~$(V'',E'',A'')$ be its\remove{root locations} \add{root configuration}.
    
    First suppose that there is a~$v\in V''$ that is not in a source component of~$N$. Then there exists a reticulation~$r$ in the component~$C$ of the graph $(V,E)$ containing~$v$ with both incoming arcs of~$r$ having their tail outside~$C$. Note that~$v\neq r$. Hence, there exists an edge-path between~$r$ and~$v$ in~$N$. Since~$v$ is a root of~\nh{$D$}, this path is a directed path from~$v$ to~$r$ in~\nh{$D$}. However, this implies that 
    \nh{$d_D^-(r)=d_D(r)$},  contradicting the definition of a multi-rooted network. The other cases (that there is a $e\in E''$ that is not in a source component or an arc~$a\in A''$ that is not an outgoing arc of a source component) are handled similarly.

    Now suppose that there exists a source component~$C$ of~$N$ containing~$v,w\in V''$ with~$v\neq w$. Then there exists an edge-path between~$v$ and~$w$ of length at least~$2$ in~$N$. Since~$v$ is a root in~\nh{$D$}, this path is directed from~$v$ to~$w$ in~\nh{$D$}. However, since~$w$ is also a root in~\nh{$D$}, the path is directed from~$w$ to~$v$ in~\nh{$D$}, a contradiction. The other cases are again handled similarly. 

    We now prove the ``if'' direction. Since~$N$ is multi-semi-directed, there exists a rooting~\nh{$D$} of~$N$. Let~$(V'',E'',A'')$ be its\remove{root locations} \add{root configuration}.
    For an edge~$\{u,v\}$ of~$N$, clearly, there exists a rooting of~$N$ with root~$u$ if and only if there exists a rooting with root~$w$ subdividing~$\{u,v\}$. Similarly, for an arc~$(u,v)$ of~$N$, clearly, there exists a rooting of~$N$ with root~$u$ if and only if there exists a rooting with root~$w$ subdividing~$(u,v)$. Hence, we may assume $E'=A'=E''=A''=\emptyset$. Let~$R$ be the set of reticulations of~$N$. Since each element of~$V'$ is in a different source component, there is no edge-path between any two vertices in~$R\cup V'$.
    Then, by Lemma~\ref{lem:rootings}, it follows that there exists a rooting of~$N$ with\remove{root locations} \add{root configuration}~$(V',E',A')$.  
\end{proof}

Theorem~\ref{thm:allrootings} directly leads to an efficient algorithm \add{taking $O(|V| + |E| + |A|)$ time} for deciding if an arbitrary mixed graph~\remove{$G$}\add{$G = (V, E, A)$} is a (multi-)semi-directed network
(note that in \cite[Remark~1]{maxfield2025dissimilarity} a similar algorithm is sketched to efficiently find a rooting of a multi-semi-directed network).
First, find all the source components of~$G$ \add{by traversing the graph in $O(|V| + |E| + |A|)$ time}. Second, in each source component, pick an arbitrary vertex as root. Third, do a breath-first search from 
each root and orient all edges away from the root\add{, again taking $O(|V| + |E| + |A|)$ time}. Finally, \add{with yet another traversal,} check if the resulting mixed graph~$G'$ is a multi-rooted directed network 
and whether its semi-deorientation is~$G$. Note that by additionally checking whether or not~$G'$ has a single root, it can also be determined if~$G$ is semi-directed.



\section{Omnians}\label{sec:omnians}

In this section, we show how the concept of omnians can 
be used to characterize tree-child (Theorem~\ref{thm:stronglyTC} and  Proposition~\ref{prop:weaklyTC}) 
and tree-based multi-semi-directed networks (Theorem~\ref{thm:strongTB}).
As remarked in the introduction, \cite[Proposition~11]{maxfield2025dissimilarity} 
gives an alternative characterization for the tree-child case where, instead of using omnians 
the concept of the  ``directed part'' of a network \kh{is used}.

\leojuly{The main definitions of this section are the following. We say that a vertex~$v$ of a multi-semi-directed network is an \emph{omnian} if $d^+(v)\geq 1$ and $d^e(v)\leq 1$, see Figure~\ref{fig:omnians}.}
\nh{We call a multi-rooted network $D$} \emph{tree-child} if each non-leaf vertex \nh{of $D$} has at least one child that is not a reticulation.
Furthermore, we say that a multi-semi-directed network $N$ is \emph{weakly tree-child} if~$N$ has a rooting that is tree-child. \kh{In this case, we also call such a rooting of $N$ a {\em tree-child rooting of~$N$.}}  \kh{Finally, we say} that~$N$ is \emph{strongly tree-child} if every rooting of~$N$ is tree-child. See Figure~\ref{fig:tree-child} for examples.

\begin{figure}
    \centering
    \includegraphics[]{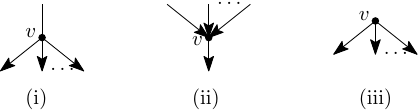}
    \caption{\label{fig:omnians} \leojuly{The three possible\remove{shapes of} \add{configurations surrounding} an omnian in a multi-semi-directed network: (i) one incident edge and at least two outgoing arcs, (ii) at least two incoming arcs and one outgoing arc and (iii) at least two outgoing arcs (and in all cases no other incident edge/arcs). Note that Case (iii) is possible in multi-semi-directed networks but not in semi-directed networks.}}
\end{figure}

\begin{figure}[htb]
    \centerline{\includegraphics[scale=.8]{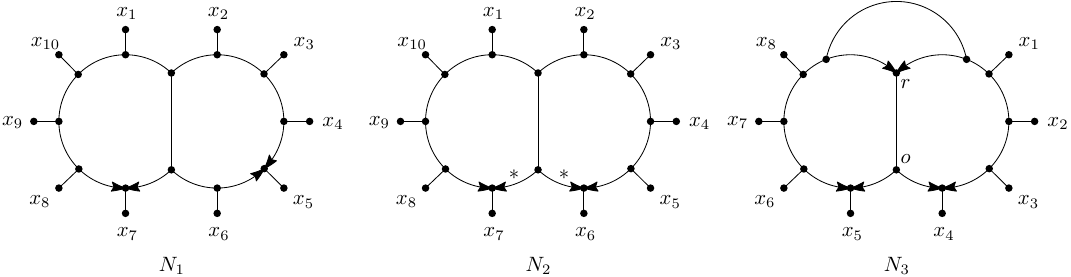}}
    \caption{\label{fig:tree-child} 
    \nh{\kh{The} semi-directed network~$N_1$ \kh{on $X=\{x_1,\ldots, x_{10}\}$}
    is strongly tree-child \kh{in view of Theorem~\ref{thm:stronglyTC}}. \kh{The} semi-directed network~$N_2$ \kh{also on $X$} is weakly tree-child, 
    but not strongly, since the rootings with the root subdividing an arc labelled $*$ are tree-child, but all other rootings are not tree-child (see Corollary~\ref{cor:wtc_sdn}). \kh{The} semi-directed network~$N_3$  \kh{on $\{x_1\ldots, x_8 \}$} is not weakly tree-child since the edge~$\{r,o\}$ forms an edge-path between a reticulation~ \kh{marked} $r$ and an omnian \kh{marked}~$o$ (see Corollary~\ref{cor:wtc_sdn}).}}
\end{figure}

\begin{theorem}\label{thm:stronglyTC}
A multi-semi-directed network~$N$ is strongly tree-child if and only if~$N$ has no omnians.
\end{theorem}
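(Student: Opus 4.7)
My plan is to prove both directions by case analysis, relying on Theorem~\ref{thm:allrootings}, Property~(III), and one key observation: the set of reticulations of $N$ is preserved in every rooting $D$, since a vertex has $d_{D}^{-}\geq 2$ if and only if it has $d_{N}^{-}\geq 2$ (consistency of the semi-deorientation). Two easy consequences follow: at a non-reticulation of $N$, at most one incident edge can be oriented inward in $D$; and at a reticulation of $N$, every incident edge is forced to be oriented outward. Combined with Property~(II), this also yields that each edge-connected component $C$ of $N$ is a tree, and by Property~(III) it contains at most one reticulation.

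For the forward direction I argue the contrapositive: given an omnian $v$, I construct a rooting in which every child of $v$ is a reticulation, treating the three shapes in Figure~\ref{fig:omnians} separately. Shape~(ii) is immediate, since $v$'s unique outgoing arc ends at a reticulation of $N$ in every rooting. Shape~(iii) is also direct: the singleton $\{v\}$ is its own source component, so in every rooting $v$ is a root (or an outgoing arc is subdivided), and the remaining children are all heads of outgoing arcs, hence reticulations. For shape~(i), where $v$ has a unique incident edge $\{v,u\}$ and $d^{+}(v)\geq 2$, the task reduces to realizing an orientation of this edge as $(u,v)$. Let $C$ be the edge-component of $v$. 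If $C$ contains a reticulation $r^{*}$, then the edge at $r^{*}$ is forced outward from $r^{*}$ and the orientation propagates along the tree $C$; since $v$'s only edge is to $u$, the unique path in $C$ from $v$ to $r^{*}$ passes through $u$, forcing $\{v,u\}$ to be $(u,v)$. If $C$ has no reticulation, then $C$ is a source component and Theorem~\ref{thm:allrootings} allows the root of $C$ to be placed at $u$, again giving $(u,v)$.

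For the backward direction I assume $N$ has no omnians, let $D$ be an arbitrary rooting, and let $x$ be any non-leaf vertex of $D$; I exhibit a non-reticulation child of $x$. If $x$ is a subdivision vertex then $d_{D}^{+}(x)=2$ and its two children are the endpoints of a subdivided edge or arc of $N$: Property~(III) rules out both endpoints of a subdivided edge being reticulations, and the absence of shape~(ii) omnians prevents the tail of a subdivided arc from being a reticulation. If $x\in V(N)$ is a reticulation, then $d^{+}(x)+d^{e}(x)=1$ combined with the non-omnian hypothesis forces $d^{+}(x)=0$ and $d^{e}(x)=1$; the unique edge must be oriented outward from $x$, and its other endpoint is non-reticulation by (III). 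If $x\in V(N)$ is a non-reticulation, non-leafness together with the non-omnian hypothesis gives $d^{e}(x)\geq 2$ (in fact $\geq 3$ when $d^{+}(x)=0$, by Property~(I)); using that $x$ has at most one inward edge and at most one reticulation edge-neighbour, a short case-analysis on the orientation of $C$ (forced by the reticulation of $C$ if one exists, or chosen freely when $C$ is a source component) shows that at least one outward edge at $x$ must lead to a non-reticulation.

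The main obstacle is the shape~(i) case of the forward direction: ensuring the desired orientation $(u,v)$ is actually realised in some valid rooting. This requires coordinating the forced propagation of orientations emanating from any reticulation in $C$ with the freedom granted by Theorem~\ref{thm:allrootings} when $C$ has no reticulation. The backward analysis for non-reticulation $x\in V(N)$ is symmetric and uses the same toolkit in reverse.
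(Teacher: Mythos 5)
Your proof is correct in substance but takes a genuinely different route from the paper, most visibly in the forward direction. The paper starts from a tree-child rooting (which exists because $N$ is strongly tree-child), deduces that the omnian's unique incident edge must be oriented away from it, and then perturbs that rooting locally (re-orienting that edge and moving the root) to manufacture a non-tree-child rooting, a contradiction. You instead build a bad rooting from scratch, using that every edge-component is a tree containing at most one reticulation, that a reticulation forces the edges of its component to be oriented away from it, and Theorem~\ref{thm:allrootings} to place the root at $u$ when the component is a source component. This is closer in spirit to the paper's treatment of the weakly tree-child case (Lemma~\ref{lem:rootings} and Proposition~\ref{prop:weaklyTC}), and it buys a little more: in shapes (ii) and (iii), and in shape (i) when the edge-component contains a reticulation, you actually show that \emph{no} rooting is tree-child, not merely that some rooting fails. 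Your backward direction is essentially the paper's degree-counting argument carried out in more detail (you also treat the subdivision-root case, which the paper passes over silently).

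Two loose ends, both repairable in a line. First, Figure~\ref{fig:omnians} is not quite exhaustive: a vertex $v$ with $d(v)=d^+(v)=1$ (an outdegree-one ``root-like'' vertex whose single arc points to a reticulation, e.g.\ the vertex $e$ in Figure~\ref{fig:defs}) is an omnian of none of the three depicted shapes, so your case analysis misses it. Your singleton-source-component argument still covers it, but only if you choose the rooting in which $v$ itself is the root: in the rooting that subdivides its lone arc, $v$ becomes a leaf of $D$ and that rooting may well be tree-child, so the ``in every rooting'' phrasing of your shape~(iii) argument is false for this degenerate shape — you only need (and get) the existence of one bad rooting. Second, in the backward direction the inference ``non-leafness of $x$ in $D$ plus no omnians gives $d^e(x)\geq 2$'' fails when $x$ is a leaf of $N$ that serves as a root of $D$ (then $d^e(x)=1$, $d^+(x)=0$); the conclusion survives, since the unique edge at $x$ is then oriented away from $x$ and its other endpoint cannot be a reticulation (a reticulation's unique non-arc incidence is forced away from the reticulation), so $x$ has a non-reticulation child. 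With these two patches your argument is complete.
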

\begin{proof}
First assume that~$N$ is strongly tree-child. Suppose that there exists a vertex~$v\in V$ that is an omnian, i.e., $d^+(v)\geq 1$ and $d^e(v)\leq 1$. If~$d^e(v)=0$ then all children of~$v$ are reticulations in any rooting of~$N$ and~$v$ is not a leaf since $d^+(v)> 0$. This would contradict that~$N$ is strongly tree-child. Hence, $d^e(v)=1$. By Theorem~\ref{thm:sink_component} (Property (Ci)), $d^-(v)\in\{0,d(v)-1\}$. Since $d^-(v)=d(v)-d^+(v)-d^e(v) \leq d(v) -1 - 1$, this means that $d^-(v)=0$ and thus $d^+(v)=d(v)-1$. 

Let~$e=\{u,v\}$ be the edge \kh{in $N$} incident to~$v$. Observe that all neighbours of~$v$ other than~$u$ are reticulations. Let~$D$ be a tree-child rooting of~$N$. Since~$v$ has at least one child in~$D$ that is not a reticulation, 
the edge~$e$ is oriented away from~$v$ in~$D$. This means that there are two possibilities. Either~$v$ is a root of~$D$, or \kh{there exists a child $a$ of $v$ in $N$ such that the arc $(v,a)$ of $N$ leaving}
$v$ has been subdivided by a root~$w$ \kh{in $D$} and~$D$ contains the arcs~$(w,v)$ \kh{and $(w,a)$.} In the first case, changing the orientation of the arc~$(v,u)$ to~$(u,v)$ (making~$u$ a root) gives a rooting~$D'$ of~$N$ that is not tree-child. \leojuly{Also in the second case,~$D'$ 
is a rooting} of~$N$ that is not tree-child. In both cases, we obtain a contradiction, completing the first direction of the proof.

For the \kh{converse}, \kh{suppose} that \kh{$N$ has no omnians, that is,}~$d^e(v)\geq 2$ or~$d^+(v)=0$ for all~$v\in V$. \kh{Assume} that~$N$ is not strongly tree-child, i.e., there exists a rooting~$D$ of~$N$ and a non-leaf vertex~$v$ of~$D$ such that all children of~$v$ in~$D$ are reticulations. If~$v$ is a reticulation too, then~$d^e_N(v)=0$ and~$d^+_N(v)=1$. If~$v$ is not a reticulation, then~$d^e_N(v)\leq 1$ and~$d^+_N(v)\geq 2$. In both cases, we obtain a contradiction.
\end{proof}




\begin{proposition}\label{prop:weaklyTC}
Let $N=(V,E,A)$ be a $k$-semi-directed network, some $k\geq 1$, with set of reticulations~$R$ and set of omnians~$O$. \nh{Then, $N$ is weakly tree-child if and only if $|O| \leq k$, $d^e(v)\geq 1$ for all~$v\in V$, and there does not exist a \nhh{non-trivial} edge-path between any two vertices in~$O\cup R$.}
\end{proposition}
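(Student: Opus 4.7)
The plan is to prove both directions using Theorem~\ref{thm:allrootings}, Corollary~\ref{cor:roots}, and a careful case analysis of how edge orientations at omnians and reticulations interact with the tree-child constraint.

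For the necessity direction, I would fix a tree-child rooting $D$ of $N$ and establish each of the three conditions in turn. For $d^e(v)\geq 1$: if $v$ had $d^e(v)=0$, then every incidence of $v$ in $N$ is an arc, and every child of $v$ in $D$ coming from an outgoing arc of $N$ is a reticulation; a case analysis on the omnian type together with the rooted-network constraints $d(v)\neq 2$ and $d^-(v)\in\{0,d(v)-1\}$ shows that no choice of subdivisions of incident arcs by new roots can simultaneously produce a valid rooting and avoid $v$ violating tree-child. For the edge-path condition, let $P=(u_1,\ldots,u_m)$ be a non-trivial edge-path with $u_1,u_m\in O\cup R$; the case $u_1,u_m\in R$ is excluded by condition~(III) of Theorem~\ref{thm:semi-directed-cycle}, and in the remaining cases the endpoints are case-(i) omnians or reticulations. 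Each such endpoint forces its edge on $P$ to be oriented away from itself by tree-child: for a case-(i) omnian this is the only orientation supplying a non-reticulation child, and for a reticulation with $d^e=1$ the reverse orientation would leave it with outdegree~$0$ in $D$. These two outward orientations must meet at some interior $u_j$, giving $u_j$ two incoming edges on $P$ and hence making $u_j$ a reticulation in $D$ (equivalently in $N$, since the two sets of reticulations coincide); but $u_j$ interior on $P$ has $d^e_N(u_j)\geq 2$, contradicting the reticulation constraint $d^+(u_j)+d^e(u_j)=1$. Finally, the first two conditions together imply each omnian lies in its own edge-component containing no reticulation, hence in its own source component, so Corollary~\ref{cor:roots} yields $|O|\leq k$.

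For the sufficiency direction, I would construct a tree-child rooting explicitly. Put each omnian into $V'$; by the edge-path hypothesis and $d^e\geq 1$ these are in distinct source components. For each remaining source component, place one arbitrarily chosen vertex into $V'$. By Theorem~\ref{thm:allrootings} this gives a valid rooting with exactly $k$ root locations and no edge or arc subdivisions. The remaining edges are then oriented so that each reticulation's incident edge leaves the reticulation (forced by the rooted-network constraint), each omnian's incident edge leaves the omnian (since the omnian is now a root), and the other edges in each source component are oriented as a rooted tree away from the chosen root. To verify tree-child: each omnian obtains its non-reticulation edge-endpoint as a child; each reticulation likewise; each internal vertex of the rooted tree in a source component obtains a non-reticulation child through the tree; and each leaf of the rooted tree in a source component is, by the absence of other omnians and reticulations in that component, forced to have $d^+_N=0$ and therefore becomes a leaf of $D$.

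The main obstacle I expect is the edge-path argument: one must enumerate all the ways an endpoint of $P$ might avoid the forced outward orientation by subdividing either its path-edge or one of its outgoing arcs with a new root, and verify that each such modification still produces either a peak along $P$ or a vertex of invalid type in the rooting.
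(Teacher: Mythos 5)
Your proposal follows essentially the same route as the paper's proof: necessity via the forced orientations at omnians and reticulations (in a tree-child rooting the unique incident edge of such a vertex must be oriented away from it, so a non-trivial edge-path between two members of $O\cup R$ would force an interior vertex to become a reticulation of the rooting, which is impossible since rootings create no new reticulations), and sufficiency via a rooting in which every omnian is a root, followed by a tree-child check; the paper obtains that rooting from Lemma~\ref{lem:rootings}, and your use of Theorem~\ref{thm:allrootings} with $V'$ equal to $O$ plus one vertex per remaining source component is the same machinery. The one genuine variation is your treatment of $|O|\leq k$: the paper reads it off the tree-child rooting (each omnian is a root or has an outgoing arc subdivided by a root), whereas you deduce it from the other two conditions by counting edge-components, using that $(V,E)$ is a forest, that distinct reticulations and distinct omnians lie in distinct components, and Corollary~\ref{cor:roots}. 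That counting argument is valid and has the added interest of showing that $|O|\leq k$ is already implied by the other two conditions.

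Two points to tighten. First, the ``main obstacle'' you flag resolves exactly as in the paper's one-line arguments: a root subdividing the path-edge or an outgoing arc of an endpoint only adds an incoming arc to that endpoint; for a reticulation this gives $d^-=d$, which no multi-rooted network allows, and for an omnian with one incident edge it leaves only reticulation children (and if all of its at least two outgoing arcs were subdivided it would acquire two incoming arcs, i.e.\ become a new reticulation). Hence the path-edges at both endpoints are genuinely oriented outward, and the same observation handles a subdivided interior edge of $P$, since the interior vertex still receives two incoming arcs. Second, your tree-child verification in the sufficiency direction only inspects vertices of source components; vertices lying in an edge-component that contains a reticulation must also be checked. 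The same argument covers them: by Theorem~\ref{thm:semi-directed-cycle} such a component is a tree containing exactly one member of $O\cup R$ (its reticulation), the rooting orients it away from that reticulation, so every internal vertex has a non-reticulation child along an edge, and any vertex of edge-degree one other than the reticulation has no outgoing arc (otherwise it would be an omnian joined to the reticulation by a non-trivial edge-path, contradicting the hypothesis) and is therefore a leaf of the rooting. With these additions your argument matches the paper's proof in substance.
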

\begin{proof}
Suppose that~$N$ is weakly tree-child. Then there exists a tree-child rooting~$D$ of~$N$. For each vertex~$u\in O$, either~$u$ is a root in~$D$ or one of the outgoing arcs of~$u$ is subdivided by a vertex that is a root in~$D$. Hence, $|O|\leq k$. \nhh{For the second condition, assume for contradiction that there exists a vertex $v\in V$ with $d_N^e(v) = 0$. Then, \leojuly{by Theorem~\ref{thm:semi-directed-cycle}}, $d_N^+(v) \geq 1$ and in any rooting~$D$ of $N$, every child of $v$ is a reticulation. Thus, $D$ is not tree-child; a contradiction. Lastly, to prove the third condition,}
\kh{assume for contradiction} that there exists a \nhh{non-trivial} edge-path \kh{$P$} in~$N$ between two vertices in~$O\cup R$, say between~$s$ and~$t$. \leojuly{In a tree-child rooting, an edge incident to a reticulation must be oriented away from the reticulation and, similarly, an edge incident to an omnian must be oriented away from the omnian. Hence, the edge of~$P$ incident to~$s$ is oriented away from~$s$ in~$D$. Similarly, the edge of~$P$ incident to~$t$ is oriented away from~$t$ in~$D$.} 
\kh{Therefore, $P$ must contain a reticulation in $D$ and hence in~$N$;} a contradiction since no internal vertex of an edge-path can be a reticulation.


\nh{For the other implication, suppose that $|O|\leq k$}, \kh{that} \leomay{$d^e(v)\geq 1$ for all~$v\in V$}, and that there does not exist a \nhh{non-trivial} edge-path between any two vertices in~$O\cup R$. By Lemma~\ref{lem:rootings}, there exists a rooting~$D$ of~$N$ such that each vertex in~$O$ is a root in~$D$. Suppose that~$D$ is not tree-child. Let~$v$ be a non-leaf vertex of~$D$ all whose children in~$D$ are reticulations. Then, we obtain a contradiction since then either $d^e_N(v)=0$ or~$v$ is an omnian of~$N$ that is not a root of~$D$.
\end{proof}


\leojuly{We can simplify the characterization in \kh{Proposition~\ref{prop:weaklyTC}} for semi-directed networks as follows, using that, in a (multi-)semi-directed network, $d^e (v) \geq 1$ \kh{holds} for any non-omnian vertex $v$ and there is no \kh{non-trivial} edge-path between two reticulations (Theorem~\ref{thm:semi-directed-cycle}).}
\begin{corollary}\label{cor:wtc_sdn}
    \nhh{A semi-directed network~$N$ is weakly tree-child if and only if $N$ has at most one omnian and if there is an omnian~$o$, then $d^e (o) \geq 1$ and there is no \nhh{non-trivial} edge-path \leojuly{between} $o$ \leojuly{and} a reticulation~$r$ of $N$.}
\end{corollary}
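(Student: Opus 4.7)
The plan is to derive this as a specialisation of Proposition~\ref{prop:weaklyTC} to the case $k=1$, using two facts specific to (multi-)semi-directed networks: (i) every non-omnian vertex $v$ of such a network automatically satisfies $d^e(v) \geq 1$, and (ii) by Property~(III) of Theorem~\ref{thm:semi-directed-cycle}, there is no non-trivial edge-path between any two reticulations. Once these are in hand, each of the three hypotheses of Proposition~\ref{prop:weaklyTC} collapses directly to one clause of the corollary's statement.

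For the forward implication, I would invoke Proposition~\ref{prop:weaklyTC} with $k = 1$ and simply read off its three conditions: $|O| \leq 1$ gives that $N$ has at most one omnian; $d^e(v) \geq 1$ for all $v$ yields $d^e(o) \geq 1$ whenever an omnian $o$ exists; and the absence of a non-trivial edge-path between two vertices of $O \cup R$ in particular forbids such a path between $o$ and any reticulation $r$.

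For the backward implication, I would verify the three hypotheses of Proposition~\ref{prop:weaklyTC}. The first, $|O| \leq k = 1$, is immediate. For the second, I would first establish fact~(i) by a short case analysis: a non-omnian either has $d^e(v) \geq 2$ (trivially giving $d^e(v) \geq 1$) or has $d^+(v) = 0$, and in the latter case the condition $d^-(v) \in \{0, d(v)-1\}$ from Theorem~\ref{thm:semi-directed-cycle} forces $d(v) = 0$ whenever $d^e(v) = 0$; but this cannot occur in a semi-directed network on at least two vertices, since such a network is connected by Property~(3) of Corollary~\ref{cor:semi-directed}. Combined with the assumed $d^e(o) \geq 1$ for the omnian (if any), this covers all vertices. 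For the third hypothesis, I would split the pairs in $O \cup R$ into three kinds: two reticulations (ruled out by~(ii)), two omnians (vacuous since $|O| \leq 1$), and the unique omnian paired with a reticulation (ruled out by the corollary's assumption).

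The only mildly delicate step is establishing the auxiliary fact~(i); everything else is a direct bookkeeping exercise on top of Proposition~\ref{prop:weaklyTC}.
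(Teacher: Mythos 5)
Your proposal is correct and follows essentially the same route as the paper, which states the corollary as a direct specialisation of Proposition~\ref{prop:weaklyTC} (with $k=1$) using precisely your two facts: that $d^e(v)\geq 1$ holds automatically for non-omnian vertices and that Theorem~\ref{thm:semi-directed-cycle} excludes non-trivial edge-paths between two reticulations. Your short case analysis justifying fact~(i) (using the degree condition and connectedness of a semi-directed network) simply fills in detail the paper leaves implicit.
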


\kh{We next turn our attention to tree-based multi-semi-directed networks.}
A \emph{spanning tree} of a mixed graph~$G$ is a subtree of~$G$ that contains all vertices of~$G$ and is a tree. 
A multi-rooted network~$N$ is \emph{tree-based} if it has a rooted spanning tree that has the same leaf set as~$N$\remove{(see~[??])}. \add{Clearly, a multi-rooted network that is tree-based must be a rooted network, since the corresponding spanning tree must have a single root.} \add{Note that binary rooted tree-based networks were introduced in~\cite{francis2015which} and extended to non-binary rooted networks in~\cite{jetten2016nonbinary}, who also defined the stricter notion of strictly tree-basedness, which we do not consider here.} A multi-semi-directed network~$N$ is \emph{weakly tree-based} if~$N$ has a rooting that is tree-based and~$N$ is \emph{strongly tree-based} if every rooting of~$N$ is tree-based, see Figure~\ref{fig:tree-based} for an example.\remove{Clearly} \add{Since every multi-rooted tree-based network is a rooted network}, all multi-semi-directed networks that are weakly \add{or strongly} tree-based are semi-directed networks. Hence, we will focus on semi-directed networks \kh{for the remainder of} this section.

\begin{figure}[htb]
    \centering
    \includegraphics[]{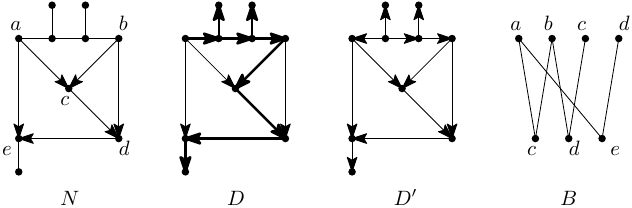}
    \caption{\label{fig:tree-based} A semi-directed network~$N$ that is weakly tree-based but not strongly tree-based, along with a rooting~$D$ \kh{for it} that is tree-based (a spanning tree with leaf set $L(D)$ is indicated in bold) and a rooting~$D'$ \kh{for it} that is not tree-based. \emph{Far right:} The bipartite graph~$B$ used in the proof of Theorem~\ref{thm:strongTB}.}
\end{figure}



\kh{To state our next result, we define for a multi-semi-directed network $N=(V,E,A)$ and a subset $S$ of the 
set of omnians of $N$ the \leojuly{set}} 
$$
\delta^+(S):=\{t\in V\mid (s,t)\in A \mbox{ for some } s\in S\}.
$$ 
 \leojuly{Hence}, $\delta^+(S)$ \leojuly{contains} the 
 vertices of $N$ that are the head of an arc that starts at an omnian 
 contained in~$S$. 
 \nh{Note that by definition, $\delta^+ (S)$ may contain vertices that are also in~$S$.} 

\begin{theorem}\label{thm:strongTB}
Let~$N$ be a 
semi-directed network with $O$ its set of omnians. Then~$N$ is strongly tree-based if and only if for each $S\subseteq O$ we have that $|\delta^+(S)|\geq |S|$.
\end{theorem}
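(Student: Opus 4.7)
The plan is to reduce the claim to Hall's marriage theorem applied to the bipartite graph $B$ with parts $O$ and $\delta^+(O)$, whose edges are the arcs of $N$ from omnians to their arc-children. Since every arc of $N$ has a reticulation as its head, $\delta^+(O)$ is a subset of the reticulations of $N$, and the condition $|\delta^+(S)|\geq|S|$ for all $S\subseteq O$ is exactly Hall's condition for $B$ to have an $O$-saturating matching. I will combine this with the standard characterization (in the spirit of Jetten--van Iersel) that a rooted network $D$ is tree-based if and only if its set $O_D$ of omnians in $D$ (non-leaf vertices all of whose children in $D$ are reticulations) admits an $O_D$-saturating matching to the reticulations $R_D$ sending each $v\in O_D$ to a child of $v$ in $D$; equivalently, $|N_D^+(S')|\geq|S'|$ for every $S'\subseteq O_D$, where $N_D^+$ denotes the set of children in $D$.

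The key structural step is to show that for any rooting $D$ of $N$ we have $O_D\subseteq O$ and that for every $v\in O_D$, the children of $v$ in $D$ are exactly $\delta^+(\{v\})$. This uses two facts. First, edges of $N$ cannot connect two reticulations: a reticulation with an incident edge has $d^+ + d^e=1$, so that edge must be oriented away from it in any rooting, and two adjacent reticulations would force inconsistent orientations. Consequently any vertex $v$ with $d^e_N(v)\geq 2$ acquires at least one non-reticulation edge-child in $D$, forcing $v\notin O_D$. Second, subdivision vertices introduced to place roots are never in $O_D$: for an edge subdivision the two endpoints cannot both be reticulations, and for an arc subdivision the arc originates from a source component, whose vertices are non-reticulations. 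The backward direction of the theorem is then immediate: if $|\delta^+(S)|\geq|S|$ for all $S\subseteq O$, then for any rooting $D$ Hall's condition holds on $O_D\subseteq O$, so $D$ is tree-based and $N$ is strongly tree-based.

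For the forward direction I argue by contrapositive: if $|\delta^+(S)|<|S|$ for some $S\subseteq O$, I construct a rooting $D$ with $S\subseteq O_D$, producing a Hall violation on $O_D$ and hence showing $D$ is not tree-based. The construction orients edges so that each $v\in S$ with $d^e_N(v)=1$ has its edge oriented into $v$: for such $v$ lying in a non-source edge-tree this is automatic, since the root of that tree in $D$ must be a reticulation leaf of the tree (a non-reticulation $v$ cannot be this root); for such $v$ in the unique source component $C$, we choose a root location for $C$ using Theorem~\ref{thm:allrootings}, either picking a vertex of $C$ not in $S$, or, when every vertex of $C$ is in $S$ (so $C=\{v_1,v_2\}$ joined by a single edge, since every vertex would have $d^e=1$ and a tree with all leaves has at most two vertices), by subdividing the edge $\{v_1,v_2\}$ to place the root on it, after which both $v_1$ and $v_2$ have indegree $1$ from the subdivision vertex and all outgoing arcs to reticulations, so both lie in $O_D$. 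The main obstacle is exactly this edge-subdivision case; it is handled by invoking the full flexibility of root placement in Theorem~\ref{thm:allrootings}, which permits roots to be placed on edges of source components and not only on vertices, allowing both $v_1$ and $v_2$ to be admitted into $O_D$ simultaneously.
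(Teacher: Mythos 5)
Your proposal runs on the same engine as the paper's proof---Hall's marriage theorem for the bipartite graph joining omnians to the reticulations they point at---but it is organized differently: you transfer everything to an arbitrary rooting $D$ via the rooted-network (Jetten--van Iersel style) characterization ``$D$ is tree-based iff there is a matching saturating the omnians $O_D$ of $D$'', using the structural claim that $O_D\subseteq O$ with reticulation children preserved, and you prove necessity contrapositively by exhibiting, for a Hall violator $S\subseteq O$, a rooting with $S\subseteq O_D$ (choosing the root in the source component away from $S$, or on the edge of a two-vertex source component). The paper instead extracts a matching from a base tree of one suitably adjusted tree-based rooting (necessity) and, for sufficiency, builds a base tree by hand inside an arbitrary rooting, explicitly rerouting through the root when a matched arc $(o,r)$ has been subdivided. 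Your explicit bad-rooting construction for the forward direction is correct and, if anything, makes clearer which rootings witness failure.

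There is, however, one genuine gap: the claim that for every rooting $D$ and every $v\in O_D$ the children of $v$ in $D$ are \emph{exactly} $\delta^+(\{v\})$. Your two supporting facts only rule out subdivision vertices lying in $O_D$; they do not exclude the scenario in which the single root subdivides an outgoing arc $(v,r)$ of $N$ while $v$ itself remains in $O_D$. In that case $r$ is no longer a child of $v$ in $D$, $N_D^+(S')$ can be strictly smaller than $\delta^+(S')$, and the transfer of Hall's condition from $N$ to $D$ in your backward direction no longer follows as written (this is exactly the case the paper handles by putting the arc $(\rho,r)$ into its base tree). The scenario is in fact vacuous for semi-directed networks, but that needs an argument: by Theorem~\ref{thm:allrootings} the subdivided arc leaves the unique source component, so $v$ lies in it and is not a reticulation; since $v$ already receives the arc from the root, any incident edge of $v$ would be oriented away from $v$ onto a non-reticulation endpoint (an edge at a reticulation is always oriented away from the reticulation), expelling $v$ from $O_D$; hence $d^e(v)=d^-(v)=0$ and the source component is the single vertex $v$. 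Then every reticulation has a parent outside the source component, so every non-source edge-component receives an arc from another non-source component (an arc whose tail and head lie in the same component would close a semi-directed cycle with an edge-path), and following these in-arcs yields a directed cycle of components and hence a semi-directed cycle, contradicting Theorem~\ref{thm:semi-directed-cycle}. This is the paper's remark that omnians with no incident edge and indegree zero do not occur in semi-directed networks, and the same fact is what excludes the $|C|=1$ subcase in your source-component analysis. Once you add this observation (or imitate the paper's rerouting trick instead of insisting on exact equality of child sets), your argument is complete.
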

\begin{proof}
    \nh{Let $R$ be the set of reticulations of $N$.} We follow a similar approach as in~\cite{jetten2016nonbinary} for directed networks. \kh{More precisely, we first associate}
a bipartite graph~$B=(V',E')$ \kh{to $N$ that has} vertex set~$V'=R'\cup O'$, with~$R'$ containing a copy of each reticulation in~$R$ and $O'$ containing a copy of each omnian in~$O$. Hence, a vertex \kh{of $N$} that is a reticulation as well as an omnian has two corresponding vertices in~$V'$. \kh{For $r\in R'$ and~$o\in O'$, we define} 
$\{r,o\}$ \kh{to be an edge in $E'$} if
$(o,r)$ \kh{is an arc} in~$N$.
\kh{Note that,} by Hall's marriage theorem~\cite{hall},~$B$ has a matching \kh{that covers}~$O$ if and only if each \kh{subset} $S\subseteq O'$ has at least~$|S|$ neighbors in~$B$.

To prove the theorem, assume first that~$N$ is strongly tree-based. Consider any rooting~$D$ of~$N$ with the same vertex set as~$N$ (which exists because any rooting where the root subdivides an edge or arc of~$N$ can be easily modified to \kh{obtain} a rooting in which the root is a vertex of~$N$). \kh{Let $\rho$ denote the sole root of $D$}. We create a rooting~$D'$ \kh{of $N$} from~$D$ as follows. If~$\rho$ is an omnian in~$N$ and $d_N^e(\rho)=1$ 
then subdivide the edge incident to~$\rho$ in~$N$ by a new vertex~$\rho'$, make~$\rho'$ the root of~$D'$, orient the edge between~$\rho$ and~$\rho'$ as~$(\rho',\rho)$, \kh{and retain the directions of the remaining arcs of $D$}. Otherwise, \kh{define}~$D'$ \kh{to be} $D$. By construction of~$D'$, for each omnian~$o\in O$ of~$N$, \kh{we have} that all outgoing arcs of~$o$ in~$D$ are also outgoing arcs of~$o$ in~$N$.
    Since~$N$ is strongly tree-based,~$D'$ is tree-based. Consider a base tree~$T'$ of~$D'$. Then,~$T'$ contains at least one outgoing arc $a_o$ (in~$D'$), \kh{for each~$o\in O$ (the omnians of~$N$)}. In addition,~$T'$ contains exactly one incoming arc of each reticulation in $R$. Hence, the \kh{arcs $a_o$, $o\in O$,} form a matching in~$B$ that covers~$O$. By Hall's marriage theorem \kh{recalled above}, this implies that each $S\subseteq O$ has at least~$|S|$ neighbors in~$B$ and hence \kh{that} $|\delta^+(S)|\geq |S|$.

    \kh{Conversely}, assume that for each $S\subseteq O$ we have that $|\delta^+(S)|\geq |S|$. \kh{Then}, by Hall's marriage theorem,~$B$ has a matching~$M$ \kh{that covers}~$O$. Consider any rooting~$D$ of~$N$. \kh{Let $\rho$ denote the sole root of $D$.} We construct a rooted spanning tree~$T$ of~$D$ with leaf set~$L(D)$ as follows, see Figure~\ref{fig:tree-based}. For each~$o\in O$ consider the reticulation~$r\in R$ that~$o$ is matched to by~$M$. Note that~$(o,r)$ may not be an arc of~$D$ because it could have been subdivided by the root. If~$(o,r)$ is an arc of~$D$, then include it in~$T$. Otherwise, $D$ contains an arc $(\rho,r)$ which we include in~$T$. For each \kh{reticulation in} $ R$ that does not have an incoming arc in~$T$ yet, choose one incoming arc arbitrarily and add it to~$T$. Finally, add all arcs \kh{whose heads are not reticulations}
    also to~$T$.  Clearly,~$T$ is a rooted spanning tree of~$D$. 
    
    It remains to show that~$T$ has leaf set~$L(D)$. Clearly, each leaf of~$D$ is a leaf of~$T$. Suppose~$T$ has a leaf~$v$ \kh{that is not a leaf in} $D$. Then, \kh{in $D$,} $v$ has at least one outgoing arc and \kh{the head of every outgoing arc of $v$ is a reticulation.}
    Hence, in~$N$, $v$ has exactly one incident edge, no incoming arcs and at least one outgoing arc. \kh{Thus,}~$v$ is an omnian of~$N$. This leads to a contradiction since~$T$ contains, for each omnian of~$N$, at least one outgoing arc of~$D$.
\end{proof}

\kh{Intriguingly, characterizing semi-directed networks that are weakly tree-based requires a different approach}
\vm{(see Corollary~\ref{cor:weak-tree-based} \leojuly{at the end of Section~\ref{sec:paths}}}).

\section{Cherry Picking} \label{sec:cherries}

\vm{In this section we consider another concept, called ``cherry picking'', that can be used to characterize certain multi-semi-directed networks.
More specifically, we focus on how cherry picking can be applied to characterize weakly and strongly orchard multi-semi-directed networks,
classes of networks that can be used to model lateral gene transfer (see e.g. \cite{van2022orchard}).}
\kh{To state the main results  (Theorems~\ref{thm:orchard} and \ref{thm:strongorchard}),  we require some  further definitions.} \leojuly{Recall that a \emph{cherry} of a mixed graph is an ordered pair of leaves~$(x,y)$ such that there is a length-$2$ path between~$x$ and~$y$, either consisting of two edges or of two arcs directed towards~$x$ and~$y$ respectively. \emph{Reducing} a cherry $(x,y)$ is defined as deleting~$x$ and suppressing 
any resulting non-root degree-2 vertex.}


\leojuly{To introduce cherry picking for multi-rooted networks, consider a multi-rooted network~$D$ on~$X$.}
A \emph{reticulated cherry}\footnote{\add{Note that this definition is only for multi-rooted networks. Reticulated cherries in multi-semi-directed networks are defined in a slightly different way below.}} of~$D$
is an ordered pair of leaves~$(x,y)$ such that~$y$ is a reticulation leaf and there is a length-$3$ path between~$x$ and~$y$. \emph{Reducing} a reticulated cherry \kh{ $(x,y)$ of $D$} means deleting the arc \kh{from the parent of $x$ to the parent of $y$} and suppressing any resulting non-root degree-2 vertices. Let~$D(x,y)$ be the 
result of reducing a cherry or reticulate cherry~$(x,y)$ in~$D$. 
\kh{Then the leaf set of $D(x,y)$ is either the same as the leaf set of~$D$\add{,} or, if $(x,y)$ is a cherry of~$D$, 
then the leaf set of $D(x,y)$ is $X\setminus\{x\}$.} 
We say that
$D$
is \emph{orchard} if it can be reduced to a 
disjoint union of arcs using
a sequence of cherry reductions and reticulated cherry reductions.

\kh{Furthermore, we say that} a multi-semi-directed network~$N$ 
is \emph{weakly orchard} if~$N$ has a rooting that is orchard and \kh{that}~$N$ is \emph{strongly orchard} if every rooting of~$N$ is orchard. See Figure~\ref{fig:weakstrongorchard} for an example of a semi-directed network~$N$ 
that is weakly orchard but not strongly orchard.

\begin{figure}
    \centering
    \includegraphics{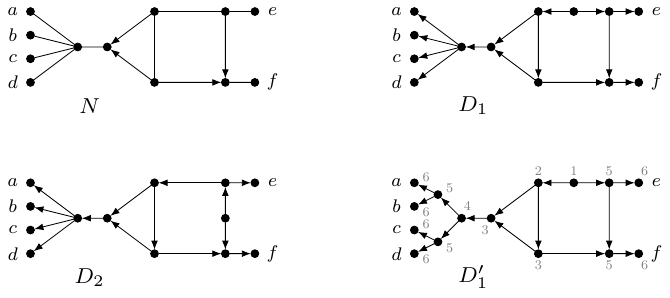}
    \caption{\nh{A semi-directed network~$N$ on $X=\{a,b,c,d,e,f\}$ that is weakly orchard but not strongly orchard, along with a rooting~$D_1$ that is orchard and a  rooting~$D_2$ that is not orchard. 
    \kh{The rooted network} $D'_1$ is a binary resolution of $D_1$ with the gray numbers indicating an HGT-consistent labelling for $D'_1$.\label{fig:weakstrongorchard}}}
\end{figure}

We now define reticulated cherries and their reductions for a multi-semi-directed network \kh{$N$} on~$X$. This definition is slightly different from the definition above for \kh{multi-rooted} networks since~$N$ may have elements of~$X$ \leojuly{that are roots,}
see for example 
\leojuly{root}~$e$ in the $2$-semi-directed network in Figure~\ref{fig:defs}. 
A \emph{reticulated cherry} of $N$
is an ordered pair~$(x,y)$ with~$x,y\in X$ such that~$y$ is a reticulation leaf and there is a length-$2$ or length-$3$ path between~$x$ and~$y$. 
\emph{Reducing} a reticulated cherry \kh{$(x,y)$ of $N$ is defined as} deleting the arc between the neighbours of~$x$ and~$y$ 
or deleting the arc between~$x$ and the neighbour of~$y$ 
and suppressing any resulting non-root degree-2 vertices. See Figures~\ref{fig:orchard1} and~\ref{fig:orchard2} for examples. 
\kh{For  $(x,y)$ a reticulated cherry or a cherry of $N$,} let~$N(x,y)$ denote the resulting mixed graph that is the result of reducing~$(x,y)$ in~$N$.
Note that each connected component of $N(x,y)$ is either an isolated vertex in $X$ 
or a multi-semi-directed network on a subset of $X$.

\kh{Intriguingly, binary rooted networks that are orchard can be characterized in terms of a so-called HGT-consistent labelling \cite{van2022orchard}. As it turns out, this concept can be canonically extended to binary multi-rooted networks. In turn, this provides us with a tool to characterize  multi-semi-directed networks that are weakly orchard. To make this more precise, suppose that \nh{$D=(V,A,\kh{\emptyset})$} is a binary multi-rooted network.
Then we call a map $t:V\rightarrow\mathbb{N}$ a} \emph{HGT-consistent labelling} \kh{for $D$} if
\begin{enumerate}
    \item $t(u)\leq t(v)$ for each arc~$(u,v)\in A$;
    \item $t(u) < t(v)$ for each arc~$(u,v)\in A$ with~$v$ not a reticulation; and
    \item for each reticulation~$v$ it holds that $t(u) = t(v)$ for exactly one parent~$u$ of~$v$.
\end{enumerate}
\nh{See Figure~\ref{fig:weakstrongorchard} for an example of a HGT-consistent labelling of a 1-rooted network.}

\kh{We call a} rooting~$D$ of a multi-semi-directed network~$N$ \kh{on $X$} \emph{nice} 
if every root of~$D$ is either a vertex of~$N$ that is not in~$X$ or subdivides an arc of~$N$ whose tail is in~$X$.
\vm{We will use the following useful fact concerning nice rootings.}  




\begin{observation}
 \label{obs:nicerooting} \kh{Every} multi-semi-directed network has a nice rooting.
\end{observation}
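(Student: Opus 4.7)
My plan is to apply Theorem~\ref{thm:allrootings}, which characterises rootings of~$N$ as independent choices, one per source component of~$N$, of a root location. Each such location is either a vertex of the component, an edge of the component, or an outgoing arc of the component. The observation thus reduces to finding a ``nice'' root location in every source component of~$N$.

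Fix a source component~$C$. If~$C$ contains a vertex~$v\notin X$, I select~$v$, which is nice by the first clause of the definition. Otherwise every vertex of~$C$ lies in~$X$. The key step is to establish that each element of~$X$ has $d_N(x)=1$: tracing through the semi-deorientation of an underlying rooted network~$D$, any $x\in X$ arises either as a leaf of~$D$ (whose sole incoming arc becomes one incident edge of~$N$) or as an outdegree-$1$ root of~$D$ (whose sole outgoing arc becomes either one incident edge or one incident outgoing arc of~$N$, depending on whether its child is a reticulation). Combined with Theorem~\ref{thm:semi-directed-cycle} (so that~$C$ is a tree in $(V,E)$) and the source-component condition (no arcs enter~$C$ from outside), the structure of~$C$ is then severely restricted.

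In the main case, $|C|=1$ and the sole vertex~$x$ has its single incidence an outgoing arc, since an edge is precluded (singleton component has no internal edges) and an incoming arc would violate the source-component property (its tail would have to lie outside~$C$). Selecting this outgoing arc as the root location makes the root nice, since its tail~$x$ lies in~$X$. The boundary case~$|C|=2$, where~$C$ consists of two $X$-vertices joined by a single edge with no other incidences, requires a separate argument: either by extending the interpretation of ``nice'' to accommodate subdivisions of such edges (both of whose endpoints already lie in~$X$), or by observing that this configuration can be reduced to a trivial subnetwork handled on its own. Assembling the per-component choices via Theorem~\ref{thm:allrootings} then produces the required nice rooting.

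The principal obstacle is the interplay between the degree bound on $X$-vertices and the suppression of degree-$2$ vertices in the semi-deorientation, which must be carefully tracked to rule out more exotic source-component shapes; once this is pinned down, the remaining case analysis and the appeal to Theorem~\ref{thm:allrootings} are routine.
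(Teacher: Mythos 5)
Your proposal is correct, but it takes a different route from the paper. The paper's proof is a short local-modification argument: start from an arbitrary rooting of~$N$ and relocate each offending root --- a root subdividing an edge is moved to a vertex incident with that edge, and a root subdividing an arc whose tail is not in~$X$ is moved to that tail. You instead invoke Theorem~\ref{thm:allrootings} and construct the rooting globally, choosing one nice root location per source component (a vertex of the component not in~$X$ if one exists, otherwise the outgoing arc of a singleton component consisting of an $X$-vertex), which relies on the correct and properly justified fact that every element of~$X$ has degree~$1$ in~$N$. Your route uses heavier machinery, but it buys two things: it explicitly covers rootings in which a root is a vertex of~$N$ lying in~$X$ (a degree-$1$ vertex of~$X$ with an outgoing arc, like~$e$ in Figure~\ref{fig:defs}), a case the paper's proof does not address since it only modifies roots that are not vertices of~$N$; and it surfaces the genuine boundary case of a source component that is a single edge joining two elements of~$X$, where under the literal definition no nice root location exists --- the paper's proof silently hits the same problem, since ``any vertex incident with~$e$'' then lies in~$X$. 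Your acknowledgement that this case needs either a relaxed reading of ``nice'' or separate treatment is fair; it is a gap in the statement as formulated rather than in your argument, and apart from it your per-component choices assemble via Theorem~\ref{thm:allrootings} into a nice rooting exactly as claimed.
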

\begin{proof}
\kh{Suppose that $N$ is a multi-semi-directed network.}
Consider an arbitrary rooting~$D$ \kh{of $N$}. If a root~$\rho$ of~$D$ is not a vertex of~$N$, then it subdivides an edge~\add{$e$} or an arc~\add{$a$} of~$N$.
If~$\rho$ subdivides~\kh{$e$}, we can make any \kh{vertex incident} with \kh{$e$} a root instead of~$\rho$. If~$\rho$ subdivides an arc~$a$ whose tail is not in~$X$,
then we can make the tail \kh{of $a$} a root instead of~$\rho$.
\end{proof}

\vm{We now define a {\em binary resolution} of a multi-rooted network $N=\kh{(V,A)}$ which we 
shall need to state our next result.
This is the binary multi-rooted network obtained from $N$ by 
(i) replacing every vertex~$v$ with $d_N^+(v)\geq 3$ and its set of outgoing arcs 
by a rooted binary tree $T$ with root $v$,
so that all arcs in $T$ are directed away from $v$, and the leaf set 
of $T$ consists of those $w\in V$ such that $(v,w)\in A$,
and (ii) replacing every vertex $w$ in $N$ with  $d_N^-(w)\geq 3$ and its set of incoming arcs 
by a rooted binary tree $T$ with root $w$, in
which the directions of all arcs in $T$ are reversed so that they are all directed towards $w$ and 
the leaf set of $T$ consists of those $v\in V$ with $(v,w)\in A$.}
\nh{See Figure~\ref{fig:weakstrongorchard} for an example.}

\leojuly{To simplify the exposition of the remainder of this section, we shall from now on 
assume, without loss of generality, that multi-rooted and multi-semi-directed networks have no isolated vertices.}

\begin{theorem}\label{thm:orchard}
 Given a multi-semi-directed network~$N$ on \kh{$X$}, the following are equivalent.
 \begin{enumerate}
     \item[(1)] $N$ is weakly orchard;
     \item[(2)] there exists a sequence of cherry reductions and reticulated cherry reductions that reduces~$N$ to 
     \kh{a forest in which each tree is either a single edge whose two adjacent vertices  are in~$X$ or a single vertex that is in~$X$;}
     \item[(3)] $N$ has a rooting that has a binary resolution that admits an HGT-consistent labelling;
     \item[(4)] every nice rooting of~$N$ is orchard;
     \item[(5)] $N(x,y)$ is weakly orchard for some cherry or reticulated cherry~$(x,y)$.
 \end{enumerate}
\end{theorem}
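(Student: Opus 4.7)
I would prove the five equivalences via the cyclic chain $(4) \Rightarrow (1) \Rightarrow (5) \Rightarrow (2) \Rightarrow (3) \Rightarrow (4)$, relying throughout on the fact that cherries and reticulated cherries of a rooting $D$ descend to cherries and reticulated cherries of its semi-deorientation, and that $D(x,y)$ projects down to $N(x,y)$ modulo the root-location flexibility given by Theorem~\ref{thm:allrootings}.

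The easier pieces are the start and end of the cycle. Specifically, $(4) \Rightarrow (1)$ is immediate from Observation~\ref{obs:nicerooting}. For $(1) \Rightarrow (5)$, I would take an orchard rooting $D$ of $N$ and, using the first step of an orchard reduction sequence for $D$, produce a cherry or reticulated cherry $(x,y)$ whose reduction yields an orchard rooting of $N(x,y)$; thus each connected component of $N(x,y)$ that carries any arcs or edges is weakly orchard. For $(5) \Rightarrow (2)$, I would induct on the total number of vertices, edges and arcs of $N$: the cherry or reticulated cherry $(x,y)$ guaranteed by $(5)$ reduces $N$ to a weakly orchard $N(x,y)$, which by the already-established $(1) \Rightarrow (5)$ again satisfies $(5)$ (or has already attained the target forest form), so induction yields a reduction sequence to which I prepend $(x,y)$. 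For $(2) \Rightarrow (3)$, I would reverse the given reduction sequence to build a rooting $D$ of $N$ together with a binary resolution $D'$, and assign HGT-consistent labels to the vertices of $D'$ in reduction order (strictly increasing on cherry reductions, matched across the relevant arc for reticulated-cherry reductions), following the template developed in \cite{van2022orchard} for the rooted binary case.

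The central difficulty lies in $(3) \Rightarrow (4)$, where a labelling tied to a single rooting must be leveraged to conclude that every nice rooting is orchard. My plan is to use the HGT-consistent labelling $t$ of the binary resolution $D'$ to identify, among vertices whose descendants in $D'$ are all leaves, a vertex of maximum $t$-label; HGT-consistency then forces a semi-directed cherry or reticulated cherry $(x,y)$ of $N$ at that location (since only one incoming arc of any reticulation can carry the same label as its head, the surrounding local structure in $N$ must form a reducible pair). Reducing $(x,y)$ in $N$ yields a smaller mixed graph on which $t$ restricts to an HGT-consistent labelling of a binary resolution of the induced rooting, so $(3)$ propagates inductively; hence every nice rooting of $N(x,y)$ is orchard. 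Then for any nice rooting $D''$ of $N$, the same pair $(x,y)$ is a cherry or reticulated cherry of $D''$, and $D''(x,y)$ is a nice rooting of $N(x,y)$, which is orchard by induction; prepending $(x,y)$ exhibits $D''$ as orchard. The main obstacle, and the crux of the argument, is ensuring that the cherry or reticulated cherry extracted from $t$ is a genuine semi-directed feature of $N$ rather than an artefact of the specific rooting $D$---only this makes the chosen reduction simultaneously valid for every nice rooting, which is precisely what is needed to bridge the gap between ``some rooting orchard'' and ``every nice rooting orchard''.
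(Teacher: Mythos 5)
Your overall cycle shares several ingredients with the paper's proof: $(4)\Rightarrow(1)$ via Observation~\ref{obs:nicerooting} is identical, and your final descent step (a cherry or reticulated cherry of~$N$ is automatically a cherry or reticulated cherry of \emph{every} nice rooting, and reducing it preserves niceness) is exactly the paper's mechanism in its proof of $(2)\Rightarrow(4)$. The genuine gap is the link $(3)\Rightarrow(4)$, which your chosen chain makes indispensable and which you explicitly leave unresolved. The labelling~$t$ lives on a binary resolution of an \emph{arbitrary} rooting~$D$, which need not be nice, and a cherry or reticulated cherry of~$D$ (let alone of its binary resolution, whose extremal-label vertices may be vertices added during resolution) need not descend to one of~$N$: if the common parent in~$D$ is an outdegree-$2$ root that is suppressed in the semi-deorientation, the corresponding part of~$N$ is a single edge between two elements of~$X$ and carries no cherry at all. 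So the claim that ``HGT-consistency forces a semi-directed cherry or reticulated cherry of~$N$ at that location'' is precisely the unproved crux, and the companion claim that $t$ restricts to an HGT-consistent labelling of a binary resolution of a rooting of~$N(x,y)$ after the reduction is asserted rather than checked. A related (smaller, fixable) issue occurs in your $(1)\Rightarrow(5)$ and in the induction for $(5)\Rightarrow(2)$: the \emph{first} reduction in an orchard sequence for~$D$ can be such a ``phantom'' cherry over a suppressed root, in which case it is not a cherry of~$N$ and you must instead select a reduction acting on a component of~$N$ not yet in target form; the paper does this bookkeeping explicitly in its proof that $(1)$ implies $(2)$.

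The paper avoids your crux entirely by never proving $(3)\Rightarrow(4)$. It bridges ``some rooting is orchard'' to ``every nice rooting is orchard'' through statement $(2)$: a reduction sequence phrased purely in terms of~$N$ is rooting-independent by construction, so $(2)\Rightarrow(4)$ is a short induction --- this is exactly the rooting-independent object you are trying to extract from the labelling, obtained for free. For $(3)$, the paper proves $(1)\Leftrightarrow(3)$ by adding a super-root with arcs to all roots of an orchard rooting, observing the resulting rooted network is orchard, and citing \cite[Theorem~2]{van2022orchard} directly, then deleting the added material; this also replaces your plan to rebuild the HGT-labelling from the reduction sequence ``following the template'' of \cite{van2022orchard}, which would amount to re-proving that theorem in the multi-rooted setting. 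If you reroute your argument as $(1)\Rightarrow(2)\Rightarrow(4)\Rightarrow(1)$ together with $(1)\Leftrightarrow(3)$ and $(1)\Leftrightarrow(5)$, every step you have sketched either goes through or matches the paper, and the hard extraction from~$t$ is no longer needed.
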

\begin{proof}

That (4) implies (1) is trivial, given that~$N$ has at least one nice rooting by Observation~\ref{obs:nicerooting}.

We now show that (1) implies (3). If~$N$ is weakly orchard then it has a rooting~$D$ that is 
orchard. \kh{Since $D$ is a multi-rooted network, we can obtain a rooted network} $D'$ from~$D$ by adding a new root~\kh{$\rho$} with an arc to each root or~$D$. Consider any sequence of cherry and reticulated cherry reductions that transforms~$D$ into a disjoint union \kh{$U$} of arcs \kh{the heads of which are elements in~$X$.} The same sequence transforms~$D'$ into a rooted star tree \kh{$S$, that is, the tree obtained from $U$ as follows. First, add a root $\rho'$ to $U$. Next, add an arc from $\rho'$ to the tail $t_a$ of each arc $a$ in $U$. Finally, suppress all vertices $t_a$. Clearly, $S$}
can easily be transformed into a single arc whose head is in $X$ by cherry reductions. Hence,~$D'$ is also orchard and, by \cite[Theorem~2]{van2022orchard}, $D'$ has a binary resolution~$D'_b$ that admits a HGT-consistent labelling. Deleting all \kh{vertices and arcs of $D'_b$ that where added to $D'$ to obtain a binary resolution of $\rho$ as part of obtaining  $D_b'$} 
then gives a binary resolution of~$D$ with a HGT-consistent labelling, completing the proof that (1) implies (3). The \kh{converse} direction, i.e. that (3) implies (1), can be shown \kh{in a similar manner}.

We now prove that (1) implies (2). Let~$D$ be a rooting of~$N$ \kh{that is orchard}. Consider a sequence \kh{$\sigma$} of cherry and reticulated cherry reductions that transforms~$D$ into a disjoint union of arcs \kh{the heads of which are elements in $X$}.  A cherry $(x,y)$ that can be reduced in~$D$ can also be reduced in~$N$ unless the path between~$x$ and~$y$ in~$D$ contains a root of~$D$ and this root is not a vertex of~$N$ (i.e. it is suppressed \kh{when semi-deorienting $D$ to obtain $N$}). In this case,~$N$ contains a connected component consisting of only the edge~$\{x,y\}$. A reticulated cherry $(x,y)$ that can be reduced in~$D$ can also be reduced in~$N$. The \kh{outcome of reducing $(x,y)$ in $N$ and in $D$ in this case, respectively,} is the same (up to taking the semi-deorientation) unless the path between~$x$ and~$y$ in~$D$ contains a root of~$D$ and this root is not a vertex of~$N$ (i.e. it is suppressed \kh{when semi-deorienting $D$ to obtain $N$}). In this case, the reduced version of~$D$ contains a connected component consisting of a single arc \kh{whose head is in $X$}, while the reduced version of~$N$ contains a connected component consisting of a single vertex \kh{in $X$}. Hence, 
\kh{$\sigma$ reduces}
$N$ to a forest in which each connected component is either  
\kh{a single edge whose two incident vertices  are in $X$ or a single vertex that is in $X$}.

We now show that (2) implies (4). Let~$D$ be a nice rooting of~$N$. We prove by induction on the number \kh{$m$} of vertices of~$D$ that~$D$ is orchard. \kh{Since $|X|\geq 2$, the bases case is $m=2$ and} is trivial. \kh{Assume that the stated implication holds for all multi-semi-directed networks that have a nice rooting with $2\leq l\leq m$  vertices and that $N$ is such that $D$ has $m+1$ vertices.} Let~$(x,y)$ be a cherry or reticulated cherry in~$N$. Then~$(x,y)$ is also a cherry or reticulated cherry of~$D$ \kh{because $D$ is a nice rooting of $N$}. Note that it is possible that~$N$ contains a length-$2$ path between~$x$ and~$y$ while~$D$ contains a length-$3$ path between~$x$ and~$y$. This can happen if \leojuly{$d(x)=d^+(x)=1$.} 
In this case, \kh{$d_{N(x,y)}(x)=0$ and so $x$ is an isolated vertex in~$N(x,y)$} while~$D(x,y)$ contains a connected component 
\kh{in the form of an arc from a root of $D(x,y)$ to $x$.} 
\leojuly{In this case, let~$D',N'$ be $D(x,y),N(x,y)$, respectively, with the connected component containing~$x$ removed. Otherwise, simply let~$D'=D(x,y)$ and~$N'=N(x,y)$. Then~$D'$ is a nice rooting of~$N'$.}
\kh{By} induction, \kh{it follows that}~\leojuly{$D'$ is orchard, from which we can conclude that}~$D$ is orchard. 

By the equivalence of (1) and (2) shown \kh{above}, it follows easily that (5) implies (1).

It remains to prove that (1) implies (5). Suppose \kh{that}~$N$ is weakly orchard and \kh{that}~$(x,y)$ is a cherry or reticulated cherry of~$N$. By Observation~\ref{obs:nicerooting}, $N$ has a nice rooting~$D$. Then~$D$ is orchard by the \kh{equivalence of} (1) and (4). Furthermore,~$(x,y)$ is a cherry or reticulated cherry also in~$D$.
Let~$D'$ be the rooted network obtained from~$D$ by adding a new root~$\rho$ with an arc to each previous root and suppressing any resulting non-root degree-$2$ vertices. Then~$(x,y)$ is also a cherry or reticulated cherry in $D'$ and~$D'$ is also orchard. Hence, by~\cite[Proposition~1]{janssen2021cherry}, $D'(x,y)$ is orchard. Thus,~$D(x,y)$ is orchard. It follows that~$N(x,y)$ has a rooting \kh{that is orchard}. Hence,~$N(x,y)$ is weakly orchard.
\end{proof}

\begin{figure}[htb]
    \centering
    \includegraphics{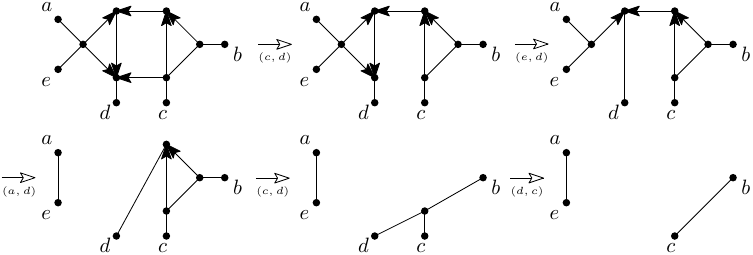}
    \caption{Example of a 
    \kh{weakly orchard multi-semi-directed}
    network \kh{on $\add{X=}\{a\ldots, e\}$} and a sequence of cherry reductions and reticulated cherry reductions. \nh{In each case, the (reticulated) cherry that is reduced is indicated below the arrow that indicates the reduction.}}
    \label{fig:orchard1}
\end{figure}

\begin{figure}[htb]
    \centering
    \includegraphics{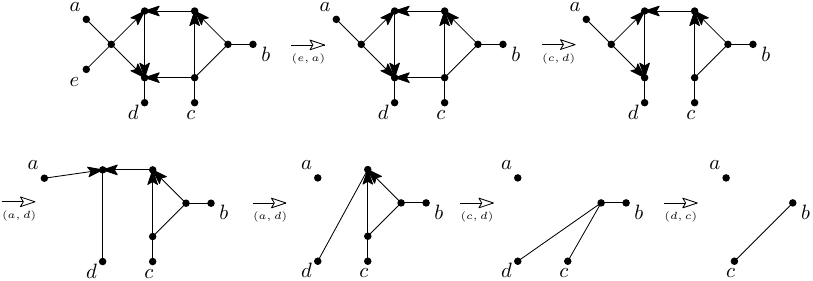}
    \caption{An alternative sequence of cherry reductions and reticulated cherry reductions for the weakly orchard multi-semi-directed
    network on \add{$X=\{a\ldots, e\}$} from Figure~\ref{fig:orchard1}. In each case, the (reticulated) cherry that is reduced is indicated below the arrow that indicates the reduction. \add{Note that in the bottom-left multi-semi-directed network, $a$ has become a root instead of a leaf and that $(a,d)$ is a reticulated cherry in that network since $a \in X$.}}
    \label{fig:orchard2}
\end{figure}

From  \kh{Theorem~\ref{thm:orchard}}, it follows that one can decide in \leo{linear} time whether a given multi-semi-directed network is weakly orchard since, just as for rooted networks, cherries and reticulated cherries can be reduced in arbitrary order.




\vm{To state the second main result of this section 
which concerns semi-directed networks, we require some further definitions.} \kh{Suppose $N$ is a semi-directed network on $X$. A}
\emph{cherry picking sequence} of~$N$ is a sequence $(\kh{s_1},\ldots ,s_k)$ of ordered pairs of elements of~$X$, such that~$\kh{s_1}$ is a cherry or reticulated cherry of~$N_0:=N$ and, for all~$i\in\{1,\ldots ,\kh{k-1}\}$, the pair~$s_{i+1}$ is a cherry or reticulated cherry in \kh{$N_i:=N_{i-1}(s_i)$}
\kh{and $N_k:=N_{k-1}(s_k)$ is a graph in which each connected component is either an isolated vertex in $X$ or an edge such that both incident vertices are contained in $X$.} 
\kh{Note that in case we want to emphasize the order in which the pairs $s_i$ are reduced, we also write \leojuly{$N\circ(s_1,\ldots, s_i)$ for $N_i$.} 
}



\begin{lemma}\label{lem:remainsstronglyorchard}
Let~$N$ be a semi-directed network and \kh{let} $(x,y)$ \kh{be} a cherry or \kh{a} reticulated cherry of~$N$. If~$N$ is strongly orchard then~$N(x,y)$ is strongly orchard.
\end{lemma}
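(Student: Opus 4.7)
The plan is to show that every rooting of $N(x,y)$ is orchard. To this end, let $\tilde D$ be an arbitrary rooting of $N(x,y)$. The strategy is to lift $\tilde D$ to a rooting $D$ of $N$ in which $(x,y)$ is still a cherry or reticulated cherry of the same type as in $N$ and such that $D(x,y) = \tilde D$; then strong orchardness of $N$ yields that $D$ is orchard, from which the orchardness of $D(x,y) = \tilde D$ follows from the known preservation of the orchard property under (reticulated) cherry reduction.

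First I would carry out the lifting. If $(x,y)$ is a cherry of $N$ with common neighbour $u$, then $N(x,y)$ is obtained from $N$ by deleting $x$ and, if that rendered $u$ degree-$2$, by suppressing $u$ so that its two remaining incident elements are merged into a single edge or arc $e$. In $\tilde D$, this element $e$ appears either as a single arc or as a length-$2$ directed path through a root of $\tilde D$; in either case I would re-introduce $u$ at the appropriate position by subdividing $e$ (or by identifying $u$ with the root that subdivided $e$), then re-attach $x$ to $u$ with the same orientation (two edges, or two arcs towards $x$ and $y$) that the cherry has in $N$. If $(x,y)$ is a reticulated cherry of $N$, then $N(x,y)$ was produced by deleting a specific arc $a$ joining the relevant neighbours of $x$ and $y$ and suppressing any resulting degree-$2$ vertices; I would re-insert $a$ with its original orientation and undo the suppressions by subdividing the corresponding elements of $\tilde D$. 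In each sub-case it is routine to check that the resulting mixed graph $D$ is a multi-rooted network whose semi-deorientation is $N$, hence a rooting of $N$, and that reducing $(x,y)$ in $D$ recovers exactly $\tilde D$.

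Once $D$ is in hand, the assumption that $N$ is strongly orchard gives that $D$ is orchard. To conclude that $D(x,y)$ is orchard I would pass from $D$ to the rooted network $D'$ obtained by adding a new root $\rho$ together with arcs from $\rho$ to every root of $D$ and suppressing any resulting non-root degree-$2$ vertices; the pair $(x,y)$ remains a cherry or reticulated cherry in $D'$, and $D'$ is orchard because $D$ is. Proposition~1 of~\cite{janssen2021cherry} then gives that $D'(x,y)$ is orchard, and deleting $\rho$ and its outgoing arcs from $D'(x,y)$ yields $D(x,y) = \tilde D$, which is therefore orchard as well. Since $\tilde D$ was arbitrary, $N(x,y)$ is strongly orchard.

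The principal obstacle is the lifting step: several sub-cases must be handled according to where the roots of $\tilde D$ lie relative to the vertices and edges that were suppressed when $(x,y)$ was reduced in $N$, and according to whether $(x,y)$ is a cherry or a reticulated cherry in $N$. In particular one must verify in each sub-case that the re-insertion does not create a directed cycle, respects the indegree constraint $d^-(v)\in\{0,d(v)-1\}$, and produces the correct type of (reticulated) cherry in $D$ so that the downward reduction coincides with $\tilde D$ on the nose. The remaining implications are then essentially bookkeeping on top of the orchard-reduction result for rooted networks.
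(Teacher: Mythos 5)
Your proposal is correct and follows essentially the same route as the paper: take an arbitrary rooting of $N(x,y)$, lift it back to a rooting $D$ of $N$ by re-inserting the structure removed in the reduction, invoke strong orchardness of $N$ to get that $D$ is orchard, and conclude via preservation of the orchard property under (reticulated) cherry reduction. The extra detour through the single added root $\rho$ and \cite[Proposition~1]{janssen2021cherry} is the same device the paper uses elsewhere (in Theorem~\ref{thm:orchard}) and merely makes explicit the step the paper's proof states as ``it then follows that $D$ is orchard.''
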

\begin{proof}
    Let~$D$ be any rooting of~$N(x,y)$. Then a rooting~$D'$ of~$N$ can be obtained from~$D$ by \kh{the following} small modifications. If~$(x,y)$ is a cherry \kh{and $x$ was deleted in the construction of $N(x,y)$ from $N$}, then~$D'$ is obtained from~$D$ by subdividing the arc incident to~$y$ by a vertex~$w$ and adding leaf~$x$ with an arc $(w,x)$. If~$(x,y)$ is a reticulated cherry, then~$D'$ is obtained from~$D$ by subdividing the arc incident to~$y$ by a vertex~$v$ and the arc incident to~$x$ by a vertex~$u$ and adding an arc $(u,v)$. Since~$N$ is strongly orchard,~$D'$ is orchard. It then follows that~$D$ is orchard. Hence,~$N(x,y)$ is strongly orchard.
\end{proof}

The converse of the \kh{Lemma~\ref{lem:remainsstronglyorchard}} does not hold in general, see Figure~\ref{fig:weak_strong_orchard_sequence}. Motivated by this, we define an \emph{\scr-cherry} (source component reticulated cherry) of a semi-directed network~$N$ as a reticulated cherry $(x,y)$ such that 
$x$ is in a source component of the network.

The next lemma, combined with \kh{Lemma~\ref{lem:remainsstronglyorchard}}, shows that cherries and reticulated cherries that are not \scr-cherries can be reduced in arbitrary order. 

\begin{lemma}\label{lem:remainsnotstronglyorchard}
Let~$N$ be a semi-directed network  and \kh{let} $(x,y)$ \kh{be} a cherry or \kh{a} reticulated cherry of~$N$ that is not a \scr-cherry. If~$N(x,y)$ is strongly orchard then~$N$ is strongly orchard.
\end{lemma}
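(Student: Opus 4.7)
The plan is to prove strong orchardness by fixing an arbitrary rooting $D$ of $N$, exhibiting $(x,y)$ as a cherry or a reticulated cherry of $D$, and reducing it to obtain $D(x,y)$. I will argue that $D(x,y)$ is a rooting of $N(x,y)$; since $N(x,y)$ is strongly orchard by hypothesis, $D(x,y)$ is orchard, and therefore so is $D$.

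The central observation is the following preservation fact: in any rooting $D$ of $N$, every non-reticulation of $N$ (a vertex $v$ with $d^-_N(v)=0$) satisfies $d^-_D(v)=0$. Indeed, if $d^-_D(v)\geq 2$, then, by the semi-deorientation rule, the arcs into $v$ in $D$ remain arcs in $N$ (because $v$ is then a reticulation of $D$), forcing $d^-_N(v)\geq 2$, a contradiction. Combined with (Ci) applied to $D$, we are left with $d^-_D(v)=0$, so every edge of $N$ incident to a non-reticulation of $N$ is oriented away from that vertex in $D$.

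For the cherry case, the common neighbour $p$ of $x$ and $y$ has $d^e_N(p)\geq 2$, which by (Ci) forces $d^-_N(p)=0$. Hence the preservation fact orients $\{x,p\}$ and $\{p,y\}$ as $(p,x)$ and $(p,y)$ in $D$, so $(x,y)$ is a cherry of $D$. For the reticulated cherry case, I would first rule out the length-2 alternative: in a length-2 reticulated cherry $x-r-y$, the reticulation $r$'s unique non-incoming edge/arc is forced to be the edge $\{r,y\}$, so the $x$-$r$ adjacency must be the arc $(x,r)$. This makes $x$ a root-leaf whose isolated edge-component $\{x\}$ is a source component, yielding the excluded scr-cherry. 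Hence $(x,y)$ must be a length-3 reticulated cherry $x-p-r-y$ with edge $\{x,p\}$, arc $(p,r)$, and edge $\{r,y\}$. Since $p$ has both an incident edge and an outgoing arc, (Ci) forces $p$ to be a non-reticulation of $N$; the preservation fact then orients $\{x,p\}$ as $(p,x)$ in $D$, while the reticulation $r$ forces $(r,y)$. This displays $(x,y)$ as a reticulated cherry of $D$.

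Finally, I would verify that reducing $(x,y)$ in $D$ and in $N$ produces compatible results so that $D(x,y)$ is a rooting of $N(x,y)$. Both reductions delete the same vertex (cherry case) or the same arc (reticulated cherry case), and subsequent suppressions of degree-2 vertices happen identically on both sides, the only subtlety being a degree-2 root of $D(x,y)$ with two outgoing arcs, which survives in $D(x,y)$ and is absorbed only when taking the semi-deorientation. The main obstacle I expect is the careful case analysis that rules out length-2 reticulated cherries under the non-scr-cherry hypothesis; it requires tracking the allowed adjacencies at the reticulation $r$ and using (Ci) to pin down $x$'s type. The preservation of non-reticulations, though a short observation, is the engine driving the orientation of the relevant edges and deserves explicit emphasis.
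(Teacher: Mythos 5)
Your argument hinges on the ``preservation fact'' that in every rooting $D$ of $N$ each non-reticulation $v$ of $N$ satisfies $d^-_D(v)=0$, and this fact is false. In a rooting, every vertex except a root acquires an incoming arc (the edges of $N$ on the path from the root towards $v$ are oriented into $v$); for instance, if $N$ is just the quartet tree on four leaves (a semi-directed network with $A=\emptyset$), then in any rooting all non-root vertices are non-reticulations of $N$ with indegree $1$ in $D$. The slip is in how you apply the degree condition to $D$: the requirement $d^-(v)\in\{0,d(v)-1\}$ in the definition of a multi-rooted network is there to force reticulations to have a single outgoing arc; it is not meant to exclude ordinary tree vertices of $D$, which have indegree $1$ and degree at least $3$ (under your reading no multi-rooted network could have any non-root vertex at all). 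Because of this, your conclusion that the two cherry edges are always oriented away from the common neighbour $p$ fails in exactly the case your proof silently skips: when $(x,y)$ is a cherry of $N$ lying in a source component --- which the hypothesis does not exclude, since an \scr-cherry is by definition a \emph{reticulated} cherry --- a rooting $D$ may have its root at $x$, at $y$, or at a vertex subdividing $\{x,p\}$ or $\{p,y\}$. For such $D$ the pair $(x,y)$ is neither a cherry nor a reticulated cherry of $D$, so your reduction step is unavailable and you have no way to conclude that $D$ is orchard. This is a genuine gap, not a presentational issue.

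For comparison, the paper's proof splits into two cases: when $(x,y)$ is a cherry or reticulated cherry of $D$, it reduces exactly as you do (with the non-\scr hypothesis guaranteeing that $D(x,y)$ is a rooting of $N(x,y)$); in the exceptional case above it first modifies $D$ to a rooting $D'$ whose root is the internal vertex of the cherry path and then transfers orchardness back to $D$ --- that re-rooting step is the missing ingredient in your write-up. Your reticulated-cherry analysis, by contrast, reaches a correct conclusion: ruling out length-$2$ reticulated cherries via the non-\scr hypothesis is fine, and $\{x,p\}$ is indeed forced to be oriented as $(p,x)$ in every rooting; but the correct reason is that $x$'s edge-component is not a source component, so the root of $D$ cannot lie at $x$ or on $\{x,p\}$, not the false preservation fact. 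If you repair the cherry case (e.g.\ by the paper's re-rooting argument) and replace the preservation fact by this source-component reasoning, your outline becomes essentially the paper's proof.
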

\begin{proof}
Assume that $N(x,y)$ is strongly orchard. Let~$D$ be any rooting of~$N$. 
\kh{We distinguish between the cases that $(x,y)$ is a cherry or reticulated cherry in $D$ or that this is not the case.}

First suppose that~$(x,y)$ is a cherry or reticulated cherry in~$D$. Then~$D(x,y)$ is a rooting of~$N(x,y)$ since $(x,y)$ is not an \scr-cherry. Since $N(x,y)$ is strongly orchard, $D(x,y)$ must be orchard. Hence,~$D$ is orchard. So~$N$ is strongly orchard. 

Now suppose that~$(x,y)$ is \kh{neither} a cherry \kh{nor a} reticulated cherry of~$D$. This can only happen if~$(x,y)$ is a cherry of~$N$ that is in a source component of~$N$ and the \kh{single} root of~$D$ subdivides an edge on the path between~$x$ and~$y$ in~$N$. Then modify~$D$ to a rooting~$D'$ by making the internal vertex of this path the root. Then we can use the argument from the previous paragraph \kh{to show that
$N$ is strongly orchard.}
\end{proof}

\begin{remark}
\add{It follows from Theorem~\ref{thm:orchard}, Lemmas~\ref{lem:remainsstronglyorchard} and \ref{lem:remainsnotstronglyorchard}, and Figure~\ref{fig:weak_strong_orchard_sequence} that there is a key distinction between the weakly and strongly orchard properties concerning the reduction of (reticulated) cherries in arbitrary order. Specifically, the property of a multi-semi-directed network being weakly orchard is preserved under arbitrarily reducing both cherries and reticulated cherries. In contrast, a semi-directed network being strongly orchard is preserved only under arbitrarily reducing cherries and reticulated cherries that are not scr-cherries.}
\end{remark}

\kh{We call a cherry picking sequence $s=(s_1,\ldots ,s_k)$ of $N_0=N$} \emph{strong} if, for each~$i\in\{1,\ldots ,k\}$, 
\leomay{it holds that if~$N_{i-1}$ has at least one \scr-cherry, then~$s_i$ is an \scr-cherry of~$N_{i-1}$. Note that if~$N_{i-1}$ has no \scr-cherries then, by definition of a cherry picking sequence,~$s_i$ is a cherry or reticulated cherry of~$N_{i-1}$. We now state
the second main result of this section.}


\begin{theorem}\label{thm:strongorchard}
Let~$N$ be a semi-directed network~$N$ on~$X$. Then~$N$ is strongly orchard if and only if, for each strong cherry picking sequence~$s=(s_1,\ldots ,s_k)$ of~$N$ and for each~$i\in\{1,\ldots ,k\}$, it holds that if~$s_i$ is an \scr-cherry of $N_{i-1}=N\circ (s_1,\ldots ,s_{i-1})$ then~$N_{i-1}$ has at least two \scr-cherries \kh{where we put $N_0=N$}.
\end{theorem}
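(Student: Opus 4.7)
The plan is to prove both directions by induction on $|V(N)|$, at each step peeling off one element of the cherry picking sequence and using Lemma~\ref{lem:remainsstronglyorchard} or Lemma~\ref{lem:remainsnotstronglyorchard} to pass to the reduced network. The subtle point is that Lemma~\ref{lem:remainsstronglyorchard} pushes ``strongly orchard'' through \emph{any} reduction, whereas Lemma~\ref{lem:remainsnotstronglyorchard} only lifts ``strongly orchard'' back across \emph{non-}scr reductions, so scr-cherries are precisely where the extra hypothesis of the theorem must come into play.

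For the forward direction, assume $N$ is strongly orchard, fix a strong cherry picking sequence $s=(s_1,\ldots,s_k)$ of $N$, and suppose that $s_i$ is an scr-cherry of $N_{i-1}$. Iteratively applying Lemma~\ref{lem:remainsstronglyorchard} along $s_1,\ldots,s_{i-1}$ shows that $N_{i-1}$ is strongly orchard, so the statement reduces to the auxiliary claim: \emph{if a semi-directed network $M$ is strongly orchard and has an scr-cherry, then it has at least two}. I would prove this claim by contrapositive. Assume $(x,y)$ is the unique scr-cherry of $M$, with $x$ lying in the source component $C$ of $M$; recall from Corollary~\ref{cor:roots} (together with Theorem~\ref{thm:allrootings}) that a semi-directed network has exactly one source component, so $C$ is the only choice. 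Since $x$ is a leaf with $d_M^-(x)=0$ and $d_M^e(x)=1$, Theorem~\ref{thm:allrootings} allows $x$ itself to be chosen as the root in a rooting $D$ of $M$. In $D$ the edge $\{p_x,x\}$ becomes the arc $(x,p_x)$, so $(x,y)$ is no longer a reducible reticulated cherry of $D$, and the task is to argue that \emph{no} alternative sequence of cherry and reticulated cherry reductions of $D$ succeeds in fully reducing $D$. The argument I have in mind inspects the reticulation $r$ adjacent to $y$: any reticulated cherry of $D$ that could disconnect a parent of $r$ corresponds, on the $N$-side, to an scr-cherry of $M$ distinct from $(x,y)$ (because its $x$-component would lie in the source component $C$), contradicting uniqueness; any other reduction of $D$ preserves the blockage and can be handled by an inductive descent on the length of a hypothetical reducing sequence.

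For the reverse direction, assume the hypothesis of the theorem and let $D$ be an arbitrary rooting of $N$; I show $D$ is orchard by induction on $|V(N)|$. If $N$ has no scr-cherry, any cherry or reticulated cherry $(x,y)$ of $N$ (which exists because $N$ is weakly orchard, via Theorem~\ref{thm:orchard} applied to any nice rooting supplied by Observation~\ref{obs:nicerooting}) is non-scr, $N(x,y)$ inherits the hypothesis in a straightforward way, induction gives $N(x,y)$ strongly orchard, and Lemma~\ref{lem:remainsnotstronglyorchard} lifts this to $N$. If $N$ has an scr-cherry, the hypothesis applied to any strong cherry picking sequence starting with an scr-cherry forces $N$ to have at least two scr-cherries $(x,y)$ and $(x',y')$; since $x,x'$ both lie in the single source component $C$ of $N$ and the single root of $D$ lies in $C$, at most one of $x,x'$ can be the root of $D$, so at least one of $(x,y),(x',y')$, say $(x,y)$, is a reticulated cherry of $D$. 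Then $D(x,y)$ is a rooting of $N(x,y)$, and I would verify that $N(x,y)$ inherits the hypothesis: every strong cherry picking sequence $s'$ of $N(x,y)$ lifts, by prepending $(x,y)$, to a strong cherry picking sequence of $N$, and the ``at least two scr-cherries'' condition on $N$ along this lifted sequence translates, shifted by one step, to the same condition for $s'$. Induction then gives $N(x,y)$ strongly orchard, so $D(x,y)$, and hence $D$, is orchard.

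I expect the principal obstacle to be the non-orchard rooting construction underlying the auxiliary claim in the forward direction: everything else amounts to bookkeeping showing that strong cherry picking sequences and scr-cherries behave well under reductions. The blockage step — showing that rooting $M$ at the leaf $x$ of the unique scr-cherry $(x,y)$ really produces a non-orchard rooting — likely requires a careful induction on a hypothetical reducing sequence for $D$, repeatedly using that every candidate reduction that would disconnect the reticulation adjacent to $y$ corresponds to an scr-cherry of $M$ distinct from $(x,y)$, contradicting uniqueness.
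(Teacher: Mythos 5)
Your reverse direction essentially follows the paper's proof: induction, Lemma~\ref{lem:remainsnotstronglyorchard} to lift strong orchardness across a non-scr reduction, and, when scr-cherries are present, the observation that with two of them the single root can destroy at most one, so one survives as a reticulated cherry of the given rooting; your explicit check that the hypothesis is inherited by $N(x,y)$ (prepending $(x,y)$ to a strong sequence) is bookkeeping the paper glosses, and is fine. The divergence is in the forward direction. The paper does \emph{not} prove your auxiliary claim; it works at the failing index $i$ of a strong sequence, argues that $N_{i-1}$ then has no cherries or reticulated cherries at all other than the unique scr-cherry $s_i=(x,y)$, so that the rooting whose root subdivides the edge incident to $x$ has no reducible pair whatsoever and is immediately non-orchard, and finally transfers non-strong-orchardness back to $N$ via (the contrapositive of) Lemma~\ref{lem:remainsstronglyorchard}. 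By dropping the role of strength you must instead prove the much stronger statement that a strongly orchard semi-directed network with an scr-cherry has a second one, for networks that may additionally have arbitrarily many cherries and non-scr reticulated cherries.

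That is where the genuine gap lies: your sketched ``blockage'' argument for the auxiliary claim rests on a false assertion. A reticulated cherry of the rooting $D$ (rooted at, or on the pendant edge of, $x$) that removes an incoming arc of the reticulation $r$ adjacent to $y$ need not correspond to an scr-cherry of $M$: its first coordinate is any leaf attached to a parent of $r$, and that parent may lie outside the source component. For instance, if the source component is just the edge $\{x,p\}$, $r$ has parents $p$ and $t$ with an edge $\{r,y\}$, and $t$ lies in a non-source component and carries a pendant leaf $z$, then $(x,y)$ is the only scr-cherry, yet in the rooting at $x$ the pair $(z,y)$ is a perfectly good reticulated cherry whose reduction deletes $(t,r)$ --- no contradiction with uniqueness arises, so the claimed blockage does not exist and the proposed ``inductive descent on a hypothetical reducing sequence'' has no engine. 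You flag this step yourself as the principal obstacle, and it is not resolved; note moreover that the claim you would need is, given Lemma~\ref{lem:remainsstronglyorchard} and the existence of strong cherry picking sequences, equivalent to the entire forward direction of Theorem~\ref{thm:strongorchard}, so the reduction buys nothing. To repair the argument you would either have to prove that claim by a genuinely new argument, or revert to the paper's route and exploit the strength of the sequence at the failing step to rule out \emph{all} other cherries and reticulated cherries before constructing the non-orchard rooting.
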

\begin{proof}
First suppose that~$N$ is strongly orchard \kh{and assume for contradiction that} there exists a strong cherry picking sequence~$s=(s_1,\ldots ,s_k)$ of~$N$ such that \kh{there exists some $1\leq i\leq k$ such that} $s_i$ is a \scr-cherry of $N_{i-1}=N\circ (s_1,\ldots ,s_{i-1})$ and~$N_{i-1}$ has no other \scr-cherries. Since~$s$ is strong, $N_{i-1}$ has no cherries or reticulated cherries apart from~$s_i$. Since~$s_i=(x,y)$ is an \scr-cherry, there exists a rooting~$D_{i-1}$ of~$N_{i-1}$ where the root subdivides the edge of~$N_{i-1}$ incident to~$x$. Consequently, $D_{i-1}$ has no cherries or reticulated cherries and is therefore not orchard. Hence, $N_{i-1}$ is not strongly orchard and, by Lemma~\ref{lem:remainsstronglyorchard}, $N$ is not strongly orchard, a contradiction.

The other direction of the proof is by induction on~$k=|X|+|R|-2$, with~$R$ the set of reticulations of~$N$. For~$k=0$ the statement is trivially true. Assume~$k\geq 1$ and that for each strong cherry picking sequence~$s=(s_1,\ldots ,s_k)$ of~$N_0=N$ and for each~$i\in\{1,\ldots ,k\}$, it holds that if~$s_i$ is an \scr-cherry of $N_{i-1}=N\circ (s_1,\ldots ,s_{i-1})$ then~$N_{i-1}$ has at least two \scr-cherries.

First suppose that~$N$ has a cherry or a reticulated cherry~$(x,y)$ that is not an \scr-cherry. Since~$N(x,y)$ is strongly orchard by induction, it follows that~$N$ is strongly orchard by Lemma~\ref{lem:remainsnotstronglyorchard}.

Now suppose that~$N$ has no cherries or reticulated cherries that are not \scr-cherries. Then~$N$ has at least two \scr-cherries $(x,y)$ and~$(w,z)$ (possibly, $z=y$). Consider any rooting~\kh{$D$} of~$N$. Then at least one of~$(x,y)$ and~$(w,z)$ is a reticulated cherry in~\kh{$D$}. Assume without loss of generality that~$(x,y)$ is a reticulated cherry of~\kh{$D$}. Since~$N(x,y)$ is strongly orchard by induction, each rooting of~$N (x,y)$ is orchard. In particular, $D (x,y)$ is orchard. Hence,~\kh{$D$} is orchard. Since~\kh{$D$} was arbitrary, it follows that~$N$ is strongly orchard.
\end{proof}

\kh{The example in} Figure~\ref{fig:weak_strong_orchard_sequence} shows why in the characterization in Theorem~\ref{thm:strongorchard} ``for each'' cannot be replaced by ``there exists''. Even though there exists a cherry picking sequence of the required type in the depicted semi-directed network~$N$, $N$ is not strongly orchard. Indeed, $N$ also has a strong cherry picking sequence that is not of the required type.

\begin{figure}[htb]
    \centering
    \includegraphics[width=\textwidth]{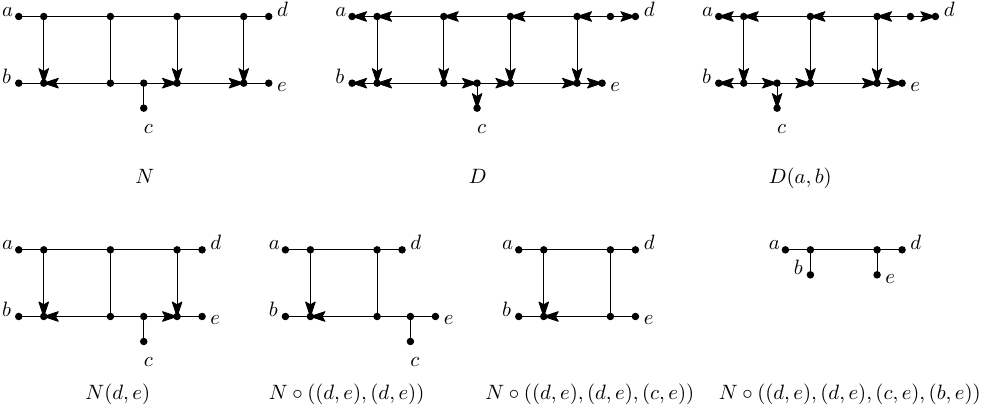}
    \caption{A semi-directed network~$N$ on \kh{$\{a,b,c,d,e\}$} that is weakly orchard but not strongly orchard, along with a rooting~$D$ that is not orchard. This is easy to see since \kh{$D$ has no cherry,} $(a,b)$ is the only reticulated cherry of~$D$, and the depicted network~$D (a,b)$ has no cherries or reticulated cherries. The bottom row of networks shows the first~$4$ networks obtain when reducing~$N$ by the strong cherry picking sequence \kh{$((d,e),(d,e), (c,e), (b,e), (a,b) ,(b,d))$} of~$N$. 
    }
    \label{fig:weak_strong_orchard_sequence}
\end{figure}

\section{Path Partitions} \label{sec:paths}

\vm{We now turn to our final concept for characterizing classes of (multi-)semi-directed networks networks called path partitions\add{, see also \cite{francis2018new,lafond2025path,huber2022forest}}.
Basically speaking, for a multi-rooted network, this is a partition 
of the vertex set of the network whose parts induce a collection of directed paths each of which must end in a leaf.
Multi-rooted networks that enjoy this property are called \emph{forest-based} networks and can be used to model
introgression \cite{scholz2019osf}; they are considered in more depth in \cite{lafond2025path,huber2022forest}.
In this section, \leojuly{we extend} the
theory of path partitions from 
multi-rooted networks to (multi-)semi-directed networks \leojuly{and} shall see that having a 
path partition characterizes weakly tree-based 
semi-directed networks (Corollary~\ref{cor:weak-tree-based}).}


%
\begin{figure}
    \centering
    \includegraphics{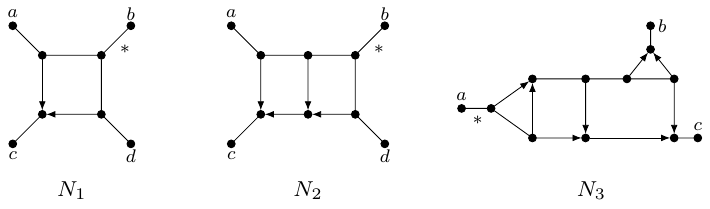}
    \caption{\label{fig:not-weakly-tree-but-weakly-forest}
    \nh{The semi-directed network~$N_1$ on $X = \{a,b,c,d\}$ is weakly tree-child, weakly forest-based and weakly tree-based. \kh{In each case, the rooting is obtained by subdividing the edge labelled $*$ to obtain the root.} The semi-directed network~$N_2$ on $X$ is weakly forest-based and weakly tree-based (since the rooting with the root subdividing the edge labelled $*$ is forest-based and tree-based), but not weakly tree-child. The semi-directed network $N_3$ on $\{ a,b,c\}$ is weakly tree-based (since the rooting with the root subdividing the edge labelled $*$ is tree-based), but not weakly tree-child and not weakly forest-based. }}
\end{figure}

 \kh{We begin with some
 definitions.} Suppose that $G$ is a mixed graph. 
 We call $G$ a \emph{forest} if it does not contain a cycle. If $D$ is a multi-rooted network on $X$ then we call \nh{$D$} \emph{forest-based} 
 if there exists a subgraph $F$ of \nh{$D$} in the form of a forest such that $F$ spans $V(D)$ and has leaf set $X$ and, 
 for each arc~$(u,v)\in A(D)\setminus A(F)$, the vertices $u$ and~$v$ are in different trees of~$F$. In this case, we call $F$ 
 \nh{a} {\em support forest} of \nh{$D$}. Note that an element $T\in F$ might be a single 
 vertex or $T$ might contain a vertex $v$ such that \nh{$d_T(v)=2$}. If \nh{$N$} 
 is a multi-semi-directed network on~$X$, then we say that \nh{$N$ is} \emph{weakly forest-based} 
 if~\nh{$N$} has a rooting \kh{$D$} that is forest-based. 
 \kh{Note that since the leaf sets of $D$ and \nh{$N$} coincide, no leaf of $N$ can be a root in $D$.} 
 If every rooting of \nh{$N$} with leaf set $X$ is forest-based, then we call \nh{$N$} \emph{strongly forest-based}. 
 \vm{See Figure~\ref{fig:forest-based} for some examples to illustrate these definitions.}
 \vm{Note if $N$ is semi-directed, then by \cite[Theorem 1]{huber2022forest} it follows that if $N$ is weakly (resp. strongly) forest-based then it is weakly (resp. strongly) tree-based, but not conversely.} 


\begin{figure}[htb]
    \centering
    \includegraphics[]{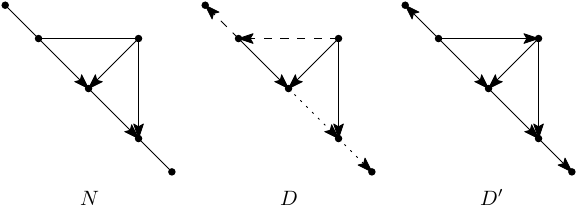}
    \caption{\label{fig:forest-based}A multi-semi-directed network~$N$ that is weakly forest-based but not strongly forest-based, along with a rooting~$D$ \kh{of $N$} that is forest-based (the dashed and dotted \kh{paths are the trees that make up the support} forest) and a rooting~$D'$ \kh{of $N$} that is not forest-based.}
\end{figure}

Now, suppose that $N$ is a multi-semi-directed network on $X$. 
\vm{For $x \in X$, let $P$ be either a semi-directed path $(u_1=u,u_2,\ldots, u_k=x)$, $k\geq 2$, in $N$
joining a vertex $u\in V(N)-X$ to $x$ or the \nhh{trivial} path $(u=x)$. Then we call $u$ the {\em handy vertex} of $P$. In addition,} for
a collection $\cP$ of \nhh{semi-directed} paths \kh{in $N$}, we call a maximal connected subgraph 
containing only edges $\{u,v\}$ of $N$ with $u,v$ in different \nhh{semi-directed} paths of~$\cP$ a \emph{cross component} of~$\cP$. \vm{We now show that a special type of path partition arises from 
weakly forest-based multi-semi-directed networks.}


\begin{figure}[htb]
    \centering
\includegraphics{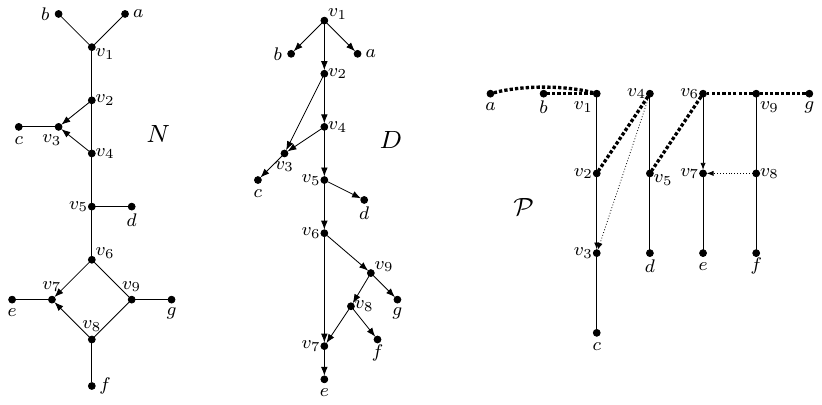}   
    \caption{A semi-directed network $N$ on $X = \{a, \ldots, g\}$ that is weakly forest-based \kh{as $D$ is a} rooting of $N$ that is forest-based. 
    \kh{The trivial paths $(a)$, $(b)$ and $(g)$ along with the four semi-directed paths made up by the thin black edges and arcs form a} collection of \nhh{semi-directed} paths~$\cP$ in~$N$ that satisfies Properties~(P1)-(P3) of Lemma~\ref{lem:pathsystems_conditions}. \kh{The vertices} \nh{$a$, $b$, $v_1$, $v_4$, $v_6$, $v_9$ and $g$ are the handy vertices} of the paths that contain them. The dotted thick black edges make up the three cross components of~$\cP$, while the thin black, dotted arcs are neither in a \nhh{semi-directed} path of $\cP$ nor in a cross component.}
    \label{fig:path-system}
\end{figure}

\begin{lemma}\label{lem:pathsystems_conditions}
\kh{Let $N$ be a multi-semi-directed network $N$ on $X$ that has a rooting $D$ with leaf set $X$ that is forest-based. 
Then there exists a collection $\cP$ of \nhh{semi-directed} paths in $N$ such that}
\begin{enumerate}
    \item[(P1)] each vertex of~$N$ is in exactly one \nhh{semi-directed} path in~$\cP$;
    \item[(P2)] each \nhh{semi-directed} path in~$\cP$ is either a trivial path~$(x)$ with~$x\in X$ or a semi-directed path from \kh{a vertex in $V(N)-X$} to some element in $X$;
    \item[(P3)] each cross component of~$\cP$ contains at most one vertex that is not  \nh{the handy vertex}
    of a \nhh{semi-directed} path in~$\cP$.
\end{enumerate}
Moreover, \kh{if $F$ is a support forest for $D$ then,} for each arc $(u,v) \in A(D) \setminus A(F)$, \kh{we have that} 
\begin{enumerate}
    \item[(P4)] if $\{u,v\}$ is an edge or $(u,v)$ is an arc of~$N$ and~$u,v\in P\in\cP$ then~$u$ and~$v$ appear consecutive on~$P$.
\end{enumerate}
\end{lemma}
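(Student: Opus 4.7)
The plan is to construct $\cP$ by decomposing the support forest $F$ of $D$ into vertex-disjoint directed paths and then translating them to $N$.

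I first analyse $F$: because $F$ spans $V(D)$ with leaf set $X$ and is a forest, each tree $T$ of $F$ has a unique vertex $\rho_T$ with $d^-_F(\rho_T)=0$ and every other vertex has exactly one incoming $F$-arc, so $T$ is a rooted tree whose $F$-arcs point away from $\rho_T$. For any non-singleton $T$, $\rho_T\in V(D)\setminus X$, since any $x\in X$ is a leaf of $D$ with $d^+_D(x)=0$ and hence cannot be the tail of an $F$-arc. I then decompose each $T$ greedily: at every vertex with multiple outgoing $F$-arcs, one arc extends the current path while each remaining outgoing arc starts a new path at its head; singleton trees and branching children that happen to be leaves of $T$ yield trivial paths $(x)$ with $x\in X$. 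Each non-trivial path in $D$ starts at a non-leaf vertex of $T$ (either $\rho_T$ or a non-leaf branching child, both in $V(D)\setminus X$) and ends at a leaf of $T$ (in $X$). To obtain $\cP$, I translate each $D$-path into a semi-directed path in $N$ by deleting any vertex of $V(D)\setminus V(N)$ suppressed in the semi-deorientation and merging the adjacent arcs/edges; if the removed vertex is the starting handy vertex of its $D$-path, its successor becomes the new handy vertex. Properties (P1) and (P2) follow directly from the construction, and (P4) holds vacuously since, by the forest-based condition, non-$F$ arcs of $D$ connect different trees of $F$ and hence different paths of $\cP$.

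The main obstacle is proving (P3). The key observation is that every edge $\{u,v\}$ of $N$ lying in a cross component has at least one handy endpoint. If $\{u,v\}$ arises from suppressing a root $w$ of $D$ with two outgoing non-reticulation $F$-arcs to $u$ and $v$, then after removing $w$ both $u$ and $v$ are handy in $\cP$. Otherwise $\{u,v\}$ corresponds to a single non-reticulation arc $(u,v)$ of $D$, and the head $v$ is handy: if $(u,v)\in F$, then since $u$ and $v$ lie in different paths of $\cP$, $v$ must start a new path at a branching; if $(u,v)\notin F$, then $d^-_F(v)=0$ (as $d^-_D(v)=1$ and this sole incoming arc is absent from $F$), making $v$ the root of its $F$-tree and hence handy.

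To conclude, suppose for contradiction that two distinct non-handy vertices $u_1,u_2$ lie in the same cross component, and choose a simple path $u_1=w_0,w_1,\ldots,w_k=u_2$ inside it. Since $u_1$ is non-handy, $\{u_1,w_1\}$ cannot arise from a suppression, so it corresponds to a single arc whose head must be $w_1$ (else $u_1$ would be handy); thus $w_1$ is handy and has an incoming non-reticulation arc in $D$. Inductively, assume $w_i$ is handy and has an incoming non-reticulation arc in $D$ (coming either from the arc of the previous edge or from the root suppressed to produce it). If $\{w_i,w_{i+1}\}$ arises from suppression then both its endpoints are handy; otherwise its underlying single arc cannot be $(w_{i+1},w_i)$, for then $w_i$ would gain a second incoming non-reticulation arc and become a reticulation, turning the previous edge into an arc of $N$ rather than an edge, a contradiction. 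Hence the arc is $(w_i,w_{i+1})$ and $w_{i+1}$ is handy with incoming non-reticulation arc $(w_i,w_{i+1})$. Applying the induction through $i=k-1$ forces $u_2=w_k$ to be handy, contradicting our assumption.
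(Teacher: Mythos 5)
Your construction is essentially the paper's: prune all but one outgoing arc at each $F$-vertex of outdegree at least two, read the resulting directed paths off in $N$ (dropping suppressed roots, which can only sit at the start of a path), get (P1), (P2) immediately and (P4) vacuously because every path of $\cP$ lives inside a single tree of $F$ while arcs of $A(D)\setminus A(F)$ join different trees. Your key claim — every edge of a cross component has a handy endpoint — is exactly the paper's central claim; the difference is only in how (P3) is then extracted. The paper takes two non-handy vertices in one cross component, looks at the tree-path of the cross component joining them, and derives a contradiction because the two incident cross-edges would have to be oriented towards each other in $D$, forcing a reticulation (hence an arc of $N$) inside the cross component; you instead prove the sharper statement that the head of the underlying $D$-arc of a cross-edge is always handy, and then propagate handiness along the path by induction, the engine in both cases being the same observation that a second incoming $D$-arc would turn a vertex into a reticulation of $N$ and so its incident cross-edge could not be an edge. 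Two points deserve attention. First, your case split ``suppressed root with two outgoing $F$-arcs / otherwise a single arc of $D$'' is not exhaustive: an edge of $N$ may come from a suppressed root $w$ with one or both of the arcs $(w,u),(w,v)$ outside $F$; such an edge does not correspond to a single arc of $D$. The omission is harmless, since your own argument for the $(u,v)\notin F$ case applies verbatim (the endpoint whose sole incoming $D$-arc is missing from $F$ has $F$-indegree $0$, is the start of its path, hence handy), so in every suppressed-root subcase both endpoints are handy and your base case and inductive step survive — but the subcases should be stated. Second, your opening assertion that each tree of $F$ has a unique indegree-$0$ vertex and all other vertices have $F$-indegree exactly $1$ does not follow merely from ``$F$ is a spanning forest with leaf set $X$'': the paper's definition of a support forest does not prevent a reticulation of $D$ from retaining two of its incoming arcs in $F$, in which case pruning only outgoing arcs does not yield paths. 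The paper's own proof makes the same tacit assumption, so this is a shared rather than a new gap, but be aware that your justification as written does not establish it.
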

\begin{proof}
\kh{Suppose that $F$ is  a support forest for $D$.} We can decompose~$F$ into a set \kh{$\mathcal S$ of (directed)} paths by, for each vertex of~$F$ with outdegree greater than~$1$ in~$F$, arbitrarily deleting all but one outgoing arc from~$F$. \kh{Let $\cP$ denote the collection of semi-directed paths in~$N$ that corresponds to~$\mathcal S$. 
Note that the first arc of a path in $\mathcal S$ might be different from the first arc of the corresponding \nhh{semi-directed} path in $\cP$.} We show that Properties~(P1)-(P3) are satisfied by~$\cP$.

Clearly,~$\cP$ satisfies Properties~(P1) and~(P2). \kh{To see that $\cP$ satisfies Property~(P3), we claim first that for every edge $e$ in a cross component of $\cP$ one of the vertices incident with $e$ must be the \nh{handy vertex} of a \nhh{semi-directed} path in $\cP$. To see this, assume for contradiction that $e$ is an edge in a cross component of $\cP$ and none of the vertices incident with $e$, call them $u$ and $v$, is the \nh{handy vertex} of a \nhh{semi-directed} path in $\cP$. Then $u$ and $v$ must be an interior vertex of \kh{the} \nhh{semi-directed} path 
in $\cP$ that contains it, respectively. Since $F$ is a forest spanning $V(D)$, it follows that one of $u$ and $v$ must be a reticulation in $N$. Hence $(u,v)$ or $(v,u)$ must be an arc in $N$; a contradiction as $\{u,v\}$ is an edge in $N$. Hence, the claim holds}


\kh{Assume for} contradiction,  that Property~(P3) does not hold. Then 
\kh{there exists a cross component~$C$ of $\cP$ that contains two or more vertices that are not the \nh{handy vertex} of the \nhh{semi-directed} path in $\cP$ that contains them. Let $u$ and $u'$ denote two such vertices of $C$ and let $v$ and $v'$ be the vertices of $N$ such that $\{u,v\}$ and $\{u', v'\}$ are edges in $C$  and $u$ and $v$ are in different \nhh{semi-directed} paths of $\cP$ and $u'$ and $v'$ are in different \nhh{semi-directed} paths of $\cP$. Note that the \nhh{semi-directed} paths in $\cP$ that contain $u$ and $u'$ might be the same.} Then, by the previous \kh{claim}, $v$ and  $v'$ \kh{must be the \nh{handy vertices}} of \nhh{semi-directed} paths in~$\cP$. \kh{Since, by Theorem~\ref{thm:semi-directed-cycle}, $N$ cannot contain a semi-directed cycle, 
it follows  that
$C$ is} a tree. Therefore, \kh{there exists a path $U$ in $C$ joining $u$ and $u'$. Replacing, if necessary, $v$ with the vertex on $U$ adjacent with $u$ and $v'$ with the vertex on $U$ adjacent with $u'$,
it follows that $U$ has the form $(u, v, \ldots, v', u')$}. 


\kh{Since both} $\{u,v\}$ and $\{u', v'\}$ \kh{are edges of $N$ and so} must be oriented in any rooting of $N$ \kh{and, in combination, Properties~(P1) and (P2) imply that any arc on a path $P$ in $\cP$ is oriented towards the \nh{handy vertex} of $P$}, it follows that $\{u,v\}$ and $\{u', v'\}$ must be oriented towards each other in \kh{the orientation of $C$ induced by $D$. If $v\not=v'$ this is} not possible since \kh{it implies that} some edges in $U$ must have been arcs in~$N$. \kh{If $v=v'$ then $v$ is a reticulation in $N$. Hence, $N$ contains the arcs $(u,v)$ and $(u',v)$ as it is a multi-semi-directed network; a contradiction since, by assumption, $\{u,v\} $ and $\{u',v\}$ are edges in $C$ and therefore in $N$.}

\kh{The remainder is an immediate consequence of the fact that $F$ is a support forest for $D$.}
\end{proof}

\nh{Using \kh{Lemma~\ref{lem:pathsystems_conditions}}, we can now characterize multi-semi-directed networks that are weakly forest-based}
\vm{(note that this is an analogue of \cite[Theorem 1]{huber2022forest}).}

\begin{theorem}\label{thm:weaklyforestbased}
A multi-semi-directed network~$N$ on~$X$ is weakly forest-based if and only if there exists a collection~$\cP$ of \nhh{semi-directed} paths in~$N$ satisfying Properties~(P1)-(P4) of Lemma~\ref{lem:pathsystems_conditions}.
\end{theorem}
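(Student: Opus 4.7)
The forward direction is essentially a restatement of Lemma~\ref{lem:pathsystems_conditions}: given a forest-based rooting~$D$ of~$N$ with support forest~$F$, decomposing~$F$ into arc-disjoint directed paths (by retaining one outgoing arc at every vertex of~$F$ with out-degree greater than one) yields a collection~$\cP$ of semi-directed paths in~$N$ satisfying (P1)-(P4).

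For the backward direction, suppose $\cP$ satisfies (P1)-(P4). The plan is to construct a forest-based rooting $D$ of $N$ together with a support forest $F$ explicitly from $\cP$. First, orient each non-trivial path $P = (u_1, \ldots, u_k) \in \cP$ by orienting each edge $\{u_i, u_{i+1}\}$ as the arc $(u_i, u_{i+1})$; arcs of $P$ already carry this orientation. Second, observe that each cross component $C$ of $\cP$ is a tree, since any cycle of edges in $C$ would yield a cycle of edges in $N$ that can be consistently oriented into a semi-directed cycle, contradicting Theorem~\ref{thm:semi-directed-cycle}. By (P3), $C$ contains at most one non-handy vertex $v$; orient $C$ as a directed tree rooted at $v$ if such $v$ exists, and otherwise at a handy vertex of $C$, chosen in $V(N) \setminus X$ whenever possible. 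Retain the arcs of $N$ in their original direction, and introduce subdivision vertices as roots wherever needed so that the leaf set of the rooting remains $X$. Let $D$ be the resulting network, and let $F$ be the union of the oriented paths and the oriented cross components.

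I would then verify that $D$ is a multi-rooted network whose semi-deorientation is $N$, and that $F$ is a support forest for $D$. Acyclicity of $D$ follows because a directed cycle in $D$ would project to a semi-directed cycle in $N$, forbidden by Theorem~\ref{thm:semi-directed-cycle}. The indegree constraint $d^-_D(v) \in \{0, d_D(v) - 1\}$ is verified by cases: for each reticulation $r$ of $N$, the unique outgoing edge or arc of $r$ in $N$ must lie on $r$'s path in $\cP$ (using $d^+_N(r) + d^e_N(r) = 1$ together with (P4)), so $d^+_D(r) = 1$ after orientation; non-reticulation vertices are handled analogously. Property (P1) guarantees that $F$ spans $V(D)$, and $F$ is a forest because an undirected cycle in $F$ would correspond to a cycle of edges and path arcs of $N$ that can be oriented into a semi-directed cycle---again contradicting Theorem~\ref{thm:semi-directed-cycle}. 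The leaf set of $F$ equals $X$ by (P2) together with the careful choice of cross component roots and the subdivision strategy.

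The main obstacle is verifying the final forest-based condition: for every arc $(u, v) \in A(D) \setminus A(F)$, the vertices $u$ and $v$ must lie in different trees of $F$. Such an arc is an arc of $N$ not used as a forward arc of any path in $\cP$, and by (P4) its endpoints lie on different paths $P, P' \in \cP$. Assuming for contradiction that $P$ and $P'$ lie in the same tree of $F$, the undirected $F$-path from $u$ to $v$ together with the arc $(u, v)$ forms a cycle in $N$. The goal is to derive a contradiction by exhibiting a semi-directed orientation of this cycle. The delicate point is that some $F$-arcs along this path may arise from arcs of $N$ used as forward path arcs, and these must be traversed consistently with the direction of $(u, v)$; if they are not, one instead obtains an alternative support forest by swapping the $F$-arc entering $v$ for the arc $(u, v)$, yielding a support forest that separates $u$ and $v$. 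Verifying that such a swap always produces a valid support forest---preserving acyclicity, the indegree structure, and the leaf-set condition---is the most technically involved step of the proof.
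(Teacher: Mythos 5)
Your forward direction and the first half of your backward construction (orienting each path of~$\cP$ towards its element of~$X$, and orienting each cross component away from the vertex singled out by (P3), which is exactly how one avoids creating new reticulations in~$D$) match the paper. The genuine problem is your choice of support forest. You take $F$ to be the union of the oriented paths \emph{and} the oriented cross components, and then you must show that every arc in $A(D)\setminus A(F)$ joins two different trees of~$F$. Nothing in (P1)--(P4) guarantees this: an arc of~$N$ that lies on no path of~$\cP$ has, by (P4), its tail and head on two \emph{different paths}, but a single cross-component edge can merge those two paths into one tree of your~$F$, and then the forest-based condition fails for your~$F$ even though $N$ is weakly forest-based. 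You recognize this yourself and propose an exchange argument (``swap the $F$-arc entering $v$ for the arc $(u,v)$''), but you do not verify it, and it is not a routine patch: a swap changes which incoming arc of a reticulation lies in the forest, it can create new violating arcs or disturb the leaf-set condition, and you give no argument that repeated swaps terminate. As written, the backward direction is therefore incomplete. (A smaller issue: your justification that your enlarged~$F$ is acyclic via ``a cycle in $F$ could be oriented into a semi-directed cycle'' does not work when path arcs are traversed against their direction; one needs instead the sink condition (Cii), since path arcs contribute at most one incoming arc per reticulation and cross-component members are edges.)

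The difficulty disappears if you take $F$ to be the oriented paths of~$\cP$ \emph{alone}, which is what the paper does. The cross components are only needed to orient the remaining edges of~$N$ consistently (rooted at the unique non-handy vertex provided by (P3), or an arbitrary non-leaf vertex if none exists), which is precisely what ensures that $D$ has no reticulations beyond those of~$N$ and hence is a rooting of~$N$; they should not be placed in the support forest. With $F=\cP$, each path is its own tree, so (P4) immediately gives that any arc of $A(D)\setminus A(F)$ has its endpoints in different trees (if both endpoints lay on one path they would be consecutive on it, and the arc would already belong to~$F$), while (P1) and (P2) give that $F$ spans $V(D)$ with leaf set~$X$. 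I recommend reworking your backward direction along these lines rather than attempting to salvage the exchange argument.
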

\begin{proof}

\kh{By Lemma~\ref{lem:pathsystems_conditions}, it suffices to show that if $N$ has a collection of \nhh{semi-directed} paths that satisfies Properties~(P1)-(P4) then $N$ must be weakly forest-based. So suppose}
    that $\cP$ is a collection of \nhh{semi-directed} paths in $N$ that satisfies \kh{these properties.}
    We \kh{start with  associating a directed graph $D$ to $N$ and then show that $D$ is  in fact a} rooting of~$N$ that is forest-based.

    \kh{To obtain $D$, we employ Property~(P2) and orient, for each non-trivial \nhh{semi-directed} path $P\in\cP$,} all edges \kh{on $P$} towards \kh{the unique element in $X$ it contains}. 
    For each cross component $C$ in $\cP$, we do the following. Let $v_C$ be the unique vertex in~$C$ that is not a \nh{handy vertex} of a \nhh{semi-directed} path in~$\cP$ (Property~(P3)), if it exists. Otherwise, choose an arbitrary \kh{non-leaf} vertex~$v$ in~$C$ \kh{to play the role of $v_C$.}
    \kh{In either case, we} then orient all edges in~$C$ away from~$v_C$. Observe that this is well-defined since $C$ is a tree by Theorem~\ref{thm:semi-directed-cycle}. The obtained directed graph \kh{is $D$}.

    We now show that~$D$ is a rooting of~$N$. \kh{To this end, we need to show that $D$ is a multi-rooted network such that $N$ is a semi-deorientation of $D$. Clearly, $D$ is a multi-rooted network since it cannot} contain a directed cycle \kh{as} otherwise~$N$ would have contained a semi-directed cycle, which is not allowed by Theorem~\ref{thm:semi-directed-cycle}.  \kh{To see that $N$ is a semi-deorientation of $N$, we need to show that the reticulations of $D$ are precisely the reticulations of $N$. Since every reticulation of $N$ is also a reticulation of $D$, it suffices to show that every reticulation of $D$ is also a reticulation of $N$.}
    \kh{Assume for contradiction} that~$D$ contains a reticulation~$r$ that is \emph{not} a reticulation of~$N$. \kh{Then since neither a leaf nor a root of $N$ can be a reticulation in $D$, it follows that} $d^e_N (r) = d_N (r) \geq 3$. Hence, at least one of the edges incident to $r$ in $N$ is in a cross component~$C$ of $\cP$, and so $r$ is \kh{a vertex} in~$C$. \kh{By} Property~(P3), \kh{it follows} that $r$ is a \nh{handy vertex} of some \nhh{semi-directed} path in $\cP$ or $r = v_C$ \kh{holds. Hence,} $C$ is oriented \kh{in $D$} in such a way that $r$ has exactly one incoming arc. \kh{Consequently}, $r$ cannot be a reticulation in $D$, a contradiction. \kh{Thus, $D$ must be a rooting of $N$.}
    

    It remains to show that $D$ is forest-based. \kh{Let $\cP'$ denote the collections of \nhh{semi-directed} paths in $D$ induced by $\cP$. Then Property~(P1) implies that} $\cP'$ is a forest spanning $V(D)$. \kh{Furthermore, Property~(P2) implies that the}  leaf set \kh{of $D$ is} $X$. Lastly, Property~(P4) \kh{implies that}, for each arc $a \in A(D) \setminus A(\cP')$,  \kh{the head of $a$ and the tail of $a$} are in different \nhh{semi-directed} paths of $\cP'$. Hence, \kh{$\cP'$ is a support} forest for $D$. \kh{It follows that} $N$ is weakly forest-based.
\end{proof}

\kh{As it turns out, in multi-semi-directed networks that are forest-based, collections of \nhh{semi-directed} paths that satisfy Properties~(P1)-(P3) in Lemma~\ref{lem:pathsystems_conditions} also turn out to hold the key for
our characterization of semi-directed networks that are weakly tree-based.}


\begin{corollary}\label{cor:weak-tree-based}
A semi-directed network~$N$ on~$X$ is weakly tree-based if and only if there exists a collection~$\cP$ of \nhh{semi-directed} paths in~$N$ satisfying Properties~(P1)-(P3) of Lemma~\ref{lem:pathsystems_conditions}.
\end{corollary}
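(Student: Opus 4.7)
My plan is to mirror the proof of Theorem~\ref{thm:weaklyforestbased}, dropping the role of Property~(P4), which was used there only to obtain a spanning \emph{forest} rather than a spanning \emph{tree}.

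For the forward implication, I will assume $N$ has a rooting $D$ that is tree-based and let $T$ be a spanning tree of $D$ with $L(T)=X$. Decomposing $T$ into a family of directed paths --- by deleting, at each vertex of $T$ of out-degree at least two in $T$, all but one outgoing arc --- and taking the corresponding semi-directed paths in $N$ yields a collection $\cP$ for which Properties~(P1) and~(P2) hold by construction. Property~(P3) then follows from the exact same reasoning as in the proof of Lemma~\ref{lem:pathsystems_conditions}, since that argument uses only that the chosen spanning structure has in-degree at most one at every vertex, and this still holds when this structure is a spanning tree.

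For the reverse implication, given $\cP$ satisfying (P1)--(P3), I will construct a rooting $D$ of $N$ exactly as in the proof of Theorem~\ref{thm:weaklyforestbased}: orient each non-trivial path in $\cP$ towards its leaf endpoint, and orient each cross component outward from its designated vertex $v_C$. The verification that $D$ is a valid rooting of $N$ carries over without change, because Property~(P4) was needed there only to obtain a support forest, not to ensure the rooting is well-defined. Since $N$ is semi-directed, Corollary~\ref{cor:roots} guarantees that $D$ has a unique root.

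To conclude, I will exhibit a rooted spanning tree $T$ of $D$ with $L(T)=X$ by selecting, for each non-root vertex of $D$, exactly one incoming arc, and letting $T$ consist of these arcs. For a non-handy vertex $v$ of $\cP$, I will take the arc from its path-predecessor, which is present in $D$ since the path is oriented towards its leaf; for a handy vertex $v$ other than the root of $D$, I will take an arbitrary incoming arc of $v$ in $D$, which exists since the root of $D$ is the unique vertex of in-degree zero. Because $D$ is a directed acyclic graph with a single source, such a selection always yields a rooted spanning tree. The main step will then be to check that $L(T)=X$: every non-leaf vertex of $D$ is either interior to, or the handy vertex of, some non-trivial path in $\cP$, so it is selected as parent by its path-successor in $T$ and thus has out-degree at least one; conversely every $x\in X$ has out-degree zero in $D$ and hence in $T$. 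I expect this final leaf-set computation to be the only delicate point, as it is where Properties~(P1) and~(P2) jointly pin down the tree-based structure in the absence of~(P4).
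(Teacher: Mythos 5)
Your proposal is correct, and it follows the paper's overall skeleton (mirror Theorem~\ref{thm:weaklyforestbased}, with Property~(P4) dropped), but it is more self-contained than the paper's proof at both ends. For the forward direction the paper simply says the implication follows ``by the definition of weakly tree-based and Lemma~\ref{lem:pathsystems_conditions}'', even though that lemma is stated for a \emph{forest-based} rooting; you instead rerun the lemma's decomposition argument on a spanning tree of a tree-based rooting and observe, correctly, that the proof of (P1)--(P3) never uses the defining cross-arc condition of forest-basedness (that condition is only needed for (P4)). This is exactly the right diagnosis and arguably tightens a point the paper glosses over. For the reverse direction the paper builds $D$ in the same way you do and then invokes \cite[Theorem~2.1]{francis2018new} to conclude that a rooted network whose vertex set is partitioned into directed paths ending in leaves is tree-based; you replace that citation by an explicit construction of the spanning tree (take the path-predecessor arc at every non-handy vertex and an arbitrary incoming arc at every other non-root vertex), together with the in-degree/acyclicity argument showing this yields an arborescence and the check that its leaf set is $X$. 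Your leaf-set verification is sound: by (P1)--(P2) every vertex of $V(N)\setminus X$ is a non-terminal vertex of a non-trivial path, so its path-successor selects the arc out of it, while every $x\in X$ remains a leaf of $D$ under the paper's orientation rule (trivial-path vertices are handy and are never chosen as $v_C$), hence has out-degree zero in $T$. So the trade-off is: the paper's route is shorter by leaning on the lemma and on the known path-partition characterization of tree-based rooted networks, while yours proves the relevant special case of that characterization from scratch; both are valid, and your version makes explicit why (P4) is dispensable here.
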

\begin{proof}
    \kh{By the definition of weakly tree-based and Lemma~\ref{lem:pathsystems_conditions}, it suffices to show that if $N$ contains a collection of \nhh{semi-directed} paths that satisfy Properties~(P1)-(P3) of Lemma~\ref{lem:pathsystems_conditions} then
    $N$ is weakly tree-based.} Suppose that $\cP$ is a collection of \nhh{semi-directed} paths in $N$ \kh{that satisfies these properties}
    We create a rooting~$D$ of~$N$ as in the proof of Theorem~\ref{thm:weaklyforestbased}. Note that~$D$ now has a single root \kh{because $N$ is semi-directed}. To see that $D$ is tree-based, observe that $\cP$ again spans \kh{$V(D)$} (by Property~(P1)) and has leaf set $X$ (by Property~(P2)). By \cite[Theorem~2.1]{francis2018new}, this means that $D$ is tree-based. \kh{Hence,} $N$ is weakly tree-based.
\end{proof}

\section{Discussion} \label{sec:discussion}

In this paper, we have introduced new explicit mathematical characterizations 
of multi-semi-directed networks, overcoming the need to implicitly define 
such networks through their rooted counterparts. By extending existing concepts 
for rooted networks - such as cherry picking sequences, omnians and 
path partitions - we have been able to explicitly characterize when a multi-semi-directed 
network has a rooting that is within some commonly studied classes of rooted 
networks. In  particular, with the growing interest in semi-directed networks (see e.g. \cite{banos2019identifying,gross2018distinguishing,gross2021distinguishing,linz2023exploring,xu2023identifiability,holtgrefe2025squirrel}), our characterizations have the potential to 
make the mathematical analysis of the algebraic models associated with semi-directed networks more tractable 
(see e.g. \cite{maxfield2025dissimilarity,englander2025identifiability,allman2025level1identifiability}).

Although we have presented some characterizations for when a multi-semi-directed
network is contained within a certain class, there remain some
open questions in this direction. For example, 
generalizing our characterization of strongly orchard semi-directed
networks (Theorem~\ref{thm:strongorchard}) to multi-semi-directed networks, 
deciding whether or not strongly forest-based networks can be characterized with path partitions, and
seeing if weakly orchard networks can be characterized in terms of an HGT-consistent 
labelling directly applied to the multi-semi-directed network are all interesting questions.
In addition, similar questions could be investigated for other well-known network classes, including
but not limited to \emph{proper-forest-based}, \emph{normal}, \emph{reticulation visible} 
and \emph{tree-sibling} networks \cite{kong2022classes}.

Finally, our results open up a number of interesting algorithmic questions. We 
have already sketched an efficient algorithm to check if a mixed graph is 
a (multi-)semi-directed network (see the end of Section~\ref{sec:rootings}). 
Developing efficient algorithms to check whether
a given (multi-)semi-directed network lies within a fixed class would be a logical next step. For some classes efficient  
algorithms follow directly from our results (see e.g. Theorem~\ref{thm:stronglyTC}). 
However, for other classes the existence of efficient algorithms is not immediately obvious. 
For example, it would be interesting to determine whether or not there exists a linear 
time algorithm to check if a semi-directed network is strongly orchard. This 
question can be answered affirmatively for rooted networks, but for semi-directed 
networks a naive algorithm takes quadratic time (by checking every possible rooting).




\section*{Acknowledgements}
This paper is based on research that was partly carried out while the authors were in residence at the Institute for Computational and Experimental Research in Mathematics \add{(ICERM)} in Providence, RI, during the semester program on ``Theory, Methods, and Applications of Quantitative Phylogenomics''. We thank the organizers for organizing this semester program. We thank C\'ecile An\'e for posing an open question during this semester program regarding weakly and strongly tree-child semi-directed networks, which motivated part of this research. \add{We thank the reviewers for their valuable comments and suggestions.} 

\bibliographystyle{plain}
\bibliography{bib}

\begin{thebibliography}{10}

\bibitem{allman2025level1identifiability}
Elizabeth~S Allman, C{\'e}cile An{\'e}, Hector Ba{\~n}os, and John~A Rhodes.
\newblock Beyond level-1: Identifiability of a class of galled tree-child
  networks.
\newblock {\em arXiv preprint arXiv:2504.21116}, 2025.

\bibitem{allman2024nanuq+}
Elizabeth~S Allman, Hector Ba{\~n}os, John~A Rhodes, and Kristina Wicke.
\newblock {NANUQ+: A divide-and-conquer approach to network estimation}.
\newblock {\em Algorithms for Molecular Biology}, 2025.
\newblock in press.

\bibitem{banos2019identifying}
Hector Ba{\~n}os.
\newblock Identifying species network features from gene tree quartets under
  the coalescent model.
\newblock {\em Bulletin of Mathematical Biology}, 81:494--534, 2019.

\bibitem{bernardini2023constructing}
Giulia Bernardini, Leo van Iersel, Esther Julien, and Leen Stougie.
\newblock Constructing phylogenetic networks via cherry picking and machine
  learning.
\newblock {\em Algorithms for Molecular Biology}, 18(1):13, 2023.

\bibitem{bernardini2024inferring}
Giulia Bernardini, Leo van Iersel, Esther Julien, and Leen Stougie.
\newblock Inferring phylogenetic networks from multifurcating trees via cherry
  picking and machine learning.
\newblock {\em Molecular phylogenetics and evolution}, 199:108137, 2024.

\bibitem{cardona2009comparison}
Gabriel Cardona, Francesc Rosselló, and Gabriel Valiente.
\newblock Comparison of tree-child phylogenetic networks.
\newblock {\em IEEE/ACM Transactions on Computational Biology and
  Bioinformatics}, 6(4):552--569, 2009.

\bibitem{dempsey2024wild}
Jordan Dempsey, Leo van Iersel, Mark Jones, Yukihiro Murakami, and Norbert Zeh.
\newblock A wild sheep chase through an orchard.
\newblock {\em arXiv preprint arXiv:2408.10769}, 2024.

\bibitem{dackcker2024characterising}
Janosch D{\"o}cker and Simone Linz.
\newblock Characterising rooted and unrooted tree-child networks.
\newblock {\em arXiv preprint arXiv:2408.17105}, 2024.

\bibitem{docker2025existence}
Janosch D{\"o}cker and Simone Linz.
\newblock On the existence of funneled orientations for classes of rooted
  phylogenetic networks.
\newblock {\em Theoretical Computer Science}, 1023:114908, 2025.

\bibitem{englander2025identifiability}
Aviva~K Englander, Martin Frohn, Elizabeth Gross, Niels Holtgrefe, Leo van
  Iersel, Mark Jones, and Seth Sullivant.
\newblock Identifiability of phylogenetic level-2 networks under the
  jukes-cantor model.
\newblock {\em bioRxiv}, pages 2025--04, 2025.

\bibitem{erdos2019class}
Peter~L. Erdős, Charles Semple, and Mike Steel.
\newblock A class of phylogenetic networks reconstructable from ancestral
  profiles.
\newblock {\em Mathematical Biosciences}, 313:33--40, 2019.

\bibitem{fischer2020tree}
Mareike Fischer and Andrew Francis.
\newblock How tree-based is my network? proximity measures for unrooted
  phylogenetic networks.
\newblock {\em Discrete Applied Mathematics}, 283:98--114, 2020.

\bibitem{francis2018new}
Andrew Francis, Charles Semple, and Mike Steel.
\newblock New characterisations of tree-based networks and proximity measures.
\newblock {\em Advances in Applied Mathematics}, 93:93--107, 2018.

\bibitem{francis2015which}
Andrew~R. Francis and Mike Steel.
\newblock Which phylogenetic networks are merely trees with additional arcs?
\newblock {\em Systematic Biology}, 64(5):768--777, 2015.

\bibitem{gambette2012quartets}
Philippe Gambette, Vincent Berry, and Christophe Paul.
\newblock Quartets and unrooted phylogenetic networks.
\newblock {\em Journal of Bioinformatics and Computational Biology},
  10(04):1250004, 2012.

\bibitem{garvardt2023finding}
Jaroslav Garvardt, Malte Renken, Jannik Schestag, and Mathias Weller.
\newblock Finding degree-constrained acyclic orientations.
\newblock In {\em 18th International Symposium on Parameterized and Exact
  Computation (IPEC 2023)}, pages 19--1. Schloss Dagstuhl--Leibniz-Zentrum
  f{\"u}r Informatik, 2023.

\bibitem{gross2018distinguishing}
Elizabeth Gross and Colby Long.
\newblock Distinguishing phylogenetic networks.
\newblock {\em SIAM Journal on Applied Algebra and Geometry}, 2(1):72--93,
  2018.

\bibitem{gross2021distinguishing}
Elizabeth Gross, Leo van Iersel, Remie Janssen, Mark Jones, Colby Long, and
  Yukihiro Murakami.
\newblock Distinguishing level-1 phylogenetic networks on the basis of data
  generated by markov processes.
\newblock {\em Journal of Mathematical Biology}, 83:1--24, 2021.

\bibitem{hall}
Philip Hall.
\newblock On representatives of subsets.
\newblock {\em Classic Papers in Combinatorics}, pages 58--62, 1987.

\bibitem{holtgrefe2025squirrel}
Niels Holtgrefe, Katharina~T Huber, Leo van Iersel, Mark Jones, Samuel Martin,
  and Vincent Moulton.
\newblock Squirrel: Reconstructing semi-directed phylogenetic level-1 networks
  from four-leaved networks or sequence alignments.
\newblock {\em Molecular Biology and Evolution}, 42(4):msaf067, 2025.

\bibitem{huber2022forest}
Katharina~T Huber, Vincent Moulton, and Guillaume~E Scholz.
\newblock Forest-based networks.
\newblock {\em Bulletin of Mathematical Biology}, 84(10):119, 2022.

\bibitem{HUBER2024103480}
Katharina~T Huber, Leo {van Iersel}, Remie Janssen, Mark Jones, Vincent
  Moulton, Yukihiro Murakami, and Charles Semple.
\newblock Orienting undirected phylogenetic networks.
\newblock {\em Journal of Computer and System Sciences}, 140:103480, 2024.

\bibitem{huson2010phylogenetic}
Daniel~H Huson, Regula Rupp, and Celine Scornavacca.
\newblock {\em Phylogenetic networks: concepts, algorithms and applications}.
\newblock Cambridge University Press, 2010.

\bibitem{janssen2021cherry}
Remie Janssen and Yukihiro Murakami.
\newblock On cherry-picking and network containment.
\newblock {\em Theoretical Computer Science}, 856:121--150, 2021.

\bibitem{jetten2016nonbinary}
Laura Jetten and Leo van Iersel.
\newblock Nonbinary tree-based phylogenetic networks.
\newblock {\em IEEE/ACM Transactions on Computational Biology and
  Bioinformatics}, 15(1):205--217, 2016.

\bibitem{kinene2016rooting}
Tonny Kinene, James Wainaina, Solomon Maina, and Laura~M Boykin.
\newblock Rooting trees, methods for.
\newblock {\em Encyclopedia of Evolutionary Biology}, page 489, 2016.

\bibitem{kong2022classes}
Sungsik Kong, Joan~Carles Pons, Laura Kubatko, and Kristina Wicke.
\newblock Classes of explicit phylogenetic networks and their biological and
  mathematical significance.
\newblock {\em Journal of Mathematical Biology}, 84(6):47, 2022.

\bibitem{lafond2025path}
Manuel Lafond and Vincent Moulton.
\newblock Path partitions of phylogenetic networks.
\newblock {\em Theoretical Computer Science}, 1024:114907, 2025.

\bibitem{linz2023exploring}
Simone Linz and Kristina Wicke.
\newblock Exploring spaces of semi-directed level-1 networks.
\newblock {\em Journal of Mathematical Biology}, 87(5):70, 2023.

\bibitem{maeda2023orienting}
Shunsuke Maeda, Yusuke Kaneko, Hideaki Muramatsu, Yukihiro Murakami, and Momoko
  Hayamizu.
\newblock Orienting undirected phylogenetic networks to tree-child network.
\newblock {\em arXiv preprint arXiv:2305.10162}, 2023.

\bibitem{maxfield2025dissimilarity}
Michael Maxfield, Jingcheng Xu, and C{\'e}cile An{\'e}.
\newblock A dissimilarity measure for semidirected networks.
\newblock {\em IEEE Transactions on Computational Biology and Bioinformatics},
  22(2):684--696, 2025.

\bibitem{scholz2019osf}
Guillaume~E Scholz, Andrei-Alin Popescu, Martin~I Taylor, Vincent Moulton, and
  Katharina~T Huber.
\newblock Osf-builder: a new tool for constructing and representing
  evolutionary histories involving introgression.
\newblock {\em Systematic Biology}, 68(5):717--729, 2019.

\bibitem{solis2016inferring}
Claudia Solís-Lemus and Cécile Ané.
\newblock Inferring phylogenetic networks with maximum pseudolikelihood under
  incomplete lineage sorting.
\newblock {\em PLOS Genetics}, 12(3):e1005896, 2016.

\bibitem{soraggi2019general}
Samuele Soraggi and Carsten Wiuf.
\newblock General theory for stochastic admixture graphs and f-statistics.
\newblock {\em Theoretical Population Biology}, 125:56--66, 2019.

\bibitem{urata2024orientability}
Tsuyoshi Urata, Manato Yokoyama, Haruki Miyaji, and Momoko Hayamizu.
\newblock Orientability of undirected phylogenetic networks to a desired class:
  Practical algorithms and application to tree-child orientation.
\newblock {\em arXiv preprint arXiv:2407.09776}, 2024.

\bibitem{van2022orchard}
Leo van Iersel, Remie Janssen, Mark Jones, and Yukihiro Murakami.
\newblock Orchard networks are trees with additional horizontal arcs.
\newblock {\em Bulletin of Mathematical Biology}, 84(8):76, 2022.

\bibitem{van2018unrooted}
Leo van Iersel, Steven Kelk, Georgios Stamoulis, Leen Stougie, and Olivier
  Boes.
\newblock On unrooted and root-uncertain variants of several well-known
  phylogenetic network problems.
\newblock {\em Algorithmica}, 80:2993--3022, 2018.

\bibitem{xu2023identifiability}
Jingcheng Xu and C{\'e}cile An{\'e}.
\newblock Identifiability of local and global features of phylogenetic networks
  from average distances.
\newblock {\em Journal of Mathematical Biology}, 86(1):12, 2023.

\end{thebibliography}

\section*{Statements \& Declarations}

\subsection*{Funding}
This material is based upon work supported by the National Science Foundation (NSF) under Grant No.~DMS-1929284 while the authors were in residence at the Institute for Computational and Experimental Research in Mathematics \add{(ICERM)} in Providence, RI, during the semester program on
``Theory, Methods, and Applications of Quantitative Phylogenomics''. NH, LvI \& MJ were supported by the Dutch Research Council (NWO) under Grant No.~OCENW.M.21.306, and LvI \& MJ also under Grant No.~OCENW.KLEIN.125.

\subsection*{Competing Interests}

The authors have no relevant financial or non-financial interests to disclose.

\subsection*{Author Contributions}

All authors contributed equally.

\subsection*{Data Availability}

No data was used.



\end{document}